\theoremstyle{plain}
\newtheorem{theorem}{Theorem}[section]
\newtheorem{lemma}[theorem]{Lemma}
\newtheorem{proposition}[theorem]{Proposition}
\newtheorem{corollary}[theorem]{Corollary}
\newtheorem{fact}[theorem]{Fact}
\theoremstyle{definition}
\newtheorem{definition}[theorem]{Definition}
\newtheorem{example}[theorem]{Example}
\newtheorem{openproblem}[theorem]{Open Problem}
\theoremstyle{remark}
\newtheorem{remark}[theorem]{Remark}
\newtheorem*{remark*}{Remark}
\newlength{\edgelength}
\newcommand{\trans}[4]{%
  \begin{tikzpicture}[auto, shorten >=1pt, >=latex, baseline=(l.base), inner sep=0pt, outer xsep=0.3333em]
    \node (l) {\ensuremath{#1}};%
    \setlength{\edgelength}{\widthof{\scriptsize\ensuremath{#2/#3}}+0.5cm}%
    \node[base right=\edgelength of l] (r) {\ensuremath{#4}};%
    \path[->] (l.mid east) edge node[inner sep=0pt] {\scriptsize\ensuremath{#2/#3}} (r.mid west);%
  \end{tikzpicture}%
}
\newcommand{\problem}[3][]{%
  \par\vspace{0.125cm plus 0.1cm minus 0.05cm}\begin{tabularx}{\textwidth-2\parindent}{lX}%
    \if\relax\detokenize{#1}\relax%
    \else%
    \textnormal{\textbf{Constant:}}&#1\\%
    \fi%
    \textnormal{\textbf{Input:}}&#2\\%
    \textnormal{\textbf{Question:}}&#3\\%
  \end{tabularx}\vspace{0.125cm plus 0.1cm minus 0.05cm}\par%
}
\newcommand{\changefont}[3]{\fontfamily{#1}\fontseries{#2}\fontshape{#3}\selectfont}
\newcommand*{\DecProblem}[1]{{\changefont{cmtt}{m}{sc}#1}}
\newcommand*{\SAut}{$\mathscr{S}$\kern-0.4ex-\allowbreak{}au\-to\-ma\-ton\xspace}
\newcommand*{\SAuta}{$\mathscr{S}$\kern-0.4ex-\allowbreak{}au\-to\-ma\-ta\xspace}
\newcommand*{\SIAut}{$\inverse{\mathscr{S}}$\kern-0.4ex-\allowbreak{}au\-to\-ma\-ton\xspace}
\newcommand*{\SIAuta}{$\inverse{\mathscr{S}}$\kern-0.4ex-\allowbreak{}au\-to\-ma\-ta\xspace}
\newcommand*{\GAut}{$\mathscr{G}$\kern-0.2ex-\allowbreak{}au\-to\-ma\-ton\xspace}
\newcommand*{\GAuta}{$\mathscr{G}$\kern-0.2ex-\allowbreak{}au\-to\-ma\-ta\xspace}
\newcommand*{\id}{\operatorname{id}}
\newcommand{\idGrp}{\mathbb{1}}
\newcommand*{\ComplexityClass}[1]{\textsc{#1}\xspace}
\newcommand*{\PSpace}{\ComplexityClass{PSpace}}
\def\gnewcommand{\g@star@or@long\new@command}
\def\g@star@or@long#1{%
  \@ifstar{\let\l@ngrel@x\global#1}{\def\l@ngrel@x{\long\global}#1}}
\author{Daniele D'Angeli}
\affil{Dipartimento di Ingegneria\\
  Università degli Studi Niccolò Cusano\\
  Via Don Carlo Gnocchi, 3\\
  00166 Roma, Italy}
\author{Emanuele~Rodaro}
\affil{Department of Mathematics\\
  Politecnico di Milano\\
  Piazza Leonardo da Vinci, 32\\
  20133 Milano, Italy}
\author{Jan~Philipp~Wächter}
\affil{Department of Mathematics\\
  University of Manchester\\
  Oxford Road\\
  Manchester M13 9PL, UK}
\title{The Freeness Problem for\\Automaton Semigroups}
\begin{document}
  \maketitle
  
  \begin{abstract}
    We show that the freeness problems for automaton semigroups and for automaton monoids are undecidable and, thereby, solve an open problem listed by Grigorchuk, Nekrashevych and Sush\-chansk\u{\i}i.
    We achieve this using a new technique to encode Post's Correspondence Problem into automaton semigroups and monoids and our result even holds if we restrict the alphabet of the input automata to a constant size.
    The encoding allows us to precisely control the relations in the generated semigroup/monoid and the construction is quite versatile. In fact, we obtain further undecidability results on various semigroup notions (left cancellativity, equidivisibility and extending homomorphisms).
    Our construction can also be adapted to show that the free presentation problem for automaton monoids is undecidable (and yields a weaker statement in the semigroup case).
    \\
    \textit{Note.} This paper is the full, extended journal version of these results presented at MFCS 2024.
    \\
    \textbf{Keywords.} Automaton Monoid, Automaton Semigroup, Freeness Problem, Free Presentation, Algebraic Decision Problem
  \end{abstract}
  
  \begin{section}{Introduction}
    \enlargethispage{\baselineskip}
    In the 1980s, Grigorchuk solved a famous question by Milnor (see \cite{grigorchuk2008groups} for a nice introduction) by presenting the first group with intermediate growth: the number of elements that can be written as a word of length at most $n$ over the generators grows sub-exponentially but super-polynomially. The group has even more noteworthy properties. It is amenable but not elementary amenable (e.\,g.\ \cite{juschenko2022amenability}) and an infinite 2-group (giving a counter-example to Burnside's problem, e.\,g.\ \cite{nekrashevych2005self,bartholdi2021groups}).
    Its peculiar properties stirred interest in Grigorchuk’s group and groups of similar form where it soon became important that Grigorchuk’s group has a nice description using what is simply called an \emph{automaton} in this context (e.\,g.\ \cite{nekrashevych2005self} or \cite{bartholdi2021groups}). The simplicity of this presentation (the automaton only uses a binary alphabet and four states -- with an additional identity state) contrasts the complex nature of the group. An \enquote{automaton} here is what more precisely is called a finite-state letter-to-letter transducer (i.\,e.\ an automaton with input and output). The idea is that, in such an automaton, every state induces a mapping of input to output words and the closure of these functions under composition forms a semigroup. If the automaton is additionally invertible, the functions are bijections and we may consider the generated group. This leads to the classes of \emph{automaton semigroups} and \emph{groups}, which contain further noteworthy examples (e.\,g.\ Gupta-Siki $p$-groups \cite{gupta1983burnside}, the lamplighter group \cite{grigorchuk2001lamplighter} and more general lamplighter-like groups \cite{silva2005class,skipper2020lamplighter}).
    
    Being able to finitely describe groups without classical finite presentations (consisting of generators and relations) additionally highlights the usefulness of considering (semi)groups generated by automata. Starting from Grigorchuk’s group, the study of automaton groups and semigroups is nowadays a thriving research field with important connections to many neighboring areas (such as geometry, dynamical systems and symbolic dynamics; see e.\,g.\ \cite{nekrashevych2005self,bartholdi2021groups} for more background information). The extensive research in Mathematics and Computer Science on the semigroup (and monoid) case (e.\,g.\ \cite{cain2009automaton,klimann2016automaton,brough2017automaton,picantin2019automatic,bartholdi2020hierarchy,structurePart}) arises naturally from the group case for example via the \emph{dual automaton} where states and input/output letters swap places. The connection between an automaton and its dual has been exploited algebraically and algorithmically (e.\,g.\ \cite{glasner2005automata,vorobets2007free,vorobets2010series,klimann2016automaton,klimann2018infinity,orbitsPart}).
    
    In this work, we look further at the algorithmic aspects of this interesting class by showing that its \emph{freeness problem} is undecidable. This problem asks whether a given automaton generates a free semigroup (or monoid). It has been studied extensively for other classes of groups and semigroups. Since freeness is a Markov property, the problem is undecidable for classical finite group presentations (see e.\,g.\ \cite{lyndon2001combinatorial}).
    A similar approach may also be used to show that it is undecidable for finitely presented monoids; interestingly, however, it turns out to be decidable for finitely presented semigroups (see \cite{nybergbrodda2024aidanRabin} for more details).
    Further important results include the undecidability of the freeness problem for matrix semigroups, originally shown using a reduction from Post’s Correspondence Problem \cite{klarner1991undecidability}, which has been improved and contrasted in many further publications (e.\,g.\ \cite{mandel1977finite,cassaigne1999undecidability,bell2008reachability}). Interestingly, matrix (semi)groups and automaton (semi)groups are connected in the sense that the former can be presented as subgroups of the latter \cite{brunner1998generation} (see also \cite{sunic2012conjugacy,decidabilityPart,waechter2020automaton}) but this does not help to prove the freeness problem undecidable for automaton (semi)groups \cite{decidabilityPartErratum}.
    
    With our result, we continue this line of research but also further contribute to the study of freeness in self-similar (i.\,e.\ generated by infinite automata) and automaton structures as well as their algorithmic aspects. For the former, we refer the reader to the survey \cite{rodaro2022selfSimilarity} and only point out that, while it is known that free groups are automaton groups \cite{vorobets2007free,vorobets2010series,steinberg2011automata}, these constructions are usually deemed rather difficult. For automaton semigroups and monoids, the situation seems to be simpler: every free semigroup of (finite) rank at least two can be generated by an automaton (see \cite{cain2009automaton} or \autoref{ex:freeSemigroups}) but the free semigroup of rank one cannot \cite{cain2009automaton}. All free monoids of finite rank are automaton semigroups, though.
    
    Regarding algorithmic questions for automaton (semi)groups, we point out that, while one may easily be misled into believing that using a finite automaton as the generating combinatorial object should be rather simple, the situation is actually quite complex and only a few natural algorithmic problems are known to be undecidable while many others notoriously remain open problems.
    An exception here seems to be that the word problem for automaton (semi)groups is $\PSpace$-complete. Interestingly, this was first known for semigroups \cite{dangeli2017complexity} and was later extended to groups \cite{waechter2020automaton}. Some subclasses have simpler word problems. For example, using finitary automata to present finite groups results in a $\ComplexityClass{coNP}$-complete word problem \cite{kotowsky2023word} and the word problem of an automaton group of polynomial activity is in polylogarithmic space \cite{bondarenko2012growth} (see \cite{waechter2024word} for more information).
    On the other hand, there is an automaton group with an undecidable conjugacy problem \cite{sunic2012conjugacy} (\enquote{are two given group elements conjugate in the group?}). The construction used there also shows that the isomorphism problem for automaton groups (\enquote{are the groups generated by two given automata isomorphic?}) and, thus, automaton semigroups is undecidable.\footnote{Unfortunately, this does not seem to be written down explicitly anywhere.} There are two constructions for an automaton group with undecidable order problem (\enquote{has a given group element finite or infinite order?}) \cite{gillibert2018automaton,bartholdi2017word}. The latter of the two even yields a contracting automaton. The undecidability was also first known for automaton semigroups \cite{gillibert2014finiteness} and the problem is decidable for bounded automaton groups \cite{bondarenko2013conjugacy}, monoids \cite{bartholdi2020hierarchy} and semigroups \cite{boundedFinitenessSemigroups}.
    
    All these constructions encoding Turing machines in automaton (semi)groups make a statement about individual (semi)group elements. Since the interaction between the generating automaton and generated algebraic structure is often surprising and still not well understood, it is much more challenging to construct reductions where the entire generated (semi)group (or monoid) has a certain property (based on whether we input a positive or negative problem instance). The only known result of this kind seems to be that the finiteness problem for automaton semigroups (\enquote{Is the semigroup generated by a given automaton finite?}) is undecidable \cite{gillibert2014finiteness}. The corresponding group problem is still open \cite{grigorchuk2000automata}.
    
    Our reduction from Post's Correspondence Problem \cite{post1946variant} to the freeness problems for automaton semigroups and for monoids in this paper is a second result of this form. It solves the corresponding open problem by Grigorchuk, Nekrashevych and Sush\-chansk\u{\i}i \cite[7.2 b)]{grigorchuk2000automata}; in fact, we show that an even stronger undecidability result holds where the input automata are guaranteed to have an alphabet size of (at most) $25$.
    Despite previous attempts \cite{decidabilityPart,decidabilityPartErratum} and a positive result for semigroups generated by invertible and reversible automata with two states \cite{klimann2016automaton} as well as a negative result on testing for relations of the form $w = \idGrp$ \cite{decidabilityPart}, the problem had remained open quite a while for groups and for semigroups.
    The main challenge seems to be that we need very precise control over the relations in the generated semigroup (which seems to be much more difficult than, e.\,g., ensuring that the semigroup is finite or infinite) while the interaction between the structure of the generating automaton and the semigroup/monoid relations is highly non-obvious.

    \enlargethispage{0.5\baselineskip}
    Our construction yields further results beyond the freeness problem(s). Namely, testing whether a given automaton generates a (left) cancellative semigroup/monoid and whether the semigroup/monoid generated by a given automaton is equidivisible (a notion strongly related to freeness by Levi's lemma, see \autoref{fct:leviLemma}) are undecidable. We also obtain that it is undecidable whether a given automaton generates a free semigroup with a given basis and whether a given map between the state sets of two given automata can be extended into an iso- or homomorphism. The latter problem is connected to the (undecidable, see above) isomorphism problem for automaton semigroups in the sense that it asks whether all relations of the first automaton semigroup also hold in the second one.
    
    Finally, the construction seems to be flexible enough to be adapted to similar problems, which gives us hope that our results could also contribute towards showing that the freeness problem is undecidable in the group case. For example, it can be adapted to show that the free presentation problem for automaton monoids is undecidable: does a given automaton generate a free monoid whose rank is equal to the number of its states (minus an identity state)? In other words, we cannot test whether a given automaton monoid contains any relations (although this is semi-decidable as the word problem is decidable, see above).
    
    Adapting our construction for this is necessary because the construction in the semigroup case always yields semigroup relations since we need to use a result on the closure of the class of automaton semigroups under (certain) free products \cite{brough2023automaton} in order to construct some kind of \enquote{partial} powers of the generating automaton. However, no details of this construction will be required to understand our results. More generally, the presentation in this work is meant to be self-contained (although the construction may be considered to be rather technical).
    
    The current version of this paper is a full journal version including all proof details of the results presented at the 49th International Symposium on Mathematical Foundations of Computer Science (MFCS 2024) \cite{conferenceVersion}. The extension of the undecidability of the freeness problems for automaton semigroups and for automaton monoids to a constant alphabet size (precisely to alphabet size $25$) is novel.
  \end{section}
  
  \begin{section}{Preliminaries}\label{sec:preliminaries}
    \paragraph*{Fundamentals, Semigroups and Monoids.}
    We use $A \uplus B$ to denote the disjoint union of two sets $A$ and $B$. We consider the set of natural numbers $\mathbb{N}$ to contain $0$.
    
    We assume the reader to be familiar with the most fundamental notions of semigroup theory (see e.\,g.\ \cite{howie1995fundamentals}).
    We denote the neutral element of a monoid $M$ by $\idGrp_M$ or, if $M$ is clear from the context, simply by $\idGrp$. For a monoid $M$, we let $M^\idGrp = M$ and, if $S$ is a semigroup but not a monoid, we may adjoin a neutral element $\idGrp \not\in S$ to $S$ by letting $\idGrp \idGrp = \idGrp$ and $\idGrp s = s = s \idGrp$ for all $s \in S$ and denote the resulting monoid by $S^\idGrp$.
    
    \paragraph*{Words, Free Semigroups and Free Monoids.}
    Let $B$ be a finite, non-empty set, which we call an \emph{alphabet}. A \emph{word} $w$ over the alphabet $B$ is a finite sequence $a_1 \dots a_n$ with $a_1, \dots, a_n \in B$, whose \emph{length} is $|w| = n$. We use $\varepsilon$ to denote the unique word of length $0$ (i.\,e.\ the \emph{empty word}). The set of all words over $B$ is denoted by $B^*$. Words have the natural operation of juxtaposition (where we let $uv = a_1 \dots a_m b_1 \dots b_n$ for $u = a_1 \dots a_m$ and $v = b_1 \dots b_n$ with $a_1, \dots, a_m, b_1, \dots, b_n \in B$), which turns $B^*$ into a monoid with $\varepsilon$ as the neutral element. This monoid $B^*$ is \emph{the free monoid} with \emph{basis} $B$ (or the free monoid \emph{over} $B$) and a monoid $M$ is \emph{free} (with basis $B$) if it is isomorphic to $B^*$ (for some alphabet $B$).
    Closely related to the free monoid is \emph{the free semigroup} $B^+$, which is formed by the set of all non-empty words (i.\,e.\ $B^+ = B^* \setminus \{ \varepsilon \}$) and (again) juxtaposition as operation. Similarly, a semigroup $S$ is \emph{free} (with basis $B$) if it is isomorphic to $B^+$ (for some alphabet $B$).
    Note that $B^*$ is (isomorphic to) $(B^+)^\idGrp$.
    Also note that the basis of a free monoid or semigroup is unique (see e.\,g.\ \cite[Proposition~7.1.3]{howie1995fundamentals}).
    The \emph{rank} of a free monoid or semigroup is the cardinality $|B|$ of its basis $B$.
    
    To lighten our notation, we will use some common conventions form formal language theory. For example, we will sometimes identify $q$ with the singleton sets $\{ q \}$; in particular, we will write $q^+$ and $q^*$ instead of $\{ q \}^+$ and $\{ q \}^*$.

    \paragraph{Properties of Free Semigroups and Monoids.}
    
    We will need some properties of free semigroups and monoids. A (general) semigroup $S$ is \emph{left cancellative} if $st = st'$ implies $t = t'$ for all $s, t, t' \in S$. Symmetrically, it is \emph{right cancellative} if $st = s't$ implies $s = s'$ for all $s, s', t \in S$ and, finally, it is \emph{cancellative} if it is both left and right cancellative. It is easy to see that $B^*$ and, thus, $B^+$ are cancellative (see, e.\,g.\ \cite[Proposition~7.1.1]{howie1995fundamentals}).
    \begin{fact}\label{fct:freeIsCancellative}
      Free semigroups and free monoids are cancellative.
    \end{fact}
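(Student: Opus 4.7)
The plan is to verify cancellativity directly for the concrete free monoid $B^*$ (and its subsemigroup $B^+$); the statement for arbitrary free semigroups and monoids then follows immediately from the isomorphism with some $B^+$ or $B^*$ given by the definition of freeness. The key observation is that every element of $B^*$ is by definition a finite sequence of letters from $B$, and two such sequences are equal precisely when they have the same length and agree in every position.

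I would prove left cancellativity by induction on $|s|$: assuming $st = st'$ in $B^*$, if $|s| = 0$ then $s = \varepsilon$ and the equation is already $t = t'$. Otherwise write $s = a s''$ with $a \in B$ and $s'' \in B^*$; then $st$ and $st'$ are finite sequences of letters both starting with $a$, so stripping off the first letter yields $s''t = s''t'$, and the induction hypothesis applied to the shorter prefix $s''$ gives $t = t'$. Right cancellativity is entirely symmetric, stripping the last letter instead of the first. Restricting these arguments to non-empty words yields the corresponding statement for $B^+$.

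Since the underlying observation—that words in $B^*$ are literally sequences of letters—is essentially the definition, there is no real obstacle in the argument; the main (mild) care needed is to phrase the inductive step so that it applies uniformly in the semigroup and monoid cases, so that one does not have to treat $B^+$ and $B^*$ separately. Given that the fact is standard (the paper itself points to \cite[Proposition~7.1.1]{howie1995fundamentals}), a one-line proof simply invoking this reference would also suffice in place of the inductive argument sketched above.
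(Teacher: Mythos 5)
Your proof is correct and matches the paper's treatment: the paper states this as a fact without proof, merely noting that it is easy to see for $B^*$ and $B^+$ directly and citing Howie, and your letter-stripping induction is exactly the standard verification behind that remark. No issues.
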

    
    A \emph{length function} of a semigroup $S$ is a homomorphism $S \to \mathbb{N}_{> 0}$ where $\mathbb{N}_{> 0}$ is the additive semigroup of strictly positive natural numbers. A monoid $M$ has a \emph{proper length function} if there is a monoid homomorphism $M \to \mathbb{N}$ (where $\mathbb{N}$ is the additive monoid of the natural numbers including $0$) such that $\idGrp$ is the only pre-image of $0$ (i.\,e.\ only $\idGrp$ has length $0$, all other elements have strictly positive length). A semigroup $S$ that is not a monoid has a length function if and only if $S^\idGrp$ has a proper one and free semigroups and monoids do have (proper) length functions (mapping a word to its length).
    
    A semigroup (or monoid) $S$ is \emph{equidivisable} if, for all $s_1, s_2, s_1', s_2' \in S$ with $s_1 s_2 = s_1' s_2'$, there is some $x \in S^\idGrp$ with $s_1 = s_1' x$ and $x s_2 = s_2'$ or with $s_1 x = s_1'$ and $s_2 = x s_2'$ (see \autoref{fig:equidivisibility}).
    The idea for this definition is that two factorizations of the same semigroup element have a common subfactorization.
    \begin{figure}\centering
      \begin{subfigure}{0.5\linewidth}%
        \centering%
        \begin{tikzpicture}
          \matrix [rectangle, draw, every node/.style={inner sep=3pt}, matrix of math nodes, ampersand replacement=\&, inner sep=0pt] (s) {
            \hspace*{1.5cm} s_1\vphantom{s_1'}\hspace*{1.5cm} \& \hspace*{1cm} s_2\vphantom{s_1'} \hspace*{1cm} \\
          };
          
          \draw (s-1-1.north east) -- (s-1-1.south east);
          
          \matrix [below=1cm of s, rectangle, draw, every node/.style={inner sep=3pt}, matrix of math nodes, ampersand replacement=\&, inner sep=0pt] (s') {
            \hspace*{1cm} s_1'\hspace*{1cm} \& \hspace*{1.5cm} s_2' \hspace*{1.5cm} \\
          };
          
          \draw (s'-1-1.north east) -- (s'-1-1.south east);
          
          \path (s'-1-1.north east) |- coordinate (c1) (s-1-1.south east);
          \path (s'-1-1.north east) -| coordinate (c2) (s-1-1.south east);
          \draw[decorate, decoration=brace] ($(s-1-1.south east)-(0pt,0.5mm)$) -- ($(c1)-(0pt,0.5mm)$);
          \draw[decorate, decoration=brace] ($(s'-1-1.north east)+(0pt,0.5mm)$) -- node[above=2mm] {$x$} ($(c2)+(0pt,0.5mm)$);
        \end{tikzpicture}%
        \caption{$s_1$ is longer than $s_1'$}%
      \end{subfigure}%
      \begin{subfigure}{0.5\linewidth}%
        \centering%
        \begin{tikzpicture}
          \matrix [rectangle, draw, every node/.style={inner sep=3pt}, matrix of math nodes, ampersand replacement=\&, inner sep=0pt] (s) {
            \hspace*{1cm} s_1\vphantom{s_1'}\hspace*{1cm} \& \hspace*{1.5cm} s_2\vphantom{s_1'} \hspace*{1.5cm} \\
          };
          
          \draw (s-1-1.north east) -- (s-1-1.south east);
          
          \matrix [below=1cm of s, rectangle, draw, every node/.style={inner sep=3pt}, matrix of math nodes, ampersand replacement=\&, inner sep=0pt] (s') {
            \hspace*{1.5cm} s_1'\hspace*{1.5cm} \& \hspace*{1cm} s_2' \hspace*{1cm} \\
          };
          
          \draw (s'-1-1.north east) -- (s'-1-1.south east);
          
          \path (s'-1-1.north east) |- coordinate (c1) (s-1-1.south east);
          \path (s'-1-1.north east) -| coordinate (c2) (s-1-1.south east);
          \draw[decorate, decoration=brace] ($(c1)-(0pt,0.5mm)$) -- ($(s-1-1.south east)-(0pt,0.5mm)$);
          \draw[decorate, decoration=brace] ($(c2)+(0pt,0.5mm)$) -- node[above=2mm] {$x$} ($(s'-1-1.north east)+(0pt,0.5mm)$);
        \end{tikzpicture}%
        \caption{$s_1$ is shorter than $s_1'$}%
      \end{subfigure}%
      \caption{Graphical representation of equidivisibility \cite[Figure~2.8]{waechter2020automaton}.}\label{fig:equidivisibility}
    \end{figure}
    It is not difficult to see that free semigroups and monoids are equidivisible (see e.\,g.\ \cite[Proposition~7.1.2]{howie1995fundamentals}). Together with having a (proper) length function, this turns out to characterize free semigroups and monoids (see e.\,g.\ \cite[Proposition~7.1.8]{howie1995fundamentals}).
    \begin{fact}[Levi's Lemma]\label{fct:leviLemma}
      A semigroup (monoid) $S$ is free if and only if it is equidivisible and has a (proper) length function.
    \end{fact}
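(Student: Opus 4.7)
The forward direction is already essentially noted in the text: free semigroups and monoids are equidivisible and admit (proper) length functions mapping a word to its length, so it only remains to verify the converse. My plan is to extract a candidate basis from $S$ and then verify spanning and uniqueness using the length function and equidivisibility respectively.

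Concretely, given a semigroup $S$ with length function $\ell \colon S \to q^+$ (respectively a monoid $S$ with proper length function $\ell \colon S \to q^*$), I would define $B \subseteq S$ to be the set of \emph{indecomposable} elements: those $b \in S$ with $\ell(b) = 1$. Since $\ell$ is additive, $\ell(b) = 1$ is equivalent to saying $b$ cannot be written as a product $t_1 t_2$ with $t_1, t_2 \in S$ in the semigroup case, or with $t_1, t_2 \in S \setminus \{\idGrp\}$ in the monoid case (using properness). I would then show by induction on $\ell(s)$ that every $s \in S$ (with $s \neq \idGrp$ in the monoid case) is a product of elements of $B$: the base case $\ell(s) = 1$ is immediate, and for $\ell(s) \geq 2$ one factors $s = t_1 t_2$ with $\ell(t_1), \ell(t_2) \geq 1$ (hence strictly smaller than $\ell(s)$) and applies the induction hypothesis to each factor.

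The core of the proof is uniqueness of the factorization, and this is where equidivisibility enters. Suppose $b_1 \cdots b_m = c_1 \cdots c_n$ with $b_i, c_j \in B$. Applying equidivisibility to the split $b_1 \cdot (b_2 \cdots b_m) = c_1 \cdot (c_2 \cdots c_n)$, we obtain some $x \in S^\idGrp$ with, say, $b_1 = c_1 x$ and $x (b_2 \cdots b_m) = c_2 \cdots c_n$ (the symmetric case is handled identically). If $x \in S \setminus \{\idGrp\}$, then $b_1 = c_1 x$ would be a decomposition of $b_1$ into two non-identity factors, contradicting $b_1 \in B$; hence $x = \idGrp$, which yields $b_1 = c_1$ together with $b_2 \cdots b_m = c_2 \cdots c_n$. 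An induction on $m + n$ then gives $m = n$ and $b_i = c_i$ for all $i$, establishing that the canonical map $B^+ \to S$ (or $B^* \to S$) is an isomorphism.

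The main delicate point is keeping the semigroup and monoid cases parallel while being careful about the adjoined identity: the distinction $x \in S$ versus $x = \idGrp$ in the equidivisibility conclusion is precisely what the indecomposability of $b_1$ resolves, and this step relies crucially on the length function being \emph{proper} in the monoid case (so that length $1$ really does force indecomposability into two non-identity pieces). Apart from this bookkeeping, both directions are relatively direct, which is presumably why the authors state this as a \emph{fact} and refer the reader to \cite[Proposition~7.1.8]{howie1995fundamentals} rather than proving it inline.
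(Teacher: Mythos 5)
The paper does not prove this statement (it is cited from Howie, Proposition~7.1.8), so I am judging your argument on its own. Your overall route — extract a candidate basis, prove spanning by induction on length, prove uniqueness of factorizations via equidivisibility — is the standard one, and your uniqueness step is correct. However, there is a genuine gap in the spanning step, caused by your choice of basis. You define $B$ as the set of elements with $\ell(b) = 1$ and claim this is \emph{equivalent} to indecomposability. Only one direction of that equivalence holds: $\ell(b) = 1$ does force indecomposability (by additivity and positivity of $\ell$), but an indecomposable element can have any length, because the length function is merely \emph{some} homomorphism $S \to q^+$, not a canonical one. Concretely, take $S = q^+$ with the length function $q^n \mapsto q^{2n}$: this $S$ is free, equidivisible and has a length function, yet no element has length $1$, so your $B$ is empty and the inductive step "for $\ell(s) \geq 2$ one factors $s = t_1 t_2$" fails already at $s = q$, which is indecomposable. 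The same problem occurs in the monoid case with the proper length function $q^n \mapsto q^{2n}$ on $q^*$.

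The fix is to define $B$ as the set of indecomposable elements (in the monoid case: non-identity elements not expressible as a product of two non-identity elements) and to use the length function only to guarantee that repeated factorization terminates: in the induction on $\ell(s)$, either $s$ is indecomposable and hence lies in $B$, or $s = t_1 t_2$ with $\ell(t_1), \ell(t_2) < \ell(s)$ and one recurses. Your uniqueness argument goes through verbatim with this corrected $B$, since it only uses that the $b_i$ and $c_j$ are indecomposable (not that they have length $1$): if $x \neq \idGrp$ in the equidivisibility conclusion, then $b_1 = c_1 x$ would decompose the indecomposable $b_1$. With that repair the proof is complete and matches the classical argument.
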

    
    \paragraph*{Free Products of Semigroups.}
    A \emph{semigroup presentation} is a pair $\langle Q \mid \mathcal{R} \rangle_\mathscr{S}$ of a set of \emph{generators} $Q$ and a (possibly infinite) set of \emph{relations} $\mathcal{R} \subseteq Q^+ \times Q^+$. We will only consider presentations where $Q$ is finite and non-empty.
    If we denote by $\mathcal{C}$ the smallest congruence $\mathcal{C} \subseteq Q^+ \times Q^+$ containing $\mathcal{R} \subseteq \mathcal{C}$, the semigroup \emph{presented} by such a presentation is $S = Q^+ / \mathcal{C}$ formed by the congruence classes $[\cdot]$ of $\mathcal{C}$ with the (well-defined!) operation $[ u ] \cdot [v] = [uv]$. We simply write $S = \langle Q \mid \mathcal{C} \rangle_{\mathscr{S}}$ in this case. Every semigroup generated by a finite, non-empty set $Q$ is presented by some semigroup presentation of this form.
    
    The free product of the semigroups $S = \langle Q \mid \mathcal{S} \rangle_\mathscr{S}$ and $T = \langle P \mid \mathcal{R} \rangle_\mathscr{S}$ is the semigroup $S \star T = \langle Q \uplus P \mid \mathcal{S} \cup \mathcal{R} \rangle_\mathscr{S}$. For example, we have $\{ p, q \}^+ = p^+ \star q^+$.
    
    \begin{remark*}
      Of course, there is also the free product of monoids (and monoid presentations). However, in this paper, we will only consider free products of semigroups. In particular, we do \textbf{not} have $\{ p, q \}^* = p^* \star q^*$.
    \end{remark*}

    \paragraph*{Automata.}
    In the context of the current paper, an \emph{automaton} is a triple $\mathcal{T} = (Q, \Sigma, \delta)$ consisting of a non-empty, finite set of \emph{states}, an \emph{alphabet} $\Sigma$ and a set $\delta \subseteq Q \times \Sigma \times \Sigma \times Q$ of \emph{transitions}.
    
    \begin{remark*}
      What we simply call an automaton here would rather be called a finite-state, letter-to-letter transducer in more general automaton-theoretic terms.
      However, simply using the term \enquote{automaton} is standard terminology in the area. We also do not use initial or final states as they do not interact nicely with the self-similar nature of the semigroups and monoids generated by automata we are about to define.
    \end{remark*}
    
    Within the context of transitions, we will use the graphical notation $\trans{p}{a}{b}{q}$ to denote $(p, a, b, q) \in Q \times \Sigma \times \Sigma \times Q$. Such a transition \emph{starts} in $p$, \emph{ends} in $q$, its \emph{input} is $a$ and its \emph{output} is $b$.
    Additionally, we use the common way of depicting automata as illustrated in \autoref{fig:exampleTransition}, which indicates that the automaton contains the transition $\trans{p}{a}{b}{q} \in \delta$.
    \begin{figure}\centering
      \begin{tikzpicture}[auto, shorten >=1pt, >=latex, baseline=(p.base)]
        \node[state] (p) {$p$};
        \node[state, right=of p] (q) {$q$};
        
        \draw[->] (p) edge node {$a / b$} (q);
      \end{tikzpicture}
      \caption{Example of depicting a transition in an automaton.}\label{fig:exampleTransition}
    \end{figure}
    When dealing with an automaton $\mathcal{T} = (Q, \Sigma, \delta)$, we are actually dealing with two alphabets ($Q$ and $\Sigma$). In order to avoid confusion, we call the elements of $Q$ \emph{states} and the elements of $Q^*$ \emph{state sequences}, while reserving the terms \emph{letters} and \emph{words} for the elements of $\Sigma$ and $\Sigma^*$, respectively.
    
    \begin{figure}[t]%
      \centering%
      \begin{subfigure}[t]{0.2\linewidth}%
        \centering%
        \begin{tikzpicture}[baseline=(m-2-1.base)]
          \matrix[matrix of math nodes, text height=1.25ex, text depth=0.25ex] (m) {
              & a & \\
            p & & q \\
              & b & \\
          };
          \foreach \i in {1} {
            \draw[->] let
              \n1 = {int(2+\i)}
            in
             (m-2-\i) -> (m-2-\n1);
            \draw[->] let
             \n1 = {int(1+\i)}
            in
              (m-1-\n1) -> (m-3-\n1);
          };
        \end{tikzpicture}%
        \caption{Single tran\-si\-tion cross di\-a\-gram.}\label{sfig:singleCrossDiagram}%
      \end{subfigure}\hfill%
      \begin{subfigure}[t]{0.5\linewidth}%
        \centering%
        \begin{tikzpicture}[baseline=(m-4-1.east)]
          \matrix[matrix of math nodes, text height=1.25ex, text depth=0.25ex, ampersand replacement=\&] (m) {
                     \& a_{0, 1}     \&          \& \dots \&              \& a_{0, m}     \&     \\
            q_{1, 0} \&              \& q_{1, 1} \& \dots \& q_{1, m - 1} \&              \& q_{1, m} \\
                     \& a_{1, 1}     \&          \&       \&              \& a_{1, m}     \&     \\
            \vdots \& \vdots       \&          \&       \&              \& \vdots       \& \vdots \\
                     \& a_{n - 1, 1} \&          \&       \&              \& a_{n - 1, m} \&     \\
            q_{n, 0} \&              \& q_{n, 1} \& \dots \& q_{n, m - 1} \&              \& q_{n, m} \\
                     \& a_{n, 1}     \&          \& \dots \&              \& a_{n, m}     \&     \\
          };
          \foreach \j in {1, 5} {
            \foreach \i in {1, 5} {
              \draw[->] let
                \n1 = {int(2+\i)},
                \n2 = {int(1+\j)}
              in
               (m-\n2-\i) -> (m-\n2-\n1);
              \draw[->] let
                \n1 = {int(1+\i)},
                \n2 = {int(2+\j)}
              in
                (m-\j-\n1) -> (m-\n2-\n1);
            };
          };
        \end{tikzpicture}%
        \caption{Multiple crosses combined in one diagram.}\label{sfig:combinedCrossDiagram}%
      \end{subfigure}\hfill%
      \begin{subfigure}[t]{0.2\linewidth}%
        \centering%
        \begin{tikzpicture}[baseline=(m-2-1.base)]
          \matrix[matrix of math nodes, text height=1.25ex, text depth=0.25ex, ampersand replacement=\&] (m) {
                   \& u \& \\
            \bm{p} \&   \& \bm{q} \\
                   \& v \& \\
          };
          \foreach \i in {1} {
            \draw[->] let
              \n1 = {int(2+\i)}
            in
              (m-2-\i) -> (m-2-\n1);
            \draw[->] let
              \n1 = {int(1+\i)}
            in
              (m-1-\n1) -> (m-3-\n1);
          };
        \end{tikzpicture}%
        \caption{Abbreviated cross diagram.}\label{sfig:abbreviatedCrossDiagram}%
      \end{subfigure}
      \caption{Combined and abbreviated cross diagrams.}
    \end{figure}
    Another somewhat graphical tool that we will make heavy use of are \emph{cross diagrams}. Here, a cross diagram as given in \autoref{sfig:singleCrossDiagram} indicates the existence of a transition $\trans{p}{a}{b}{q}$ in the automaton. Cross diagrams can be stacked together in order to create lager ones. For example, the diagram in \autoref{sfig:combinedCrossDiagram} indicates the existence of the transition $\trans{q_{i, j - 1}}{a_{i - 1, j}}{a_{i, j}}{q_{i, j}}$ for all $0 < i \leq n$ and $0 < j \leq m$. When combining cross diagrams, we will sometimes omit unnecessary states and letters. Additionally, we will also abbreviate them: for example, if we let $\bm{p} = q_{n, 0} \dots q_{1, 0}$, $u = a_{0, 1} \dots a_{0, m}$, $v = a_{n, 1} \dots a_{n, m}$ and $\bm{q} = q_{n, m} \dots q_{1, m}$, the cross diagram in \autoref{sfig:abbreviatedCrossDiagram} is an abbreviation of the cross diagram in \autoref{sfig:combinedCrossDiagram}. It is important here to note the order we write the state sequences in: in our example, $q_{1, 0}$ is the first state in the top left of the cross diagram but it is the rightmost state in the sequence $\bm{p}$.
    This order will later be more natural as we will define a left action based on cross diagrams.
    
    \begin{remark}
      It may be helpful for the reader to observe that the individual rows in a cross diagram constitutes runs\footnote{See, e.\,g.\ \cite{waechter2024word} for some introduction to automata theory and \cite[p.~275]{waechter2024word} for a precise definition of a run.} of the automaton where the output of the previous one is the input for the next one. For example, the $i$-th row of \autoref{sfig:combinedCrossDiagram} belongs to the run
      \begin{center}
        \begin{tikzpicture}[auto, shorten >=1pt, >=latex, baseline=(q0.base)]
          \node (q0) {$q_{i, 0}$};
          \node[right=2cm of q0] (q1) {$q_{i, 1}$};
          \node[right=2cm of q1] (dots) {$q_{i, 2} \ \dots\ q_{i, m - 1}$};
          \node[right=2cm of dots] (qn) {$q_{i, m}$};
          
          \path[->] (q0) edge node {\scriptsize\ensuremath{a_{i - 1, 1} / a_{i, 1}}} (q1)
                    (q1) edge node {\scriptsize\ensuremath{a_{i - 1, 2} / a_{i, 2}}} (dots)
                    (dots) edge node {\scriptsize\ensuremath{a_{i - 1, m} / a_{i, m}}} (qn)
          ;
        \end{tikzpicture}
      \end{center}
      in the automaton.
    \end{remark}
    
    An automaton $\mathcal{T} = (Q, \Sigma, \delta)$ is called \emph{complete and deterministic} if, for every $p \in Q$ and every $a \in \Sigma$, there is exactly one $q \in Q$ and exactly one $b \in \Sigma$ such that the cross diagram in \autoref{sfig:singleCrossDiagram} holds (i.\,e.\ in every state $p$ and for every letter $a \in \Sigma$, there is exactly one transition starting in $p$ with input $a$). We call such an automaton a \emph{complete \SAut} (as they naturally generate semigroups).
    
    \paragraph*{Subautomata.}
    An automaton $\mathcal{S} = (P, \Sigma, \sigma)$ is a \emph{subautomaton} of another automaton $\mathcal{T} = (Q, \Gamma, \delta)$ if $P \subseteq Q$, $\Sigma \subseteq \Gamma$ and $\sigma \subseteq \delta$. In this case, any cross diagram of $\mathcal{S}$ is also a (valid) cross diagram of $\mathcal{T}$.
    
    \paragraph*{Automaton Semigroups and Monoids.}
    Let $\mathcal{T} = (Q, \Sigma, \delta)$ be a complete \SAut. By induction, there is exactly one $v \in \Sigma^+$ and exactly one $\bm{q} \in Q^+$ for every $\bm{p} \in Q^+$ and $u \in \Sigma^+$ such that the cross diagram in \autoref{sfig:abbreviatedCrossDiagram} holds (with respect to $\mathcal{T}$). This allows us to define a left action of $Q^+$ on $\Sigma^+$ by letting $\bm{p} \circ u = v$ and to define a right action of $\Sigma^+$ on $Q^+$, called the \emph{dual action}, by letting $\bm{p} \cdot u = \bm{q}$. The reader may verify that this indeed defines well-defined actions by the way cross diagrams work.
    We may extend these into an action of $Q^*$ on $\Sigma^*$ and an action of $\Sigma^*$ on $Q^*$ by letting $\varepsilon \circ u = u$ for all $u \in \Sigma^*$, $\bm{p} \circ \varepsilon = \varepsilon$ for all $\bm{p} \in Q^*$, $\varepsilon \cdot u = \varepsilon$ again for all $u \in \Sigma^*$ and, finally, $\bm{p} \cdot \varepsilon = \bm{p}$ for (again) all $\bm{p} \in Q^*$.
    
    By the way cross diagrams work, there is an interaction between the two actions: for all $\bm{p}, \bm{q} \in Q^*$ and all $u, v \in \Sigma^*$, we have $\bm{p} \circ uv = (\bm{p} \circ u) [(\bm{p} \cdot u) \circ v]$ and $\bm{q}\bm{p} \cdot u = [\bm{q} \cdot (\bm{p} \circ u)] (\bm{p} \cdot u)$.
    
    The action $\bm{p} \circ u$ allows us to define the congruence ${=_\mathcal{T}} \subseteq Q^* \times Q^*$ as its kernel, i.\,e.
    \[
      \bm{p} =_{\mathcal{T}} \bm{q} \iff \forall u \in \Sigma^* : \bm{p} \circ u = \bm{q} \circ u \text{.}
    \]
    We denote the congruence class of $\bm{p} \in Q^*$ with respect to $=_\mathcal{T}$ by $[ \bm{p} ]_\mathcal{T}$. The set
    \[
      \mathscr{M}(\mathcal{T}) = Q^* / {=_\mathcal{T}}
    \]
    of these congruence classes forms a monoid, which is called the \emph{monoid generated} by $\mathcal{T}$.
    In other words, it is the quotient of $Q^*$ by the kernel $=_{\mathcal{T}}$, which yields a faithful action of $\mathscr{M}(\mathcal{T})$ on $\Sigma^*$.
    Note that $\varepsilon$ acts as the identity on all $u \in \Sigma^*$ and the class of $\varepsilon$, thus, forms the neutral element of $\mathscr{M}(\mathcal{T})$.
    A monoid arising in this way is called a \emph{complete automaton monoid}.
    
    Similarly, the \emph{semigroup generated} by $\mathcal{T}$ is the semigroup
    \[
      \mathscr{S}(\mathcal{T}) = Q^+ / {=_{\mathcal{T}}}
    \]
    and any semigroup arising this way is a \emph{complete automaton semigroup}. Note that the monoid and the semigroup generated by a complete \SAut coincide if and only if there is a non-empty state sequence acting as the identity.

    \begin{remark}
      We only consider complete \SAuta in this work but will make this explicit by talking about complete \SAuta and complete automaton semigroups and monoids. In the literature, these objects are often simply called \enquote{automaton semigroups} (the term \enquote{automaton monoid} is less common). This is a convention that we could also follow here but choose not to since the concepts generalize naturally also to non-complete automata, yielding (partial) automaton semigroups and monoids. It is not known whether the two classes coincide (and we refer the reader to \cite{structurePart} for more details on this question and the general concepts).
    \end{remark}
    
    \begin{remark}
      There is a subtle difference between an automaton monoid and an automaton semigroup which happens to be a monoid. In the latter, the neutral element must not necessarily belong to a state sequence acting as the identity. In fact, it is not known whether the two classes coincide (which contrasts the situations with automaton groups where it is known that every automaton semigroup that happens to be a group is an automaton group, see \cite[Proposition~3.1]{cain2009automaton} for the required construction).
    \end{remark}
    
    \paragraph*{Free Semigroups (Monoids) as Automaton Semigroups (Monoids).}
    As examples of complete automaton semigroups and monoids, we will next look at how to generate free semigroups and monoids.
    \begin{example}[The Adding Machine]\label{ex:addingMachine}
      Let
      $\mathcal{T} = (\{ q, \id \}, \{ 0, 1 \}, \delta)$
      denote the automaton given by
      \begin{center}
        \begin{tikzpicture}[auto, shorten >=1pt, >=latex, baseline=(id.base)]
          \node[state] (q) {$q$};
          \node[state, right=of q] (id) {$\id$};
          \draw[->] (q) edge[loop left] node {$1/0$} (q)
                        edge node {$0/1$} (id)
                    (id) edge[loop right] node[align=left] {$0/0$\\$1/1$} (id);
        \end{tikzpicture}.
      \end{center}
      It is clearly a complete \SAut (and known as the \emph{adding machine}).
      
      The state $\id$ clearly acts as the identity on $\{ 0, 1 \}^*$ (justifying its name) and the action of $q$ is best understood by looking at an example:
      \begin{center}
        \begin{tikzpicture}[baseline=(m-8-1.base)]
          \matrix[matrix of math nodes, text height=1.25ex, text depth=0.25ex] (m) {
              & 0 &     & 0 &     & 0 &     \\
            q &   & \id &   & \id &   & \id \\
              & 1 &     & 0 &     & 0 &     \\
            q &   &  q  &   & \id &   & \id \\
              & 0 &     & 1 &     & 0 &     \\
            q &   & \id &   & \id &   & \id \\
              & 1 &     & 1 &     & 0 &     \\
            q &   &  q  &   &  q  &   & \id \\
              & 0 &     & 0 &     & 1 &     \\
          };
          \foreach \j in {1, 3, 5, 7} {
            \foreach \i in {1, 3, 5} {
              \draw[->] let
                \n1 = {int(2+\i)},
                \n2 = {int(1+\j)}
              in
                (m-\n2-\i) -> (m-\n2-\n1);
              \draw[->] let
                \n1 = {int(1+\i)},
                \n2 = {int(2+\j)}
              in
                (m-\j-\n1) -> (m-\n2-\n1);
            };
          };
        \end{tikzpicture}
      \end{center}
      Looking at the input and output words, we can derive that the action of $q$ can be considered as an increment of a binary number (in reverse/with the least significant bit first). In particular, we have that the actions of all $q^i$ are pair-wise different and we obtain that $\mathscr{M}(\mathcal{T})$ is isomorphic to $q^*$ (where $q^0 = \varepsilon$ belongs to $\id$). Since we have $\id =_\mathcal{T} \varepsilon$, the semigroup generated by $\mathcal{T}$ is the same as the monoid generated by it (i.\,e.\ $\mathscr{S}(\mathcal{T}) = \mathscr{M}(\mathcal{T}) \simeq q^*$).
    \end{example}
    
    The adding machine from \autoref{ex:addingMachine} shows that the free \textbf{monoid} of rank one is a complete automaton semigroup and a complete automaton monoid. The free \textbf{semigroup} of rank one, on the other hand, is neither \cite[Proposition~4.3]{cain2009automaton} (see also \cite[Theorem~15]{brough2017automaton}, \cite[Theorem~19]{structurePart} and \cite[Theorem~1.2.1.4]{waechter2020automaton}).
    
    However, free semigroups of higher rank (and their monoid counter-parts) are indeed complete automaton semigroups.
    We will present the construction from \cite[Theorem~4.1]{silva2005class} (or \cite[Proposition 4.1]{cain2009automaton}) for this next.
    \begin{figure}\centering
      \begin{tikzpicture}[auto, shorten >=1pt, >=latex, baseline=(id.base)]
        \node[state] (a) {$a$};
        \node[state, right=of a] (b) {$b$};
        \draw[->] (a) edge[loop left] node {$a/a$} (a)
                      edge[bend left] node {$b/a$} (b)
                  (b) edge[loop right] node {$b/b$} (b)
                      edge[bend left] node {$a/b$} (a);
      \end{tikzpicture}
      \caption{A complete \SAut generating $\{ a, b \}^+$.}\label{fig:freeBinarySemigroup}
    \end{figure}
    \begin{example}\label{ex:freeSemigroups}
      Let $R$ be a finite set with $|R| \geq 2$. Consider the automaton\footnote{The binary case $R = \{ a, b \}$ is depicted in \autoref{fig:freeBinarySemigroup}.} $\mathcal{R} = (R, R, \rho)$ with
      \[
        \rho = \{ \trans{a}{b}{a}{b} \mid a, b \in R \} \text{.}
      \]
      One easily verifies that $\mathcal{R}$ is a complete \SAut and we claim that it generates the semigroup $R^+$. For this, it suffices to show that, for every $\bm{p}, \bm{q} \in R^+$ with $\bm{p} \neq \bm{q}$, there is some $u \in R^*$ with $\bm{p} \circ u \neq \bm{q} \circ u$. We may assume $|\bm{p}| \geq |\bm{q}|$ and there needs to be some $a \in R$ with $\bm{p} \neq \bm{q} a^{|\bm{p}| - |\bm{q}|}$ (we just need to take $a$ different to the last letter of $\bm{p}$ if the lengths differ).
      
      Now, observe that, for all $n \geq 1$ and all $a_1, \dots, a_n, b_1, \dots, b_n \in R$, we have the cross diagram
      \begin{center}
        \begin{tikzpicture}[baseline=(m-4-1.east)]
          \matrix[matrix of math nodes, text height=1.25ex, text depth=0.25ex, ampersand replacement=\&] (m) {
                     \& b_1     \&          \& \dots \&              \& b_n     \&     \\
            a_1 \&              \& b_1 \& \dots \& b_{n - 1} \&              \& b_n \\
                     \& a_1     \&          \&       \&              \& b_{n - 1}     \&     \\
            \vdots \& \vdots       \&          \&       \&              \& \vdots       \& \vdots \\
                     \& a_{n - 1} \&          \&       \&              \& b_1 \&     \\
            a_n \&              \& a_{n - 1} \& \dots \& a_1 \&              \& b_1 \\
                     \& a_n     \&          \& \dots \&              \& a_1     \&     \\
          };
          \foreach \j in {1, 5} {
            \foreach \i in {1, 5} {
              \draw[->] let
                \n1 = {int(2+\i)},
                \n2 = {int(1+\j)}
              in
               (m-\n2-\i) -> (m-\n2-\n1);
              \draw[->] let
                \n1 = {int(1+\i)},
                \n2 = {int(2+\j)}
              in
                (m-\j-\n1) -> (m-\n2-\n1);
            };
          };
        \end{tikzpicture}%
      \end{center}
      by the construction of $\mathcal{R}$. This shows, in particular, $\bm{p} \circ a^{|\bm{p}|} = \bm{p}$ and $\bm{p} \cdot a^{|\bm{p}|} = a^{|\bm{p}|}$. By a similar cross diagram, we obtain $\bm{p} \neq_{\mathcal{R}} \bm{q}$ (since $\bm{q} \circ a^{|\bm{p}|} = (\bm{q} \circ a^{|\bm{q}|}) (a^{|\bm{q}|} \circ a^{|\bm{p}| - |\bm{q}|}) = \bm{q} a^{|\bm{p}| - |\bm{q}|} \neq \bm{p} = \bm{p} \circ a^{|\bm{p}|}$).
      
      This time, there is no non-empty state sequence which acts as the identity and this means that $\mathscr{M}(\mathcal{R})$ is $\mathscr{S}(\mathcal{R})^\idGrp \simeq R^*$, which shows that $R^*$ is a complete automaton monoid. Alternatively, we could also add a new state $\id$ with the transitions $\{ \trans{\id}{a}{a}{\id} \mid a \in R \}$ to obtain the automaton $\mathcal{R}'$. This is again a complete \SAut and we have $\mathscr{S}(\mathcal{R}') = \mathscr{M}(\mathcal{R}') = R^*$, which shows that $R^*$ is also a complete automaton semigroup (in fact, we may use this construction to show that every complete automaton monoid is a complete automaton semigroup).
    \end{example}
    
    The alphabet size for the automaton constructed in \autoref{ex:freeSemigroups} equals the rank of the generated free semigroup. It turns out, however, that already a binary alphabet is sufficient to generate free semigroups (in fact, even free groups) of arbitrary rank.
    
    \begin{proposition}\label{prop:freeComputable}
      On input of a finite set $R$ with $|R| \geq 2$, one may compute a complete \SAut $\mathcal{R} = (R, \{ 0, 1 \}, \rho)$ (i.\,e.\ one with binary alphabet) with $\mathscr{S}(\mathcal{R}) = R^+$ and $\mathscr{M}(\mathcal{R}) = R^*$.
    \end{proposition}
    \begin{figure}%
      \begin{subfigure}{0.5\linewidth}
        \resizebox*{\textwidth}{!}{%
          \begin{tikzpicture}[auto, shorten >=1pt, >=latex, node distance=1.25cm]
            \node[state] (0) {$q_0$};
            \node[state, right=of 0] (1) {$q_1$};
            \node[right=of 1] (ld) {$\dots$};
            \node[state, right=of ld] (r) {$q_r$};
            \node[state, above=of r] (r+1) {$q_{r + 1}$};
            \node[anchor=base] at ($(r+1.base-|ld.base)$) (ud) {$\dots$};
            \node[state, anchor=base] at ($(r+1.base-|1.base)$) (2r) {$q_{2r}$};
            
            \draw[->] (0) edge[loop left] node {$0/1$} (0)
                          edge node[swap] {$1/0$} (1)
                      (1) edge node[align=center, swap] {$0/0$\\$1/1$} (ld)
                      (ld) edge node[align=center, swap] {$0/0$\\$1/1$} (r)
                      (1) edge node[align=center, swap] {$0/0$\\$1/1$} (ld)
                      (r) edge node[align=center, swap] {$0/0$\\$1/1$} (r+1)
                      (r+1) edge node[align=center, swap] {$0/0$\\$1/1$} (ud)
                      (ud) edge node[align=center, swap] {$0/0$\\$1/1$} (2r)
                      (2r) edge node {$0/1$} (1)
                           edge node[sloped] {$1/0$} (0)
            ;
          \end{tikzpicture}}
        \caption{The odd case.}\label{sfig:freeAutOdd}
      \end{subfigure}
      \begin{subfigure}{0.5\linewidth}
        \resizebox*{\textwidth}{!}{%
          \begin{tikzpicture}[auto, shorten >=1pt, >=latex, node distance=1.25cm]
            \node[state] (0) {$q_0$};
            \node[state, right=of 0] (1) {$q_1$};
            \node[right=of 1] (ld) {$\dots$};
            \node[state, right=of ld] (r) {$q_r$};
            \node[state, above=of r] (r+1) {$q_{r + 1}$};
            \node[anchor=base] at ($(r+1.base-|ld.base)$) (ud) {$\dots$};
            \node[state, anchor=base] at ($(r+1.base-|1.base)$) (2r) {$q_{2r}$};
            \node[state, anchor=base, ellipse] at ($(r+1.base-|0.base)$) (2r+1) {$q_{2r +1}$};
            
            \draw[->] (0) edge[loop left] node {$0/1$} (0)
                          edge node[swap] {$1/0$} (1)
                      (1) edge node[align=center, swap] {$0/0$\\$1/1$} (ld)
                      (ld) edge node[align=center, swap] {$0/0$\\$1/1$} (r)
                      (1) edge node[align=center, swap] {$0/0$\\$1/1$} (ld)
                      (r) edge node[align=center, swap] {$0/0$\\$1/1$} (r+1)
                      (r+1) edge node[align=center, swap] {$0/0$\\$1/1$} (ud)
                      (ud) edge node[align=center, swap] {$0/0$\\$1/1$} (2r)
                      (2r) edge node[align=center, swap] {$0/1$\\$1/0$} (2r+1)
                      (2r+1) edge node[sloped] {$0/1$} (1)
                             edge node[swap] {$1/0$} (0)
            ;
          \end{tikzpicture}}
        \caption{The even case.}\label{sfig:freeAutEven}
      \end{subfigure}
      \caption{Automata over binary alphabet generating free (semi)groups.}
    \end{figure}
    \begin{proof}
      For $|R| = 2$, we may simply use the construction from \autoref{ex:freeSemigroups}. For $|R| > 2$, we distinguish between $|R|$ being odd and $|R|$ being even.
      
      In the odd case, we assume $R = \{ q_0, \dots, q_{2r} \}$ for some $r \geq 1$ and let (compare to \autoref{sfig:freeAutOdd})
      \begin{align*}
        \rho ={}& \left\{ \trans{q_0}{0}{1}{q_0}, \trans{q_0}{1}{0}{q_1} \right\} \\
        {}\cup{}& \left\{ \trans{q_i}{0}{0}{q_{i + 1}}, \trans{q_i}{1}{1}{q_{i + 1}} \middle| 1 \leq i < 2r \right\} \\
        {}\cup{}& \left\{ \trans{q_{2r}}{0}{1}{q_{1}}, \trans{q_{2r}}{1}{0}{q_{0}} \right\} \text{.}
      \end{align*}
      This (bi-reversible) automaton indeed generates a free semigroup and a free monoid of rank $|R| = 2r + 1$ \cite[Theorem~4.10]{vorobets2010series} (in fact, it even generates a free group). However, the proof is rather involved.
      
      In the even case, we assume $R = \{ q_0, \dots, q_{2r + 1} \}$ for some $r \geq 1$ and let (compare to \autoref{sfig:freeAutEven})
      \begin{align*}
        \rho ={}& \left\{ \trans{q_0}{0}{1}{q_0}, \trans{q_0}{1}{0}{q_1} \right\} \\
        {}\cup{}& \left\{ \trans{q_i}{0}{0}{q_{i + 1}}, \trans{q_i}{1}{1}{q_{i + 1}} \middle| 1 \leq i < 2r \right\} \\
        {}\cup{}& \left\{ \trans{q_{2r}}{0}{1}{q_{2r + 1}}, \trans{q_{2r}}{1}{0}{q_{2r + 1}} \right\} \\
        {}\cup{}& \left\{ \trans{q_{2r + 1}}{0}{1}{q_{1}}, \trans{q_{2r + 1}}{1}{0}{q_{0}} \right\} \text{.}
      \end{align*}
      Again, this automaton generates a free semigroup and monoid (and, again, even a group) \cite[Theorem~4.9]{steinberg2011automata}.
    \end{proof}

    \paragraph*{Automaton Operations.}
    For the following, it will be convenient to introduce some automaton constructions.
    First, the \emph{union} of two automata $\mathcal{T}_1 = (Q_1, \Sigma_1, \delta_1)$ and $\mathcal{T}_2 = (Q_2, \Sigma_2, \delta_2)$ is the automaton
    \[
      \mathcal{T}_1 \cup \mathcal{T}_2 = (Q_1 \cup Q_2, \Sigma_1 \cup \Sigma_2, \delta_1 \cup \delta_2) \text{.}
    \]
    Note that, if $\mathcal{T}_1$ and $\mathcal{T}_2$ are both complete \SAuta with non-intersecting state sets ($Q_1 \cap Q_2 = \emptyset$) but a common alphabet $\Sigma_1 = \Sigma_2$, then their union $\mathcal{T}_1 \cup \mathcal{T}_2$ is also a complete \SAut (which allows us, for example, to consider the semigroup $\mathscr{S}(\mathcal{T}_1 \cup \mathcal{T}_2)$). Similarly, the union of two complete \SAuta with the same state set but disjoint alphabets is again a complete \SAut. This operation basically adds the transitions of $\mathcal{T}_2$ to the existing transitions of $\mathcal{T}_1$.
    
    Next, there is the \emph{composition} of two automata $\mathcal{T}_2 = (Q_2, \Sigma, \delta_2)$ and $\mathcal{T}_1 = (Q_1, \Sigma, \delta_1)$ over a common alphabet $\Sigma$, which is the automaton
    \begin{align*}
      \mathcal{T}_2 \circ \mathcal{T}_1 &= (Q_2 Q_1, \Sigma, \delta_2 \circ \delta_1)
    \intertext{with the transitions}
      \delta_2 \circ \delta_1 &= \left\{ \trans{p_2 p_1}{a}{c}{q_2 q_1} \middle| \exists b \in \Sigma : \trans{p_1}{a}{b}{q_1} \in \delta_1 \text{ and } \trans{p_2}{b}{c}{q_2} \in \delta_2 \right\}
    \end{align*}
    (where $Q_2 Q_1 = \{ q_2 q_1 \mid q_1 \in Q_1, q_2 \in Q_2 \}$ is the cartesian product of $Q_2$ and $Q_1$).
    If $\mathcal{T}_2$ and $\mathcal{T}_1$ are complete \SAuta, also their composition is.
    
    The \emph{$k$-th power} $\mathcal{T}^k$ of an automaton $\mathcal{T}$ is the $k$-fold composition of $\mathcal{T}$ with itself. Here, it is important to point out that the $k$-th power of an automaton is computable and that, if $\mathcal{T}$ is a complete \SAut (which means that $\mathcal{T}^k$ is also one), then the action of some $\bm{p} \in Q^*$ of length $|\bm{p}| = k$ seen as a state of $\mathcal{T}^k$ is the same as the action of $\bm{p}$ seen as a state sequence over $\mathcal{T}$. Since an analogous statement about the dual action also holds, the notations $\bm{p} \circ u$ and $\bm{p} \cdot u$ remain unambiguous. This also shows that we have $\mathscr{S}(\mathcal{T}) = \mathscr{S}(\mathcal{T} \cup \mathcal{T}^k)$ for all $k \geq 1$, which is usually used to ensure that any fixed state sequence $\bm{p} \in Q^+$ may be assumed to be congruent to a single state under $=_{\mathcal{T}}$ (i.\,e.\ equal in the semigroup or monoid; this is basically the same as considering $\bm{p}$ as an additional generator).
    
    Finally, it will sometimes be easier to consider the \emph{dual} of an automaton $\mathcal{T} = (Q, \Sigma, \delta)$. It is the automaton $\partial \mathcal{T} = (\Sigma, Q, \partial \delta)$ with
    \[
      \partial \delta = \left\{ \trans{a}{p}{q}{b} \middle| \trans{p}{a}{b}{q} \in \delta \right\}
    \]
    (i.\,e.\ we swap the roles of the states $Q$ and the letters $\Sigma$). Clearly, the dual of a complete \SAut is again a complete \SAut.
    
    The dual automaton can make it sometimes more accessible to understand how a letter is transformed by a state sequence: we just have to follow a path in the graphical representation of the dual automaton. For example, from \autoref{sfig:T2dual}, it is obvious that the only way for $\bm{p} \circ \alpha = \bm{q} \circ \beta$ to hold is for both of them to be equal to $f$. This is not immediately clear from the original transitions depicted in \autoref{sfig:T2}.

    \paragraph*{The Freeness Problem for Automaton Semigroups and Monoids.}
    To check whether a given automaton generates a free semigroup/monoid/group is an important open problem in the algorithmic theory of automaton structures \cite[7.2 b)]{grigorchuk2000automata}. Formally, the freeness problem for automaton semigroups is the problem \DecProblem{Semigroup Freeness}
    \gnewcommand{\SemigroupFreeness}{%
      \problem{
        a (complete) \SAut $\mathcal{T}$
      }{
        is $\mathscr{S}(\mathcal{T})$ a free semigroup?
      }\noindent
    }\SemigroupFreeness
    and the freeness problem for automaton monoids is the analogous problem \DecProblem{Monoid Freeness}
    \gnewcommand{\MonoidFreeness}{%
      \problem{
        a (complete) \SAut $\mathcal{T}$
      }{
        is $\mathscr{M}(\mathcal{T})$ a free monoid?
      }\noindent
    }\MonoidFreeness
    
    A very related problem is to check whether a given automaton is a free presentation, i.\,e.\ whether the automaton generates a free semigroup or monoid where the state set forms a basis. We will only consider the monoid case for this problem and allow one of the states to represent the neutral element. Let \DecProblem{Free Monoid Presentation} be the problem:
    \gnewcommand{\FreeMonoidPresentation}{%
      \problem{
        a (complete) \SAut $\mathcal{T} = (Q, \Sigma, \delta)$ with\newline
        a dedicated state $e \in Q$ acting as the identity%
      }{
        is $\mathscr{M}(\mathcal{T}) \simeq (Q \setminus \{ e \})^*$?
      }\noindent
    }\FreeMonoidPresentation
    Note that $\mathscr{M}(\mathcal{T})$ is isomorphic to $(Q \setminus \{ e \})^*$ if and only $[ q ]_\mathcal{T} \mapsto q$ for all $q \in Q \setminus \{ e \}$ and $[ e ]_\mathcal{T} \mapsto \varepsilon$ induces a well-defined isomorphism.
    
    \paragraph*{Adding Free Generators.}
    For our results, we will need to add new free generators to existing automaton semigroups $S$ computationally (in the sense that we do not change the behavior of existing state sequences but add a new state $q$ such that the new automaton generates the (semigroup) free product $S \star q^+$). More precisely, we will use the following statement, which follows from the construction used for \cite[Theorem~6]{brough2023automaton}/\cite[Corollary~7]{brough2023automaton}.
    \begin{proposition}\label{prop:freeProduct}
      We may compute:
      \par\vspace{0.125cm plus 0.1cm minus 0.05cm}\begin{tabularx}{\textwidth-2\parindent}{lX}%
        \textnormal{\textbf{Input:}}&
          two complete \SAuta $\mathcal{S}_1 = (P_1, \Sigma_1, \delta_1)$ and $\mathcal{S}_2 = (P_2, \Sigma_2, \delta_2)$ and\newline
          a function $f\colon \Sigma_2 \to \Sigma_1^+$ that is
          guaranteed to extend into a homomorphism $\mathscr{S}(\mathcal{S}_2) \to \mathscr{S}(\mathcal{S}_1)$
        \\%
        \textnormal{\textbf{Output:}}&
          a complete \SAut $\mathcal{T} = (Q, \Gamma, \delta)$ with $Q = P_1 \uplus P_2$ such that the identity on $Q$ extends into a well-defined isomorphism $\mathscr{S}(\mathcal{T}) \to \mathscr{S}(\mathcal{S}_1) \star \mathscr{S}(\mathcal{S}_2)$ (where the free product is that of semigroups).
        \\%
      \end{tabularx}\vspace{0.125cm plus 0.1cm minus 0.05cm}\par\noindent
      Furthermore, we may assume $|\Gamma| = 3 + 3|\Sigma_1| + 3|\Sigma_2|$ if no state sequence from $P_1^+ \uplus P_2^+$ acts like the identity.
    \end{proposition}
  \end{section}
  
  \begin{section}{The Freeness Problem for Automaton Semigroups and Monoids}\label{sct:semigroupFreeness}
    We reduce Post's Correspondence Problem\footnote{Please note that Post's original statement of the problem \cite{post1946variant} is equivalent to ours. In particular, we may assume that $\varphi(i)$ and $\psi(i)$ are non-empty for all $i \in I$.} \DecProblem{PCP}
    \problem[an alphabet $\Lambda$]{
      homomorphisms $\varphi, \psi: I = \{ 1, \dots, n \} \to \Lambda^+$
    }{
      $\exists \bm{i} \in I^+: \varphi(\bm{i}) = \psi(\bm{i})$?
    }\noindent
    to (the complement of) \DecProblem{Semigroup Freeness}. For this, we fix an instance $\varphi, \psi, I$ for \DecProblem{PCP}\footnote{%
      It is worth mentioning that there are results showing that \DecProblem{PCP} remains undecidable if one restricts the cardinality of $I$ and/or $\Lambda$ (notably, \cite{neary2015undecidability} restricts them to $|I| = 5$ and $|\Lambda| = 2$). Note that we may only allow non-empty entries for our construction, however.}
    over an alphabet $\Lambda$ and describe how to map it to an \SAut $\mathcal{T} = (Q, \Sigma, \delta)$ in such a way that $\mathcal{T}$ can be computed and the \DecProblem{PCP} instance has a solution if and only if $\mathscr{S}(\mathcal{T})$ is \textbf{not} a free semigroup.
    
    Starting from the free semigroup, we will construct $\mathcal{T}$ (in steps) such that the semigroup has a relation $\#_1 \bm{i} \#_1 =_{\mathcal{T}} \#_1 \bm{i} \#_2$ for $\bm{i} \in I^+$ if and only if $\bm{i}$ belongs to a \DecProblem{PCP} solution (if there is no solution, $\mathscr{S}(\mathcal{T})$ is free).
    Throughout this process, the reader may find it convenient to refer to \autoref{tbl:semigroupSymbols} for the various symbols we are going to use.
    
    The rough idea is to add an input symbol $\iota$ whose dual action turns $\bm{i} \#_1$ into $\varphi(\bm{i})$ and $\bm{i} \#_2$ into $\psi(\bm{i})$.
    But we also have to be careful not to introduce any unwanted relations and to keep the underlying free semigroup structure intact.
    
    Without loss of generality, we may assume $|I| = n \geq 1$, $|\Lambda| \geq 2$ and $I \cap \Lambda = \emptyset$. In the following, we let $L = \max\{ |\varphi(i)|, |\psi(i)| \mid i \in I \}$, $\hat{\Lambda} = \cup_{\ell = 1}^L \Lambda^\ell$, $R = \Lambda \cup I$ and $\hat{R} = \hat{\Lambda} \cup I$.
    
    Throughout this section, the reader may find it convenient to refer to \autoref{tbl:semigroupSymbols} for a summary of (most of) the symbols we define.
    
    \paragraph{Definition of the Automaton $\hat{\mathcal{R}}$.}
    First, we compute a complete \SAut $\hat{\mathcal{R}}$ with state set $\hat{R}$ generating the free semigroup over $R$:
    \begin{proposition}\label{prop:adjoinFreeGenerator}
      On input of $I$, $\Lambda$ and $L$, one can compute a complete \SAut $\hat{\mathcal{R}} = (\hat{R}, \Gamma, \rho)$ with state set $\hat{R} = \hat{\Lambda} \cup I$ (for $\hat{\Lambda} = \cup_{\ell = 1}^L \Lambda^\ell$), alphabet size $|\Gamma| = 15$ and $\mathscr{S}(\hat{\mathcal{R}}) \simeq R^+ = (\Lambda \cup I)^+$ (where the isomorphism is given by $\hat{\lambda} \mapsto \hat{\lambda}$ for all $\hat{\lambda} \in \hat{\Lambda}$ and $i \mapsto i$ for all $i \in I$).
    \end{proposition}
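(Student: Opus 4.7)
My plan is to construct $\hat{\mathcal{R}}$ directly by simulating, for each compound state $\hat{\lambda} = \lambda_1 \dots \lambda_\ell \in \hat\Lambda$, the action of the state sequence $\lambda_1 \dots \lambda_\ell$ inside the free-semigroup generator of \autoref{ex:freeSemigroups}. Concretely, I would first apply \autoref{fct:freeComputable} to $R = \Lambda \cup I$ to obtain $\mathcal{R} = (R, R, \rho_R)$ with $\rho_R = \{\trans{a}{b}{a}{b} \mid a, b \in R\}$ and $\mathscr{S}(\mathcal{R}) \simeq R^+$. Then, setting $\Gamma = \Lambda$, I would define the transitions of $\hat{\mathcal{R}}$ by the rules $\trans{i}{c}{i}{c}$ for every $i \in I$ and $c \in \Lambda$, and $\trans{\hat{\lambda}}{c}{\lambda_1}{\lambda_2 \dots \lambda_\ell c}$ for every $\hat{\lambda} = \lambda_1 \dots \lambda_\ell \in \hat\Lambda$ (with all $\lambda_i \in \Lambda$) and every $c \in \Lambda$. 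The crucial observation here is that $\lambda_2 \dots \lambda_\ell c \in \Lambda^\ell$ is again an element of $\hat\Lambda$ of length $\ell \leq L$, so the target state lies in $\hat R$ and $\hat{\mathcal{R}}$ is a well-defined complete \SAut{}. Computability of the whole construction is then immediate.

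Next, I would extend the assignments $\hat{\lambda} \mapsto \hat{\lambda}$ and $i \mapsto i$ to a semigroup homomorphism $\phi \colon \hat R^+ \to R^+$, interpreting each $\hat{\lambda} \in \hat\Lambda$ as its word in $\Lambda^+ \subseteq R^+$. A routine induction on $|\bm{p}|$, using the defining transitions, shows that for every $\bm{p} \in \hat R^+$ and every $u \in \Lambda^*$ we have $\bm{p} \circ u = \phi(\bm{p}) \circ u$, where the left action is interpreted in $\hat{\mathcal{R}}$ and the right in $\mathcal{R}$, together with the analogous compatibility statement for the dual action. Since $\phi$ is clearly surjective (every word in $R^+$ literally is also a state sequence in $\hat R^+$, as $R \subseteq \hat R$), the proposition reduces to showing that $\bm{p} =_{\hat{\mathcal{R}}} \bm{q}$ is equivalent to $\phi(\bm{p}) = \phi(\bm{q})$ as words in $R^+$.

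This last equivalence is the only step requiring genuine care and is the main obstacle: the subtlety is that the inputs of $\hat{\mathcal{R}}$ range only over $\Lambda^*$, not the full $R^*$, so I need to know that $\mathcal{R}$ already separates distinct state sequences using inputs from $\Lambda^*$ alone. For this I would recycle the argument from \autoref{ex:freeSemigroups}: if $\phi(\bm{p}) \neq \phi(\bm{q})$ in $R^+$ with (say) $|\phi(\bm{p})| \geq |\phi(\bm{q})|$, then $\phi(\bm{p}) \circ a^{|\phi(\bm{p})|} = \phi(\bm{p})$ and $\phi(\bm{q}) \circ a^{|\phi(\bm{p})|} = \phi(\bm{q}) a^{|\phi(\bm{p})| - |\phi(\bm{q})|}$ for every $a \in R$. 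Since $|\Lambda| \geq 2$, in the strict-inequality case I can pick $a \in \Lambda$ different from the final letter of $\phi(\bm{p})$ (guaranteeing the two right-hand sides disagree in their last letter), while in the equal-length case any $a \in \Lambda$ already distinguishes them because $\phi(\bm{q}) a^0 = \phi(\bm{q}) \neq \phi(\bm{p})$. Combined with the compatibility established in the previous paragraph, this yields $\mathscr{S}(\hat{\mathcal{R}}) \simeq R^+$ via the specified map.
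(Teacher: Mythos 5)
There is a genuine gap, and it sits exactly where the real difficulty of this proposition lies: the states from $I$. With your choice $\Gamma = \Lambda$, the transitions $\trans{i}{c}{i}{c}$ are not legal transitions of an automaton over $\Lambda$: the paper requires $\delta \subseteq Q \times \Sigma \times \Sigma \times \Sigma^{\vphantom{*}}\!\!\;\times Q$ up to notation, i.e.\ the output letter must lie in the alphabet, but here the output is $i \in I$ and $I \cap \Lambda = \emptyset$. This is not a cosmetic slip. If you repair it by enlarging the alphabet to $R = \Lambda \cup I$ (which is forced anyway, since your own separation argument needs $\phi(\bm{p}) \circ a^{|\phi(\bm{p})|} = \phi(\bm{p})$ to be an output \emph{word}, and $\phi(\bm{p})$ contains letters of $I$), then completeness forces you to define transitions with input $c \in I$ at every compound state $\hat{\lambda} = \lambda_1 \dots \lambda_\ell$ with $\ell \geq 2$; the queue behaviour you are simulating sends such a state to $\lambda_2 \dots \lambda_\ell c$, which is a mixed word in $\Lambda^{\ell-1} I$ and hence \textbf{not} an element of the prescribed state set $\hat{R} = \hat{\Lambda} \cup I$. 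In short, $\hat{R}$ is not closed under the dual action of the queue automaton over $R$, and no choice of output for the $i$-states within $\Lambda$ rescues this: making $i$ act like the identity or like some fixed $\lambda_0$ introduces the relations $i =_{\hat{\mathcal{R}}} \varepsilon$ or $i =_{\hat{\mathcal{R}}} \lambda_0$, destroying freeness.

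The paper avoids this entirely. Its proof only uses the queue construction over the alphabet $\Lambda$ (taking $\hat{\mathcal{R}}_1 = \bigcup_{\ell=1}^L \mathcal{R}_1^\ell$ with state set $\hat{\Lambda}$ and $\mathscr{S}(\hat{\mathcal{R}}_1) \simeq \Lambda^+$, which is essentially your $\hat{\Lambda}$-part and is fine), and then adjoins each $i \in I$ as a new free factor by invoking the free-product proposition from \cite{brough2023automaton} stated in \autoref{sec:preliminaries}, which adds a single new state and yields $\Lambda^+ \star I^+ = (\Lambda \cup I)^+$. That proposition is a non-trivial black box, and it is precisely the ingredient your direct construction tries to do without and cannot replace by the naive queue transitions. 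To fix your proof you would either have to invoke that free-product result as the paper does, or supply an explicit complete \SAut on the state set $\hat{\Lambda} \cup I$ realizing $\Lambda^+ \star I^+$, which is a genuinely different (and harder) construction than the one you describe.
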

    \begin{proof}
      First, we compute a complete \SAut $\mathcal{R}_1 = (\Lambda, \{ 0, 1 \}, \rho_1)$ generating the free semigroup $\Lambda^+$ (using \autoref{prop:freeComputable}). Note that only the empty state sequence can act like the identity in it.
      Then, we compute the first $L$-many powers $\mathcal{R}_1^1, \dots, \mathcal{R}_1^L$ and take their union. This results in $\hat{\mathcal{R}_1} = (\hat{\Lambda}, \{ 0, 1 \}, \hat{\rho}_1)$ for $\hat{\Lambda} = \cup_{\ell = 1}^L \Lambda^\ell$ (since neither the power nor the union construction changes the alphabet) with $\mathscr{S}(\hat{\mathcal{R}}_1) \simeq \Lambda^+$ (where an isomorphism is induced by $\hat{\Lambda} \ni \hat{\lambda} \mapsto \hat{\lambda} \in \Lambda^+$). We also still have that only the empty state sequence acts like the identity
      
      Next, we compute a complete \SAut $\mathcal{R}_2 = (I, \{ 0, 1 \}, \rho_2)$ that generates the free semigroup $I^+$ (again using \autoref{prop:freeComputable}; with only the empty state sequence acting like the identity).
      
      Finally, we use \autoref{prop:freeProduct} with $\hat{\mathcal{R}}_1$ and $\mathcal{R}_2$ (as well as $f(i) = \lambda_0$ for some $\lambda_0 \in \Lambda$ and all $i \in I$) to compute the sought automaton $\hat{\mathcal{R}} = (\hat{\Lambda} \uplus I, \Gamma, \rho)$ with $\mathscr{S}(\hat{\mathcal{R}}) = \mathscr{S}(\hat{\mathcal{R}}_1) \star \mathscr{S}(\mathcal{R}_2) \simeq \Lambda^+ \star I^+ = (\Lambda \cup I)^+$.
    \end{proof}

    The states of $\hat{\mathcal{R}}$ in $\hat{R}$ do not form a basis of the free semigroup. To simplify working with this fact, we make the following definition(s).
    \begin{definition}[natural projection]\label{def:naturalProjection}
      There is a natural projection $\pi: \hat{\Lambda}^* \to \Lambda^*$ where $\hat{\Lambda} = \bigcup_{\ell = 1}^L \Lambda^{\ell}$, which interprets a letter $\hat{\lambda} \in \hat{\Lambda}$ as the corresponding word over $\Lambda$. We extend this projection into a homomorphism $\pi: \hat{R}^* \to R^*$ by setting $\pi(i) = i$ for all $i \in I$.
      
      We say that two elements $\hat{\bm{r}}_1, \hat{\bm{r}}_2 \in \hat{R}^*$ are \emph{$R$-equivalent} and write $\hat{\bm{r}}_1 =_R \hat{\bm{r}}_2$ if they have the same image under $\pi$, i.\,e.\ we have $\hat{\bm{r}}_1 =_R \hat{\bm{r}}_2 \iff \pi(\hat{\bm{r}}_1) = \pi(\hat{\bm{r}}_2)$.
      
      Finally, we define $| \hat{\bm{r}} |_R$ for $\hat{\bm{r}} \in \hat{R}^*$ as the length of $\hat{\bm{r}}$ under $\pi$, i.\,e.\ we let $| \hat{\bm{r}} |_R = |\pi(\hat{\bm{r}})|$.
    \end{definition}
    \noindent{}Note that we have $\hat{\bm{r}}_1 =_R \hat{\bm{r}}_2$ if and only if $\hat{\bm{r}}_1 =_{\hat{\mathcal{R}}} \hat{\bm{r}}_2$ for all $\hat{\bm{r}}_1, \hat{\bm{r}}_2 \in \hat{R}^*$ since we have $\mathscr{S}(\hat{\mathcal{R}}) \simeq R^+$.
    
    \paragraph{Definition of the Automaton $\mathcal{S}$.}
    We use the automaton $\hat{\mathcal{R}} = (\hat{R}, \Gamma, \rho)$ as a building block for our target automaton $\mathcal{T} = (Q, \Sigma, \delta)$ for the reduction. We fix some arbitrary element $\lambda_\# \in \Lambda \subseteq \hat{R}$. To compute $\mathcal{S}$ from $\hat{\mathcal{R}}$, we duplicate the state $\lambda_\#$ twice and call these two copies $\#_1$ and $\#_2$. Formally, we have $\mathcal{S} = (Q, \Gamma, \sigma)$ where $Q = \hat{R} \uplus \{ \#_1, \#_2 \}$ for the new symbols $\#_1$ and $\#_2$ and
    \[
      \sigma = \rho \cup \{ \trans{\#_1}{c}{d}{q}, \trans{\#_2}{c}{d}{q} \mid \trans{\lambda_\#}{c}{d}{q} \in \rho \} \text{.}
    \]
    
    Note that the new states $\#_1$ and $\#_2$ act in the same way as $\lambda_\#$. Accordingly, we have $\mathscr{S}(\mathcal{S}) = \mathscr{S}(\hat{\mathcal{R}}) \simeq R^+$.
    
    \paragraph{Definition of the Automaton $\mathcal{T}$.}
    The next step is to fix another $\lambda_R \in \Lambda \subseteq Q$ arbitrarily but different to $\lambda_\#$. Then, we add two new letters $a, b$ to the alphabet and the transitions depicted in \autoref{sfig:T1transitions}. This way, we obtain the complete \SAut $\mathcal{T}_1 = (Q, \Gamma \cup \{ a, b \}, \delta_1)$. Note, in particular, that we have the transitions $\trans{\lambda_{\#}}{a}{a}{\lambda_R}$ and the self-loops $\trans{\lambda_R^\ell}{a}{a}{\lambda_R^\ell}$ for all $1 \leq \ell \leq L$ in $\mathcal{T}_1$.
    \begin{figure}\centering
      \subcaptionbox{new transitions\label{sfig:T1transitions}}{%
        \begin{tikzpicture}[auto, shorten >=1pt, >=latex, baseline=(s.base)]
          \node[state] (r) {$\hat{r}$};
          \node[state, right=of r, dashed] (lambda) {$\lambda_R^{|\hat{r}|_R}$};
          \node[state, below=1.5cm of r.center, anchor=center] (s) {$\#_x$};
          \node[state, below=1.5cm of lambda.center, anchor=center, dashed] (lambdas) {$\lambda_\#$};
          
          \path[->] (r) edge[loop left] node {$b / b$} (r)
                        edge node {$a/a$} (lambda)
                    (s) edge[loop left] node {$b / b$} (s)
                        edge node[swap] {$a/b$} (lambdas)
                        ;
        \end{tikzpicture}
      }
      \subcaptionbox{dual transitions\label{sfig:T1dual}}{%
        \begin{tikzpicture}[auto, shorten >=1pt, >=latex, baseline=(a.base)]
          \node[state] (a) {$a$};
          \node[state, right=1.5cm of a] (b) {$b$};
          
          \path[->] (a) edge[loop left] node {$\hat{r}/\lambda_R^{|\hat{r}|_R}$} (a)
                        edge node {$\#_x/\lambda_\#$} (b)
                    (b) edge[loop right] node {$\id_Q$} (b)
          ;
        \end{tikzpicture}
      }
      \caption{The new transitions for $\mathcal{T}_1$. The transitions exist for all $\hat{r} \in \hat{R}$ and $x \in \{ 1, 2 \}$. The transitions for dashed states are implicitly defined and $\id_Q$ indicates that we have $q/q$ transitions for all $q \in Q$.}\label{fig:T1transitions}
    \end{figure}
    
    The idea for this part is that we may factorize a state sequence $\bm{q} \in Q^*$ into blocks from $\hat{R}^*$ and symbols $\#_1$ and $\#_2$ and then remove the blocks one after another using the letter $a$. We will explain this precisely later in \autoref{fct:removeBlocks}.
    
    Finally, we let $\mathcal{T} = (Q, \Sigma, \delta) = \mathcal{T}_1 \cup \mathcal{T}_2$ where $\mathcal{T}_2$ is given by \autoref{sfig:T2}.
    \begin{figure}
      \begin{subfigure}{\linewidth}\centering
        \begin{tikzpicture}[auto, shorten >=1pt, >=latex, baseline=(e.base)]
          \node[state, dashed] (e) {$\lambda_\#$};
          \node[state, left=3cm of e] (s1) {$\#_1$};
          \node[state, right=3cm of e] (s2) {$\#_2$};
          
          \path[->] (s1) edge node[align=center] {$\alpha/f_\alpha,\; f_\alpha/f_\alpha$\\$\beta/f_\beta,\; f_\beta/f_\beta$\\$\alpha'/f,\; \beta'/f,\; f/f$}
                      node[below] {$\iota/\alpha$} (e)
                    (s2) edge node[align=center, above] {$\alpha/f_\alpha,\; f_\alpha/f_\alpha$\\$\beta/f_\beta,\; f_\beta/f_\beta$\\$\alpha'/f,\; \beta'/f,\; f/f$}
                    node[below] {$\iota/\beta$} (e)
          ;
        \end{tikzpicture}\\[0.5cm]
        \begin{tikzpicture}[auto, shorten >=1pt, >=latex, baseline=(e.base)]
          \node[state] (i) {$i$};
          \node[state, above left=0.5cm and 2cm of i, dashed] (phi) {$\varphi(i)$};
          \node[state, below left=0.5cm and 2cm of i, dashed] (psi) {$\psi(i)$};
          \node[state, right=3cm of i, dashed] (e) {$\lambda_R$};
          
          \path[->] (i) edge node {$\id_{\{ \iota, f_\alpha, f_\beta, f \}}$} (e)
                        edge node[swap] {$\alpha/\alpha',\; \alpha'/\alpha'$} (phi)
                        edge node {$\beta/\beta',\; \beta'/\beta'$} (psi)
          ;
        \end{tikzpicture}\\
        \begin{tikzpicture}[auto, shorten >=1pt, >=latex, baseline=(e.base)]
          \node[state] (l) {$\hat{\lambda}$};
          \node[state, right=3cm of l, dashed] (e) {$\lambda_R^{|\hat{r}|_R}$};
          
          \path[->] (l) edge node {$\id_{\{ \iota, f_\alpha, f_\beta, f \}}$}
                             node[below, align=center] {$\alpha/f_\alpha,\; \alpha'/f_\alpha$\\ $\beta/f_\beta,\; \beta'/f_\beta$} (e)
          ;
        \end{tikzpicture}
        \caption{Schematic depiction of $\mathcal{T}_2$ over the (new) alphabet $\{ \iota, \alpha, \alpha', f_\alpha, \beta, \beta', f_\beta, f \}$. The transitions exist for all $i \in I$ and $\hat{\lambda} \in \hat{\Lambda}$.}\label{sfig:T2}
      \end{subfigure}\\[0.5cm]
      \begin{subfigure}{\linewidth}\centering
        \begin{tikzpicture}[auto, shorten >=1pt, >=latex, baseline=(i.base)]
          \node[state] (i) {$\iota$};
          \node[state, above right=0.5cm and 2cm of i] (a) {$\alpha$};
          \node[state, right=2cm of a] (a') {$\alpha'$};
          \node[state, above=of a'] (fa) {$f_\alpha$};
          \node[state, below right=0.5cm and 2cm of i] (b) {$\beta$};
          \node[state, right=2cm of b] (b') {$\beta'$};
          \node[state, below=of b'] (fb) {$f_\beta$};
          \node[state, below right=0.5cm and 2cm of a'] (f) {$f$};
          
          \path[->] (i) edge[loop left] node {$\hat{r}/\lambda_R^{|\hat{r}|_R}$} (i)
                        edge node {$\#_1/\lambda_\#$} (a)
                        edge node[swap] {$\#_2/\lambda_\#$} (b)
                    (a) edge node[swap] {$i/\varphi(i)$} (a')
                        edge node {$\begin{aligned}
                            \#_x &/ \lambda_\# \\
                            \hat{\lambda} &/ \lambda_R^{|\hat{\lambda}|_R}
                          \end{aligned}$} (fa)
                    (a') edge[loop right] node {$i / \varphi(i)$} (a')
                         edge node[swap] {$\hat{\lambda}/\lambda_R^{|\hat{\lambda}|_R}$} (fa)
                         edge node[swap] {$\#_x/\lambda_\#$} (f)
                    (fa) edge[loop right] node {$\begin{aligned}
                        \#_x &/ \lambda_\# \\
                        \hat{r} &/ \lambda_R^{|\hat{r}|_R}
                      \end{aligned}$} (fa)
                    (b) edge node {$i/\psi(i)$} (b')
                        edge node[swap] {$\begin{aligned}
                            \hat{\lambda} &/ \lambda_R^{|\hat{\lambda}|_R} \\
                            \#_x &/ \lambda_\# \\
                          \end{aligned}$} (fb)
                    (b') edge[loop right] node {$i / \psi(i)$} (b')
                         edge node {$\hat{\lambda}/\lambda_R^{|\hat{\lambda}|_R}$} (fb)
                         edge node {$\#_x/\lambda_\#$} (f)
                    (fb) edge[loop right] node {$\begin{aligned}
                        \hat{r} &/ \lambda_R^{|\hat{r}|_R} \\
                        \#_x &/ \lambda_\#
                      \end{aligned}$} (fb)
                    (f) edge[loop right] node {$\begin{aligned}
                        \#_x &/ \lambda_\# \\
                        \hat{r} &/ \lambda_R^{|\hat{r}|_R}
                      \end{aligned}$} (f)
          ;
        \end{tikzpicture}
        \caption{The dual $\partial \mathcal{T}_2$. The transitions exist for all $i \in I, \hat{r} \in \hat{R}, \hat{\lambda} \in \hat{\Lambda}$ and $x \in \{ 1, 2 \}$.}\label{sfig:T2dual}
      \end{subfigure}
      \caption{The automaton $\mathcal{T}_2$ and its dual.}\label{fig:T2andDual}
    \end{figure}
    Note, in particular, that we have $\varphi(i), \psi(i) \in \bigcup_{\ell = 1}^L \Lambda^\ell = \hat{\Lambda} \subseteq \hat{R}$.
    
    In other words, we obtain $\mathcal{T}$ from $\mathcal{T}_1$ by adding new symbols to the alphabet resulting in $\Sigma = \Gamma \cup \{ a, b \} \cup \{ \iota, \alpha, \alpha', f_\alpha, \beta, \beta', f_\beta, f \}$ and adding the transitions depicted in \autoref{fig:T2andDual} for all $i \in I$ and $\hat{\lambda} \in \hat{\Lambda}$.
    Clearly, $\mathcal{T}$ can be computed and is a complete \SAut.
    \begin{remark}
      We have $|\Gamma| = 15$ (by \autoref{prop:adjoinFreeGenerator}) and, thus, $|\Sigma| = 25$ for the alphabet size of $\mathcal{T}$.
    \end{remark}
    
    \paragraph*{The Role of $a$ and $b$ in $\mathcal{T}$.}
    As already mentioned above, we may use the letter $a$ to remove a block from a certain factorization of a state sequence (the letter $b$ is simply used to ignore remaining parts of the factorization). We will apply this multiple times below and, therefore, state this as its own fact here.
    \begin{fact}\label{fct:removeBlocks}
      Let $\bm{p} \in Q^*$ and factorize it as
      \[
        \bm{p} = (\bm{p}_s \#_{x_s}) \dots (\bm{p}_1 \#_{x_1}) \, \bm{p}_0
      \]
      for $\bm{p}_0, \dots, \bm{p}_s \in \hat{R}^*$ and $x_1, \dots, x_s \in \{ 1, 2 \}$.
      
      Then, for any $1 \leq \mu \leq s$, we have (in $\mathcal{T}$):
      \[
        \bm{p} \cdot a^{\mu} = (\bm{p}_s \#_{x_s}) \ldots (\bm{p}_{\mu + 1} \#_{x_{\mu + 1}}) \, \bm{p}_\mu \, \lambda_{\#} \lambda_R^{\mu - 1 + | \bm{p}_{\mu - 1} \ldots \bm{p}_0 |_R}
      \]
    \end{fact}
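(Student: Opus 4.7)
I plan to prove the statement by induction on $\mu \geq 1$, reading the equality in the semigroup $\mathscr{S}(\mathcal{T})$ (i.e., modulo $=_\mathcal{T}$); this is necessary because the literal state sequence produced by $\bm{p} \cdot a^\mu$ always has length $|\bm{p}|$, whereas the given right-hand side has a different length in general.

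For the base case $\mu = 1$, I trace the cross diagram for input $a$ from right to left through $\bm{p}$ using \autoref{fig:T1transitions}: the block $\bm{p}_0$ lets $a$ pass unchanged via $\hat{r} \xrightarrow{a/a} \lambda_R^{|\hat{r}|_R}$; the separator $\#_{x_1}$ uses $\#_{x_1} \xrightarrow{a/b} \lambda_\#$ to turn the letter into $b$; all remaining states then self-loop via $b/b$. This yields the literal state sequence $(\bm{p}_s \#_{x_s}) \dots (\bm{p}_2 \#_{x_2}) \bm{p}_1 \lambda_\# \bm{p}_0'$, where $\bm{p}_0'$ is obtained from $\bm{p}_0$ by replacing each $\hat{r}$ with the single state $\lambda_R^{|\hat{r}|_R} \in \hat{\Lambda}$. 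To reach the claimed form, I argue $\bm{p}_0' =_\mathcal{T} \lambda_R^{|\bm{p}_0|_R}$, where the right side is a sequence of $|\bm{p}_0|_R$ individual $\lambda_R$-states: both lie entirely in $\hat{\Lambda}^*$ and share the same $\pi$-image, so they agree on inputs over $\Gamma$ by $\mathscr{S}(\hat{\mathcal{R}}) \simeq R^+$, while an inspection of \autoref{sfig:T2dual} shows that every $\hat{\lambda} \in \hat{\Lambda}$ acts on each letter of $\Sigma \setminus \Gamma$ in a way that depends only on $|\hat{\lambda}|_R$ and the input letter. Consequently, any $\hat{\Lambda}^*$-regrouping with the same $\pi$-image induces the same action on all of $\Sigma^*$.

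For the inductive step, I use $\bm{p} \cdot a^{\mu+1} = (\bm{p} \cdot a^\mu) \cdot a$ and note that $=_\mathcal{T}$ is preserved by the dual action along a fixed letter (cancelling the common prefix in $\bm{p} \circ uv = (\bm{p} \circ u)(\bm{p} \cdot u \circ v)$). Substituting the IH representative, the rightmost $\hat{R}^*$-block of the new factorization becomes $\bm{p}_\mu \lambda_\# \lambda_R^{\mu - 1 + |\bm{p}_{\mu-1} \dots \bm{p}_0|_R}$, so applying the base case appends a $\lambda_R$-block of length $|\bm{p}_\mu|_R + 1 + (\mu - 1 + |\bm{p}_{\mu-1} \dots \bm{p}_0|_R) = \mu + |\bm{p}_\mu \dots \bm{p}_0|_R$, matching the claim. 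The main technical obstacle is the base-case equivalence $\bm{p}_0' =_\mathcal{T} \lambda_R^{|\bm{p}_0|_R}$, which requires carefully exploiting the uniform behavior of $\hat{\Lambda}$-states on the $\mathcal{T}_2$-alphabet; once this is in hand, the rest amounts to bookkeeping on $|\cdot|_R$-lengths.
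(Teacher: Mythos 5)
Your overall strategy---peel off the rightmost block with a single letter $a$ and induct on $\mu$---matches the paper's, your exponent bookkeeping in the inductive step is correct, and you have put your finger on a real notational wrinkle: read completely literally, $\bm{p} \cdot a^{\mu}$ has length $|\bm{p}|$ while the displayed right-hand side generally does not. The paper resolves this differently, however: its proof is a purely literal cross-diagram computation in which $\lambda_R^{|\bm{p}_0|_R}$ stands for the state sequence actually produced, namely the concatenation of the single states $\lambda_R^{|\hat{r}|_R} \in \hat{\Lambda}$ contributed by the individual states $\hat{r}$ of $\bm{p}_0$; all downstream uses of the fact (\autoref{lem:relationIsCompatible}, \autoref{lem:sharpRelationImpliesSolution}) only consume the positions of the symbols $\#_x$ and the $=_R$-classes of the blocks, for which the grouping is irrelevant. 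Your decision to instead read the identity modulo $=_{\mathcal{T}}$ is workable for those downstream uses (they can be run on your explicit representatives), but it forces you to prove the extra equivalence $\bm{p}_0' =_{\mathcal{T}} \lambda_R^{|\bm{p}_0|_R}$, and that is where your argument has a genuine gap.

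Concretely, your justification of that equivalence does not close. First, the letter $b$ is not covered by \autoref{sfig:T2dual}, and on $b$ a state $\hat{\lambda}$ transitions to $\hat{\lambda}$ itself, so its behaviour does not depend only on $|\hat{\lambda}|_R$ (harmless here because it is a self-loop, but your stated claim is false as written). Second, and more seriously, the step \enquote{consequently \dots the same action on all of $\Sigma^*$} requires the pair of sequences to remain a pair of the same kind after reading \emph{every} letter of a mixed input word. After a letter $c \in \Gamma$, the successors $\bm{p}_0' \cdot c$ and $\lambda_R^{|\bm{p}_0|_R} \cdot c$ are only guaranteed to be $=_R$-equivalent elements of $\hat{R}^*$; nothing in the paper guarantees they stay inside $\hat{\Lambda}^*$, since $\hat{\mathcal{R}}$ comes from the black-boxed free-product construction behind \autoref{prop:adjoinFreeGenerator}, whose dual action on $\Gamma$ may well move $\hat{\Lambda}$-states to $I$-states. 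So your invariant is not preserved, and what you actually need is the full statement that $=_R$ implies $=_{\mathcal{T}}$, i.e.\ \autoref{lem:REquivalentImpliesTEquivalent} --- whose proof is independent of this fact (so there is no circularity) but which requires a genuinely more delicate case analysis for the letters $\alpha, \alpha', \beta, \beta'$, where states from $I$ behave very differently from states in $\hat{\Lambda}$. Either invoke that lemma explicitly, or follow the paper and avoid $=_{\mathcal{T}}$ altogether by recording the literal output of the cross diagram.
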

    \begin{proof}
      Note that we are only making any statement for $1 \leq s$.
      Write $\bm{p} = \bm{p}' \#_{x_1} \bm{p}_0$ for $\bm{p}' = (\bm{p}_s \#_{x_s}) \dots (\bm{p}_2 \#_{x_2}) \, \bm{p}_1$. Using an index shift by setting $\bm{p}_{\nu}' = \bm{p}_{\nu}$ for $0 \leq \nu < s$ and $x_{\nu}' = x_{\nu + 1}$ for $0 < \nu < s$, we obtain
      \[
        \bm{p}' = (\bm{p}'_{s - 1} \#_{x_{s - 1}}) \dots (\bm{p}'_1 \#_{x_1}) \, \bm{p}'_0
      \]
      and may apply induction and re-substitution (for $\mu > 1$) to obtain
      \begin{align*}
        \bm{p}' \cdot a^{\mu - 1} &= (\bm{p}'_{s - 1} \#_{x_{s - 1}}) \ldots (\bm{p}'_{\mu} \#_{x'_{\mu}}) \, \bm{p}'_{\mu - 1} \, \lambda_{\#} \lambda_R^{\mu - 2 + | \bm{p}'_{\mu - 2} \ldots \bm{p}'_0 |_R} \\
        &= (\bm{p}_s \#_{x_s}) \ldots (\bm{p}_{\mu + 1} \#_{x_{\mu + 1}}) \, \bm{p}_\mu \, \lambda_{\#} \lambda_R^{\mu - 2 + | \bm{p}_{\mu - 1} \ldots \bm{p}_1 |_R} = \bm{p}'' \text{.}
      \end{align*}
      This yields the cross diagram (compare to the transitions in \autoref{fig:T1transitions})
      \begin{center}
        \begin{tikzpicture}[baseline=(m-4-1.base)]
          \matrix[matrix of math nodes, text height=1.25ex, text depth=0.25ex, ampersand replacement=\&] (m) {
              \& a \& \& a^{\mu - 1} \\
            \bm{p}_0 \& \& \lambda_R^{|\bm{p}_0|_R} \& \& \lambda_R^{|\bm{p}_0|_R} \\
              \& a \& \& a^{\mu - 1} \\
            \#_{x_1} \& \& \lambda_{\#} \& \& \lambda_R \\
              \& b \& \& a^{\mu - 1} \\
            \bm{p}' \& \& \bm{p}' \& \& \bm{p}'' \\
              \& b \& \& {} \\
          };
          \foreach \i in {1,3,5} {
            \foreach \j in {1,3} {
              \draw[->] let
                \n1 = {int(2+\i)},
                \n2 = {int(1+\j)}
              in
                (m-\i-\n2) -> (m-\n1-\n2);
              \draw[->] let
                \n1 = {int(1+\i)},
                \n2 = {int(2+\j)}
              in
                (m-\n1-\j) -> (m-\n1-\n2);
            };
          };
        \end{tikzpicture}
      \end{center}
      where the crosses in the first column prove the case $\mu = 1$ immediately and the crosses in the second column only exists for $\mu > 1$. In that case, observe that, as desired,
      \begin{align*}
        \bm{p} \cdot a^{\mu} = \bm{p}'' \lambda_R^{1 + |\bm{p}_0|_R} &= (\bm{p}_s \#_{x_s}) \ldots (\bm{p}_{\mu + 1} \#_{x_{\mu + 1}}) \, \bm{p}_\mu \, \lambda_{\#} \lambda_R^{\mu - 2 + | \bm{p}_{\mu - 1} \ldots \bm{p}_1 |_R} \lambda_R^{1 + |\bm{p}_0|_R} \\
        &= (\bm{p}_s \#_{x_s}) \ldots (\bm{p}_{\mu + 1} \#_{x_{\mu + 1}}) \, \bm{p}_\mu \, \lambda_{\#} \lambda_R^{\mu - 1 + | \bm{p}_{\mu - 1} \ldots \bm{p}_1 \bm{p}_0 |_R}
        \qedhere
      \end{align*}
    \end{proof}
    
    \begin{table}[p]\centering
      \begin{tabular}{r@{}l}\toprule
        \textbf{symbol}\phantom{${}:{}$}& \textbf{usage} \\\midrule
        $\Lambda : {}$ & PCP base alphabet, $|\Lambda| \geq 2$\\
        $I : {}$ & PCP index set, $|I| \geq 1$, $I \cap \Lambda = \emptyset$\\
        $\varphi, \psi : {}$ & $I \to \Lambda^+$ PCP homomorphisms\\
        $L = {}$ & $\max\{ |\varphi(i)|, |\psi(i)| \mid i \in I \}$\\
        $\hat{\Lambda} = {}$ & $\bigcup_{\ell = 1}^L \Lambda^\ell$\\
        $R = {}$ & $\Lambda \cup I$\\
        $\hat{R} = {}$ & $\hat{\Lambda} \cup I : {}$state set of $\mathcal{R}$\\
        $\hat{\mathcal{R}} = {}$ & $(\hat{R}, \Gamma, \rho) : {}$ complete \SAut generating $R^+ = (\Lambda \cup I)^+$\\
        $\rho : {}$ & transition set of $\hat{\mathcal{R}}$\\
        $\Gamma : {} $ & alphabet of $\hat{\mathcal{R}}$ and $\mathcal{S}$, $|\Gamma| = 15$\\
        $\pi : {}$ & $\hat{\Lambda}^* \to \Lambda$, $\hat{R}^* \to R^*$ natural projection with $\pi(i) = i$ for all $i \in I$\\
        $|\hat{\bm{r}}|_R : {}$ & length of $\pi(\hat{\bm{r}})$ for $\hat{\bm{r}} \in \hat{R}^*$\\
        $\lambda_{\#} \in {}$ & $\Lambda \subseteq \hat{R} : {}$arbitrarily chosen element\\
        $\#_1, \#_2 : {}$ & copies of $\lambda_{\#}$\\
        $\mathcal{S} = {}$ & $(Q, \Gamma, \sigma) : {}$complete \SAut, extension of $\hat{\mathcal{R}}$ still generating $R^+$\\
        $Q = {}$ & $\hat{R} \uplus \{ \#_1, \#_2 \} : {}$state set of $\mathcal{S}$ and $\mathcal{T}$\\
        $\sigma : {}$ & transition set of $\mathcal{S}$\\
        $\lambda_R \in {}$ & $\Lambda \subseteq Q : {}$arbitrarily chosen element with $\lambda_R \neq \lambda_{\#}$\\
        $a, b \not\in \Gamma : {}$ & new letters for $\mathcal{T}_1$\\
        $\mathcal{T}_1 = {}$ & $(Q, \Gamma \uplus \{ a, b \}, \delta_1) : {}$complete \SAut, extension of $\mathcal{S}$, see \autoref{fig:T1transitions}\\
        $\delta_1 : {}$ & transition set of $\mathcal{T}_1$, see \autoref{fig:T1transitions}\\
        $\mathcal{T} = {}$ & $(Q, \Sigma, \delta) = \mathcal{T}_1 \cup \mathcal{T}_2 : {}$complete \SAut, result of the reduction\\
        $\mathcal{T}_2 : {}$ & complete \SAut with new transitions for $\mathcal{T}$, see \autoref{fig:T2andDual}\\
        $\Sigma = {}$ & $\Gamma \uplus \{ a, b \} \uplus \{ \iota, \alpha, \alpha', f_\alpha, \beta, \beta', f_\beta, f \} : {}$alphabet of $\mathcal{T}$, $|\Sigma| = 25$\\
        $\pi_{\#} : {}$ & $Q^* \to \{ \#_1, \#_2 \}^*$ homomorphism with $\pi_{\#}(\#_x) = \#_x$ but $\pi_{\#}(\hat{r}) = \varepsilon$ for $\hat{r} \in \hat{R}$\\
        $\pi' :{}$ & $Q^* \to (R \cup \{ \#_1, \#_2 \})^*$ homomorphism extending $\pi$ with $\pi'(\#_x) = \#_x$ for $x \in \{ 1, 2 \}$\\\bottomrule
      \end{tabular}
      \caption{Various symbols in the order of their definition.}\label{tbl:semigroupSymbols}
    \end{table}

    \paragraph{Correctness.}
    It remains to show that the \DecProblem{PCP} instance $\varphi, \psi, I$ has a solution if and only if $\mathscr{S}(\mathcal{T})$ is \textbf{not} a free semigroup. We start with the (easier) \enquote{only if} direction and show that the additional transitions from $\mathcal{T}_1$ and $\mathcal{T}_2$ do not affect the subautomaton $\hat{\mathcal{R}}$: if two state sequences are $R$-equivalent, they are also equal with respect to $\mathcal{T}$.
    \begin{lemma}\label{lem:REquivalentImpliesTEquivalent}
      Let $\hat{\bm{r}}_1, \hat{\bm{r}}_2 \in \hat{R}^*$ with $\hat{\bm{r}}_1 =_R \hat{\bm{r}}_2$. Then, we have $\hat{\bm{r}}_1 =_{\mathcal{T}} \hat{\bm{r}}_2$.
    \end{lemma}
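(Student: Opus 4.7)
The plan is to prove the strengthened invariant that for every $u \in \Sigma^*$ and all $\hat{\bm{r}}_1, \hat{\bm{r}}_2 \in \hat{R}^*$ with $\hat{\bm{r}}_1 =_R \hat{\bm{r}}_2$, we have both $\hat{\bm{r}}_1 \circ u = \hat{\bm{r}}_2 \circ u$ and $\hat{\bm{r}}_1 \cdot u =_R \hat{\bm{r}}_2 \cdot u$, with both dual images remaining in $\hat{R}^*$. The strengthening is essentially forced by the interaction $\bm{p} \circ cu' = (\bm{p} \circ c)(\bm{p} \cdot c \circ u')$: the induction on $|u|$ can only invoke its hypothesis on the suffix $u'$ if $R$-equivalence is preserved by the dual action. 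As a preliminary, I would inspect every transition of $\mathcal{T}$ originating from a state $\hat{r} \in \hat{R}$ (those in $\rho$, those added in \autoref{fig:T1transitions}, and those added in \autoref{fig:T2andDual}) to check that none of them leads into $\{\#_1, \#_2\}$, so that $\hat{R}^*$ is indeed closed under the dual action and the condition $\hat{\bm{r}} \cdot u \in \hat{R}^*$ is maintained throughout the induction.

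The induction step reduces to verifying the two invariants for a single letter $c \in \Sigma$, which I split by alphabet. For $c \in \Gamma$ the only transitions available from $\hat{R}$ come from $\rho$, so the action of $\hat{\bm{r}}$ coincides with its action in $\hat{\mathcal{R}}$, and the invariants reduce directly to the isomorphism from \autoref{prop:adjoinFreeGenerator}. For $c = b$ every $\hat{r} \in \hat{R}$ acts by a self-loop, and for $c \in \{a, \iota, f_\alpha, f_\beta, f\}$ every $\hat{r}$ outputs $c$ unchanged and transitions to $\lambda_R^{|\hat{r}|_R}$; in either regime, both $\hat{\bm{r}} \circ c$ and $\hat{\bm{r}} \cdot c$ are visibly functions of $\pi(\hat{\bm{r}})$ alone (using that $R$-equivalence implies equal $R$-length).

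The genuinely non-trivial case, and the main obstacle I anticipate, is $c \in \{\alpha, \alpha', \beta, \beta'\}$; by symmetry it suffices to handle $c = \alpha$. Writing $\hat{\bm{r}} = \hat{r}_k \dots \hat{r}_1$ with $\hat{r}_1$ rightmost and letting $j_0$ denote the least index (if any) with $\hat{r}_{j_0} \in \hat{\Lambda}$, I would track the cross diagram from right to left: while $\hat{r}_j \in I$ the carried letter remains in $\{\alpha, \alpha'\}$ and the $j$-th state turns into $\varphi(\hat{r}_j) \in \hat{\Lambda}$; once position $j_0$ is reached the carried letter flips to $f_\alpha$ and stays there forever, and each subsequent state turns into $\lambda_R^{|\hat{r}_j|_R}$. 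This yields $\hat{\bm{r}} \circ \alpha = \alpha'$ if $\pi(\hat{\bm{r}}) \in I^*$ and $f_\alpha$ otherwise, and $\pi(\hat{\bm{r}} \cdot \alpha) = \lambda_R^{|w_1|} \varphi(\bm{i}_0)$, where $\pi(\hat{\bm{r}}) = w_1 \bm{i}_0$ is the unique factorization whose suffix $\bm{i}_0 \in I^*$ is of maximal length. The subtle point---and the reason the argument really depends on $=_R$ rather than on any structural agreement between $\hat{\bm{r}}_1$ and $\hat{\bm{r}}_2$---is that although the position $j_0$ may differ between the two representatives, the factorization $w_1 \bm{i}_0$ of the common image $\pi(\hat{\bm{r}}_1) = \pi(\hat{\bm{r}}_2)$ does not, so both invariants are satisfied in this last case as well.
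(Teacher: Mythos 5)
Your proposal is correct and follows essentially the same route as the paper: an induction on $|u|$ with a case split on the first letter $c$, reducing $c \in \Gamma$ to $\mathscr{S}(\hat{\mathcal{R}}) \simeq R^+$, dispatching $c \in \{a, b, \iota, f_\alpha, f_\beta, f\}$ via the uniform transitions, and handling $c \in \{\alpha, \alpha', \beta, \beta'\}$ by the maximal $I$-suffix factorization, which is determined by the common projection. Your explicit strengthened invariant (that the dual action preserves $R$-equivalence and keeps the sequences in $\hat{R}^*$) is exactly what the paper uses implicitly when it invokes the induction hypothesis after each single-letter step, so this is a matter of presentation rather than a different argument.
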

    \begin{proof}
      We can only have $\hat{\bm{r}}_1 =_R \hat{\bm{r}}_2 = \varepsilon$ if $\hat{\bm{r}}_1 = \hat{\bm{r}}_2 = \varepsilon$, which trivially implies $\hat{\bm{r}}_1 =_{\mathcal{T}} \hat{\bm{r}}_2$.
      
      Therefore, assume $\hat{\bm{r}}_1, \hat{\bm{r}}_2 \neq_R \varepsilon$. We show $\hat{\bm{r}}_1 \circ u = \hat{\bm{r}}_2 \circ u$ for all $u \in \Sigma^*$ by induction on $u$. For $u = \varepsilon$, there is nothing to show. Thus, write $u = c u'$ for some $c \in \Sigma = \Gamma \cup  \{ a, b \} \cup \{ \iota, \alpha, \alpha', f_\alpha, \beta, \beta', f_\beta, f \}$ and $u' \in \Sigma^*$. For $c \in \Gamma$ (the alphabet of $\hat{\mathcal{R}}$), recall that we have $\mathscr{S}(\hat{\mathcal{R}}) \simeq R^+$. Therefore, $\hat{\bm{r}}_1 =_R \hat{\bm{r}}_2$ implies $\hat{\bm{r}}_1 =_{\hat{\mathcal{R}}} \hat{\bm{r}}_2$ and we have the cross diagrams
      \vspace*{-\baselineskip}
      \begin{center}
        \begin{tikzpicture}[baseline=(m-2-1.base)]
          \matrix[matrix of math nodes, text height=1.25ex, text depth=0.25ex, ampersand replacement=\&] (m) {
                     \& c \& \\
            \hat{\bm{r}}_1 \&   \& \hat{\bm{r}}_1' \\
                     \& d \& \\
          };
          \foreach \i in {1} {
            \foreach \j in {1} {
              \draw[->] let
                \n1 = {int(2+\i)},
                \n2 = {int(1+\j)}
              in
                (m-\i-\n2) -> (m-\n1-\n2);
              \draw[->] let
                \n1 = {int(1+\i)},
                \n2 = {int(2+\j)}
              in
                (m-\n1-\j) -> (m-\n1-\n2);
            };
          };
        \end{tikzpicture}%
        and
        \begin{tikzpicture}[baseline=(m-2-1.base)]
          \matrix[matrix of math nodes, text height=1.25ex, text depth=0.25ex, ampersand replacement=\&] (m) {
                     \& c \& \\
            \hat{\bm{r}}_2 \&   \& \hat{\bm{r}}_2' \\
                     \& d \& \\
          };
          \foreach \i in {1} {
            \foreach \j in {1} {
              \draw[->] let
                \n1 = {int(2+\i)},
                \n2 = {int(1+\j)}
              in
                (m-\i-\n2) -> (m-\n1-\n2);
              \draw[->] let
                \n1 = {int(1+\i)},
                \n2 = {int(2+\j)}
              in
                (m-\n1-\j) -> (m-\n1-\n2);
            };
          };
        \end{tikzpicture}%
      \end{center}
      in $\hat{\mathcal{R}}$ for some $d \in \Gamma$ and $\hat{\bm{r}}_1', \hat{\bm{r}}_2' \in \hat{R}^+$ with $\hat{\bm{r}}_1' =_{\hat{\mathcal{R}}} \hat{\bm{r}}_2'$ and, equivalently, $\hat{\bm{r}}_1' =_R \hat{\bm{r}}_2'$. Since $\hat{\mathcal{R}}$ is a subautomaton of $\mathcal{T}$, we have the same cross diagrams in $\mathcal{T}$ and are done by induction.
      
      For $c \in \{ a \} \cup \{ \iota, f_\alpha, f_\beta, f \}$, we have
      \begin{center}
        \begin{tikzpicture}[baseline=(m-2-1.base)]
          \matrix[matrix of math nodes, text height=1.25ex, text depth=0.25ex, ampersand replacement=\&] (m) {
            \& a/\iota/f_\alpha/f_\beta/f \& \\
            \hat{\bm{r}}_1 \&   \& \lambda_R^{|\hat{\bm{r}}_1|_R} \\
            \& a/\iota/f_\alpha/f_\beta/f \& \\
          };
          \foreach \i in {1} {
            \foreach \j in {1} {
              \draw[->] let
                \n1 = {int(2+\i)},
                \n2 = {int(1+\j)}
              in
                (m-\i-\n2) -> (m-\n1-\n2);
              \draw[->] let
                \n1 = {int(1+\i)},
                \n2 = {int(2+\j)}
              in
                (m-\n1-\j) -> (m-\n1-\n2);
            };
          };
        \end{tikzpicture}%
        and
        \begin{tikzpicture}[baseline=(m-2-1.base)]
          \matrix[matrix of math nodes, text height=1.25ex, text depth=0.25ex, ampersand replacement=\&] (m) {
            \& a/\iota/f_\alpha/f_\beta/f \& \\
            \hat{\bm{r}}_2 \&   \& \lambda_R^{|\hat{\bm{r}}_2|_R} \\
            \& a/\iota/f_\alpha/f_\beta/f \& \\
          };
          \foreach \i in {1} {
            \foreach \j in {1} {
              \draw[->] let
                \n1 = {int(2+\i)},
                \n2 = {int(1+\j)}
              in
                (m-\i-\n2) -> (m-\n1-\n2);
              \draw[->] let
                \n1 = {int(1+\i)},
                \n2 = {int(2+\j)}
              in
                (m-\n1-\j) -> (m-\n1-\n2);
            };
          };
        \end{tikzpicture}%
      \end{center}
      and are done since $\hat{\bm{r}}_1 =_R \hat{\bm{r}}_2$ implies $|\hat{\bm{r}}_1|_R = |\hat{\bm{r}}_2|_R$. For $c = b$, we have
      \begin{center}
        \begin{tikzpicture}[baseline=(m-2-1.base)]
          \matrix[matrix of math nodes, text height=1.25ex, text depth=0.25ex, ampersand replacement=\&] (m) {
            \& b \& \\
            \hat{\bm{r}}_1 \&   \& \hat{\bm{r}}_1 \\
            \& b \& \\
          };
          \foreach \i in {1} {
            \foreach \j in {1} {
              \draw[->] let
                \n1 = {int(2+\i)},
                \n2 = {int(1+\j)}
              in
                (m-\i-\n2) -> (m-\n1-\n2);
              \draw[->] let
                \n1 = {int(1+\i)},
                \n2 = {int(2+\j)}
              in
                (m-\n1-\j) -> (m-\n1-\n2);
            };
          };
        \end{tikzpicture}%
        and
        \begin{tikzpicture}[baseline=(m-2-1.base)]
          \matrix[matrix of math nodes, text height=1.25ex, text depth=0.25ex, ampersand replacement=\&] (m) {
            \& b \& \\
            \hat{\bm{r}}_2 \&   \& \hat{\bm{r}}_2 \\
            \& b \& \\
          };
          \foreach \i in {1} {
            \foreach \j in {1} {
              \draw[->] let
                \n1 = {int(2+\i)},
                \n2 = {int(1+\j)}
              in
                (m-\i-\n2) -> (m-\n1-\n2);
              \draw[->] let
                \n1 = {int(1+\i)},
                \n2 = {int(2+\j)}
              in
                (m-\n1-\j) -> (m-\n1-\n2);
            };
          };
        \end{tikzpicture}%
      \end{center}
      and are done by induction.
      
      The remaining cases are $c \in \{ \alpha, \alpha', \beta, \beta' \}$. For these, we factorize $\hat{\bm{r}}_1 = \hat{\bm{s}}_1 \hat{\lambda}_1 \bm{i}_1$ with $\bm{i}_1 \in I^*$ maximal, $\hat{\lambda}_1 \in \hat{\Lambda} \cup \{ \varepsilon \}$ and $\hat{\bm{s}}_1 \in \hat{R}^*$ with $\lambda_1 = \varepsilon \implies \hat{\bm{s}}_1 = \varepsilon$. Analogously, we factorize $\hat{\bm{r}}_2 = \hat{\bm{s}}_2 \hat{\lambda}_2 \bm{i}_2$. Observe that, since we have $\hat{\bm{r}}_1 =_R \hat{\bm{r}}_2$, we must have $\bm{i}_1 = \bm{i}_2 = \bm{i}$, $\hat{\bm{s}}_1 \hat{\lambda}_1 =_R \hat{\bm{s}}_2 \hat{\lambda}_2$ and $\hat{\lambda}_1 = \varepsilon \iff \hat{\lambda}_2 = \varepsilon$. This yields the cross diagrams
      \begin{center}
        \begin{tikzpicture}[baseline=(m-4-1.base)]
          \matrix[matrix of math nodes, text height=2.25ex, text depth=0.25ex, ampersand replacement=\&] (m) {
              \& \alpha\textcolor{lightgray}{'} \& \\
            \bm{i} \&   \& \varphi(\bm{i}) \\
              \& \alpha\textcolor{lightgray}{'} \& \\
            \hat{\lambda}_1 \& \& \lambda_R^{|\hat{\lambda}_1|_R} \\
              \& f_\alpha \& \\
            \hat{\bm{s}}_1 \& \& \lambda_R^{|\hat{\bm{s}}_1|_R} \\
              \& f_\alpha \& \\
          };
          \foreach \i in {1,3,5} {
            \foreach \j in {1} {
              \draw[->] let
                \n1 = {int(2+\i)},
                \n2 = {int(1+\j)}
              in
                (m-\i-\n2) -> (m-\n1-\n2);
              \draw[->] let
                \n1 = {int(1+\i)},
                \n2 = {int(2+\j)}
              in
                (m-\n1-\j) -> (m-\n1-\n2);
            };
          };
          
          \path[fill=gray, opacity=0.2, rounded corners] (m-4-3.north -| m-4-1.west) rectangle (m-7-2.south -| m-6-3.east);
        \end{tikzpicture}%
        and
        \begin{tikzpicture}[baseline=(m-4-1.base)]
          \matrix[matrix of math nodes, text height=2.25ex, text depth=0.25ex, ampersand replacement=\&] (m) {
              \& \alpha\textcolor{lightgray}{'} \& \\
            \bm{i} \&   \& \varphi(\bm{i}) \\
              \& \alpha\textcolor{lightgray}{'} \& \\
            \hat{\lambda}_2 \& \& \lambda_R^{|\hat{\lambda}_2|_R} \\
              \& f_\alpha \& \\
            \hat{\bm{s}}_2 \& \& \lambda_R^{|\hat{\bm{s}}_2|_R} \\
              \& f_\alpha \& \\
          };
          \foreach \i in {1,3,5} {
            \foreach \j in {1} {
              \draw[->] let
                \n1 = {int(2+\i)},
                \n2 = {int(1+\j)}
              in
                (m-\i-\n2) -> (m-\n1-\n2);
              \draw[->] let
                \n1 = {int(1+\i)},
                \n2 = {int(2+\j)}
              in
                (m-\n1-\j) -> (m-\n1-\n2);
            };
          };
          
          \path[fill=gray, opacity=0.2, rounded corners] (m-4-3.north -| m-4-1.west) rectangle (m-7-2.south -| m-6-3.east);
        \end{tikzpicture}%
      \end{center}
      where the shaded parts only exist if $\hat{\lambda}_1, \hat{\lambda}_2 \neq \varepsilon$ and where we have $\alpha'$ after applying $\bm{i}$ if $\bm{i} \neq \varepsilon$. In both diagrams, we have the same state sequence on the right hand side (because of $\hat{\bm{s}}_1 \hat{\lambda}_1 =_R \hat{\bm{s}}_2 \hat{\lambda}_2$) and, thus, we are done. The case $c \in \{ \beta, \beta' \}$ is analogous with $\psi$ instead of $\varphi$.
    \end{proof}
    
    Finally, we show that a solution for the \DecProblem{PCP} instance implies a proper relation in the semigroup generated by $\mathcal{T}$ and, thus, that it is not free.
    \begin{lemma}\label{lem:solutionImpliesRelation}
      If $\bm{i} \in I^+$ is a solution for the \DecProblem{PCP} instance, then we have 
      \[
        \#_1 \bm{i} \#_1 =_{\mathcal{T}} \#_1 \bm{i} \#_2 \text{.}
      \]
    \end{lemma}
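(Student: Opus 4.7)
The plan is to verify $(\#_1 \bm{i} \#_1) \circ u = (\#_1 \bm{i} \#_2) \circ u$ for every $u \in \Sigma^*$ by strong induction on $|u|$, exploiting the identity $\bm{r} \circ cv = (\bm{r} \circ c)(\bm{r} \cdot c \circ v)$ to reduce the task to a case analysis on the first letter $c \in \Sigma$. For each $c$, I will show that the single-letter outputs $(\#_1 \bm{i} \#_1) \circ c$ and $(\#_1 \bm{i} \#_2) \circ c$ agree and that the residuals $(\#_1 \bm{i} \#_1) \cdot c$ and $(\#_1 \bm{i} \#_2) \cdot c$ are either literally equal or $=_{\mathcal{T}}$-equivalent\,---\,except in the single case $c = b$, where the residuals coincide with the original sequences and the induction hypothesis closes the gap on the strictly shorter suffix $v$.

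The routine cases cover every $c \neq \iota$. For $c \in \Gamma$, the states $\#_1$ and $\#_2$ were defined to mimic $\lambda_{\#}$, so the rightmost state is transformed identically in both sequences; for $c = a$ and for $c \in \{ \alpha, \alpha', \beta, \beta', f_\alpha, f_\beta, f \}$, inspection of the transitions in \autoref{fig:T1transitions} and \autoref{sfig:T2} shows that $\#_1$ and $\#_2$ emit the same letter and both become $\lambda_{\#}$, after which the rest of the processing is identical, so the whole residual state sequences coincide. The case $c = b$ is handled by the self-loops in $\mathcal{T}_1$: every state involved stays put, both outputs equal $b$, and the remaining task $(\#_1 \bm{i} \#_1) \circ v = (\#_1 \bm{i} \#_2) \circ v$ is immediately covered by the induction hypothesis since $|v| < |u|$.

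The central case is $c = \iota$, which is where the PCP solution enters. Tracing the cross diagram for $(\#_1 \bm{i} \#_1) \circ \iota$ with $\bm{i} = i_1 \ldots i_k$, the rightmost $\#_1$ sends $\iota$ to $\alpha$ and becomes $\lambda_{\#}$; the rightmost $i_k$ then reads $\alpha$, emits $\alpha'$ and becomes $\varphi(i_k) \in \hat{\Lambda}$; each subsequent $i_j$ now reads the propagated $\alpha'$ through the transition $\alpha'/\alpha'$ from $i_j$ to $\varphi(i_j)$, so it likewise emits $\alpha'$ and becomes $\varphi(i_j)$; finally, the leftmost $\#_1$ reads $\alpha'$, emits $f$ and becomes $\lambda_{\#}$. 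The analogous computation for $(\#_1 \bm{i} \#_2) \circ \iota$ runs through $\beta, \beta', \psi$ and again produces output $f$. Thus the outputs agree, and the residuals $(\#_1 \bm{i} \#_1) \cdot \iota = \lambda_{\#} \varphi(i_1) \ldots \varphi(i_k) \lambda_{\#}$ and $(\#_1 \bm{i} \#_2) \cdot \iota = \lambda_{\#} \psi(i_1) \ldots \psi(i_k) \lambda_{\#}$ both lie in $\hat{R}^*$; their $\pi$-images are $\lambda_{\#} \varphi(\bm{i}) \lambda_{\#}$ and $\lambda_{\#} \psi(\bm{i}) \lambda_{\#}$, which coincide exactly because $\bm{i}$ is a PCP solution. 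Hence the residuals are $=_R$-equivalent, and \autoref{lem:REquivalentImpliesTEquivalent} upgrades this to $=_{\mathcal{T}}$, finishing the $\iota$-case.

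The main obstacle is careful bookkeeping inside the cross diagram for $c = \iota$, specifically ensuring that only the rightmost $i$-state encounters the unprimed letter $\alpha$ (respectively $\beta$) and that the transitions $\alpha'/\alpha'$ to $\varphi(i)$ (respectively $\beta'/\beta'$ to $\psi(i)$) correctly propagate the primed letter through the entire $\bm{i}$-block until the leftmost $\#_1$ collapses everything to the common output $f$. Beyond this one careful computation, the proof reduces to a mechanical comparison of transitions and a single invocation of \autoref{lem:REquivalentImpliesTEquivalent}.
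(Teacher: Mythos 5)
Your proposal is correct and follows essentially the same route as the paper: induction on the length of the input word with a case analysis on its first letter, where all cases except $c=\iota$ and $c=b$ give literally equal residuals, $c=b$ is closed by the induction hypothesis, and $c=\iota$ reduces via the cross diagram to the $R$-equivalence $\lambda_\#\varphi(\bm{i})\lambda_\# =_R \lambda_\#\psi(\bm{i})\lambda_\#$, upgraded to $=_{\mathcal{T}}$ by \autoref{lem:REquivalentImpliesTEquivalent}. The bookkeeping you flag (only the rightmost index state sees the unprimed $\alpha$/$\beta$, the primed letter then propagates through the block) matches the paper's diagrams exactly.
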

    \begin{proof}
      We show $\#_1 \bm{i} \#_1 \circ u = \#_1 \bm{i} \#_2 \circ u$ for all $u \in \Sigma^*$. For $u = \varepsilon$, there is nothing to show. So, let $u = c u'$ for some $c \in \Sigma = \Gamma \cup  \{ a, b \} \cup \{ \iota, \alpha, \alpha', f_\alpha, \beta, \beta', f_\beta, f \}$ and $u' \in \Sigma^*$. For $c \in \Gamma$ (the alphabet of $\hat{\mathcal{R}}$ and $\mathcal{S}$), we have the cross diagram
      \begin{center}
        \begin{tikzpicture}[baseline=(m-4-1.base)]
          \matrix[matrix of math nodes, text height=1.25ex, text depth=0.25ex] (m) {
                   & c   &      \\
              \#_x &     & \lambda_\# \cdot c \\
                   & c'  &      \\
            \bm{i} &     & \bm{i} \cdot c'    \\
                   & d   &      \\
              \#_1 &     & \lambda_\# \cdot d \\
                   & d'  &      \\
          };
          \foreach \i in {1,3,5} {
            \foreach \j in {1} {
              \draw[->] let
                \n1 = {int(2+\i)},
                \n2 = {int(1+\j)}
              in
                (m-\i-\n2) -> (m-\n1-\n2);
              \draw[->] let
                \n1 = {int(1+\i)},
                \n2 = {int(2+\j)}
              in
                (m-\n1-\j) -> (m-\n1-\n2);
            };
          };
        \end{tikzpicture}
      \end{center}
      for both, $x = 1$ and $x = 2$ with the same $c', d, d' \in \Gamma$. This shows $\#_1 \bm{i} \#_1 \circ cu' = \#_1 \bm{i} \#_2 \circ cu'$ for all $c \in \Gamma$ and $u' \in \Sigma^*$. The cases $c \in \{ a \} \cup \{ \alpha, \beta, \alpha', \beta', f_\alpha, f_\beta, f \}$ are similar; they are depicted in \autoref{fig:solutionImplesRelation}. The case $c = b$, requires induction but is still similar; it is depicted in \autoref{sfig:solutionImpliesRelationB}. Finally, the case $c = \iota$ is the most interesting one. Writing $\bm{i} = i_K \dots i_2 i_1$ for $i_1, \dots, i_K \in I$, we obtain
      \begin{center}
        \begin{tikzpicture}[baseline=(m-6-1.base)]
          \matrix[matrix of math nodes, text height=1.25ex, text depth=0.25ex] (m) {
                   & \iota   &      \\
              \#_1 &         & \lambda_\#   \\
                   & \alpha  &      \\
               i_1 &         & \varphi(i_1) \\
                   & \alpha' &      \\
               i_2 &         & \varphi(i_2) \\
                   & \alpha' &      \\
              \vdots &       & \vdots \\
                   & \alpha' &      \\
              i_K &       & \varphi(i_K) \\
                   & \alpha' &      \\
              \#_1 &         & \lambda_\#   \\
                   & f       &      \\
          };
          \foreach \i in {1,3,5,9,11} {
            \foreach \j in {1} {
              \draw[->] let
                \n1 = {int(2+\i)},
                \n2 = {int(1+\j)}
              in
                (m-\i-\n2) -> (m-\n1-\n2);
              \draw[->] let
                \n1 = {int(1+\i)},
                \n2 = {int(2+\j)}
              in
                (m-\n1-\j) -> (m-\n1-\n2);
            };
          };
        \end{tikzpicture}
        and
        \begin{tikzpicture}[baseline=(m-6-1.base)]
          \matrix[matrix of math nodes, text height=1.25ex, text depth=0.25ex] (m) {
                 & \iota  &      \\
            \#_2 &        & \lambda_\# \\
                 & \beta  &      \\
             i_1 &        & \psi(i_1) \\
                 & \beta' &      \\
             i_2 &        & \psi(i_2) \\
                 & \beta' &      \\
            \vdots &      & \vdots \\
                 & \beta' &      \\
            i_K &      & \psi(i_K) \\
                 & \beta' &      \\
            \#_1 &        & \lambda_\# \\
                 & f      &      \\
          };
          \foreach \i in {1,3,5,9,11} {
            \foreach \j in {1} {
              \draw[->] let
                \n1 = {int(2+\i)},
                \n2 = {int(1+\j)}
              in
                (m-\i-\n2) -> (m-\n1-\n2);
              \draw[->] let
                \n1 = {int(1+\i)},
                \n2 = {int(2+\j)}
              in
                (m-\n1-\j) -> (m-\n1-\n2);
            };
          };
        \end{tikzpicture}.
      \end{center}
      Since $\bm{i} = i_K \dots i_2 i_1$ is a solution, we have $\varphi(i_K) \dots \varphi(i_2) \varphi(i_1) =_R \psi(i_K) \dots \psi(i_2) \psi(i_1)$. Thus, \autoref{lem:REquivalentImpliesTEquivalent} implies $\lambda_\# \varphi(i_K) \dots \varphi(i_2) \varphi(i_1) \lambda_\# =_{\mathcal{T}} \lambda_\# \psi(i_K) \dots \psi(i_2) \psi(i_1) \lambda_\#$ and we are done.
    \end{proof}
    \begin{figure}[b]\centering
      \subcaptionbox{$c \in \{ \alpha, \beta \}$\label{sfig:solutionImpliesRelationAlphaBeta}}{%
        \begin{tikzpicture}[baseline=(m-4-1.base)]
          \matrix[matrix of math nodes, text height=1.25ex, text depth=0.25ex, ampersand replacement=\&] (m) {
                 \& \alpha/\beta \&      \\
            \#_x \&              \& \lambda_\#    \\
                 \& f_{\alpha}/f_{\beta} \& \\
           \bm{i}\&   \& \lambda_R^{|\bm{i}|} \\
                 \& f_\alpha/f_\beta \&      \\
            \#_1 \&   \& \lambda_\#    \\
                 \& f_\alpha/f_\beta \&      \\
          };
          \foreach \i in {1,3,5} {
            \foreach \j in {1} {
              \draw[->] let
                \n1 = {int(2+\i)},
                \n2 = {int(1+\j)}
              in
                (m-\i-\n2) -> (m-\n1-\n2);
              \draw[->] let
                \n1 = {int(1+\i)},
                \n2 = {int(2+\j)}
              in
                (m-\n1-\j) -> (m-\n1-\n2);
            };
          };
        \end{tikzpicture}%
      }
      \subcaptionbox{$c \in \{ \alpha', \beta' \}$\label{sfig:solutionImpliesRelationAlphaBetaPrime}}{%
        \begin{tikzpicture}[baseline=(m-4-1.base)]
          \matrix[matrix of math nodes, text height=1.25ex, text depth=0.25ex, ampersand replacement=\&] (m) {
               \& \alpha'/\beta' \&      \\
          \#_x \&                \& \lambda_\# \\
               \& f \& \\
            \bm{i}\&   \& \lambda_R^{|\bm{i}|} \\
               \& f \&      \\
            \#_1 \&   \& \lambda_\#    \\
               \& f \&      \\
          };
          \foreach \i in {1,3,5} {
            \foreach \j in {1} {
              \draw[->] let
                \n1 = {int(2+\i)},
                \n2 = {int(1+\j)}
              in
                (m-\i-\n2) -> (m-\n1-\n2);
              \draw[->] let
                \n1 = {int(1+\i)},
                \n2 = {int(2+\j)}
              in
                (m-\n1-\j) -> (m-\n1-\n2);
            };
          };
        \end{tikzpicture}%
      }
      \subcaptionbox{$c \in \{ f_\alpha, f_\beta, f \}$\label{sfig:solutionImpliesRelationF}}{%
        \begin{tikzpicture}[baseline=(m-4-1.base)]
          \matrix[matrix of math nodes, text height=1.25ex, text depth=0.25ex, ampersand replacement=\&] (m) {
              \& f_\alpha/f_\beta/f \&      \\
          \#_x \&              \& \lambda_\# \\
              \& f_\alpha/f_\beta/f \& \\
            \bm{i}\&   \& \lambda_R^{|\bm{i}|} \\
              \& f_\alpha/f_\beta/f \&      \\
            \#_1 \&   \& \lambda_\#    \\
              \& f_\alpha/f_\beta/f \&      \\
          };
          \foreach \i in {1,3,5} {
            \foreach \j in {1} {
              \draw[->] let
                \n1 = {int(2+\i)},
                \n2 = {int(1+\j)}
              in
                (m-\i-\n2) -> (m-\n1-\n2);
              \draw[->] let
                \n1 = {int(1+\i)},
                \n2 = {int(2+\j)}
              in
                (m-\n1-\j) -> (m-\n1-\n2);
            };
          };
        \end{tikzpicture}%
      }
      \subcaptionbox{$c = a$\label{sfig:solutionImpliesRelationA}}{%
        \begin{tikzpicture}[baseline=(m-4-1.base)]
          \matrix[matrix of math nodes, text height=1.25ex, text depth=0.25ex, ampersand replacement=\&] (m) {
              \& a \&           \\
          \#_x \& \& \lambda_\# \\
              \& b \& \\
            \bm{i}\&   \& \bm{i} \\
              \& b \&      \\
            \#_1 \&   \& \#_1    \\
              \& b \&      \\
          };
          \foreach \i in {1,3,5} {
            \foreach \j in {1} {
              \draw[->] let
                \n1 = {int(2+\i)},
                \n2 = {int(1+\j)}
              in
                (m-\i-\n2) -> (m-\n1-\n2);
              \draw[->] let
                \n1 = {int(1+\i)},
                \n2 = {int(2+\j)}
              in
                (m-\n1-\j) -> (m-\n1-\n2);
            };
          };
        \end{tikzpicture}%
      }
      \subcaptionbox{$c = b$\label{sfig:solutionImpliesRelationB}}{%
        \begin{tikzpicture}[baseline=(m-4-1.base)]
          \matrix[matrix of math nodes, text height=1.25ex, text depth=0.25ex, ampersand replacement=\&] (m) {
              \& b \&      \\
          \#_x \& \& \#_x    \\
              \& b \& \\
            \bm{i}\&   \& \bm{i} \\
              \& b \&      \\
            \#_1 \&   \& \#_1    \\
              \& b \&      \\
          };
          \foreach \i in {1,3,5} {
            \foreach \j in {1} {
              \draw[->] let
                \n1 = {int(2+\i)},
                \n2 = {int(1+\j)}
              in
                (m-\i-\n2) -> (m-\n1-\n2);
              \draw[->] let
                \n1 = {int(1+\i)},
                \n2 = {int(2+\j)}
              in
                (m-\n1-\j) -> (m-\n1-\n2);
            };
          };
        \end{tikzpicture}%
      }
      \caption{Various cases for $c \in \Sigma$. The cross diagrams hold for $x \in \{ 1, 2 \}$.}\label{fig:solutionImplesRelation}
    \end{figure}
    
    \begin{proposition}\label{prop:solutionImpliesNotFree}
      If the \DecProblem{PCP} instance has a solution, $\mathscr{S}(\mathcal{T})$ is not (left) cancellative and, thus, not a free semigroup.
    \end{proposition}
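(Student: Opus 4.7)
The plan is to use Fact~\ref{fct:freeIsCancellative} to reduce the claim of non-freeness to failure of left cancellativity, and to use Lemma~\ref{lem:solutionImpliesRelation} to produce the needed relation. Concretely, for a PCP solution $\bm{i} \in I^+$, the lemma gives $\#_1 \bm{i} \#_1 =_{\mathcal{T}} \#_1 \bm{i} \#_2$. Setting $s = \#_1$, $t = \bm{i}\#_1$ and $t' = \bm{i}\#_2$, the triple $st =_{\mathcal{T}} st'$ already holds, so it remains to show that $t \neq_{\mathcal{T}} t'$, i.\,e.\ that $\bm{i}\#_1$ and $\bm{i}\#_2$ act differently on some input word.

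To separate $\bm{i}\#_1$ from $\bm{i}\#_2$, I would apply the single letter $\iota$. Reading off the transitions from \autoref{sfig:T2} (or, equivalently, the dual in \autoref{sfig:T2dual}), we have $\#_1 \circ \iota = \alpha$ and $\#_1 \cdot \iota = \lambda_\#$, whereas $\#_2 \circ \iota = \beta$ and $\#_2 \cdot \iota = \lambda_\#$. Writing $\bm{i} = i_K \dots i_1$ and using that each state $i \in I$ satisfies $i \circ \alpha = \alpha'$ and $i \circ \alpha' = \alpha'$ (and analogously for $\beta, \beta'$), the letter $\alpha$ is propagated through $\bm{i}$ producing the stable output $\alpha'$, and likewise $\beta$ propagates to $\beta'$. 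Combining these into a cross diagram, we obtain $\bm{i}\#_1 \circ \iota = \alpha'$ and $\bm{i}\#_2 \circ \iota = \beta'$, which are distinct letters of $\Sigma$. Hence $\bm{i}\#_1 \neq_{\mathcal{T}} \bm{i}\#_2$.

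Putting the two parts together, $\mathscr{S}(\mathcal{T})$ satisfies $[\#_1]_\mathcal{T} [\bm{i}\#_1]_\mathcal{T} = [\#_1]_\mathcal{T} [\bm{i}\#_2]_\mathcal{T}$ while $[\bm{i}\#_1]_\mathcal{T} \neq [\bm{i}\#_2]_\mathcal{T}$, so it is not left cancellative. By \autoref{fct:freeIsCancellative}, this rules out freeness. I do not expect a real obstacle here: the only thing to verify carefully is that no transition in $\mathcal{T}_2$ can mix the $\alpha$-family with the $\beta$-family of letters, which is immediate from the schematic depiction, so the two outputs $\alpha'$ and $\beta'$ genuinely stay apart.
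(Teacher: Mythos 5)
Your proof is correct and follows essentially the same route as the paper: invoke \autoref{lem:solutionImpliesRelation} for the relation $\#_1 \bm{i} \#_1 =_{\mathcal{T}} \#_1 \bm{i} \#_2$, contradict left cancellativity, and conclude non-freeness via \autoref{fct:freeIsCancellative}. The only (immaterial) difference is the choice of cancelled factor: the paper cancels the whole prefix $\#_1\bm{i}$ and separates $\#_1$ from $\#_2$ directly via $\#_1 \circ \iota = \alpha \neq \beta = \#_2 \circ \iota$, whereas you cancel only $\#_1$ and separate $\bm{i}\#_1$ from $\bm{i}\#_2$ by tracing $\iota$ through $\bm{i}$ to obtain the distinct outputs $\alpha'$ and $\beta'$.
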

    \begin{proof}
      Let $\bm{i} \in I^+$ be a solution for the \DecProblem{PCP} instance but suppose that $\mathscr{S}(\mathcal{T})$ is left cancellative. By \autoref{lem:solutionImpliesRelation}, we have the relation $\#_1 \bm{i} \#_1 =_{\mathcal{T}} \#_1 \bm{i} \#_2$, which implies $\#_1 =_{\mathcal{T}} \#_2$. This, however, constitutes a contradiction since we have $\#_1 \circ \iota = \alpha$ but $\#_2 \circ \iota = \beta$.
      
      Finally, if $\mathscr{S}(\mathcal{T})$ is not (left) cancellative it cannot be free (as every free semigroup is (left and right) cancellative, see \autoref{fct:freeIsCancellative}).
    \end{proof}

    \paragraph{Converse Direction.}
    To show that the \DecProblem{PCP} instance has a solution if the monoid is not free, we first introduce another definition.
    \begin{definition}[compatible state sequences]\label{def:compatible}
      We may factorize any pair $\bm{p}, \bm{q} \in Q^*$ (uniquely) as
      \[
        \bm{p} = \left( \bm{p}_s \#_{x_s} \right) \dots \left( \bm{p}_1 \#_{x_1} \right) \bm{p}_0 \quad \text{and} \quad \bm{q} = \left( \bm{q}_t \#_{y_t} \right) \dots \left( \bm{q}_1 \#_{y_1} \right) \bm{q}_0
      \]
      with $\bm{p}_0, \dots, \bm{p}_s, \bm{q}_0, \dots, \bm{q}_t \in \hat{R}^*$ and $x_1, \dots, x_s, y_1, \dots, y_t \in \{ 1, 2 \}$. We define:
      \[
        \bm{p} \text{ and } \bm{q} \text{ are \emph{compatible}} \iff s = t \text{ and } \forall\, 0 \leq i \leq s = t: \bm{p}_i =_R \bm{q}_i
      \]
    \end{definition}
    
    Any relation in the monoid is compatible:
    \begin{lemma}\label{lem:relationIsCompatible}
      Let $\bm{p}, \bm{q} \in Q^*$ with $\bm{p} =_{\mathcal{T}} \bm{q}$. Then, we have that $\bm{p}$ and $\bm{q}$ are compatible.
    \end{lemma}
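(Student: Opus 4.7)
The plan is to use the letter $a$ (via \autoref{fct:removeBlocks}) to count and isolate the $\#$-blocks of $\bm{p}$ and $\bm{q}$, and to exploit the identification of $\#_1, \#_2$ with $\lambda_\#$ on inputs from $\Gamma$ (which is built into the construction of $\mathcal{S}$) in order to reduce block-level questions to computations inside the free semigroup $\mathscr{S}(\hat{\mathcal{R}}) \simeq R^+$. The case $\bm{p} = \varepsilon$ would be handled separately: a non-empty $\bm{q}$ containing some $\#_x$ would satisfy $\bm{q} \circ a = b \neq a = \varepsilon \circ a$, while $\bm{q} \in \hat{R}^+$ would, via the iso $\mathscr{S}(\hat{\mathcal{R}}) \simeq R^+$, give an identity in a non-unital free semigroup; so $\bm{p} = \varepsilon$ forces $\bm{q} = \varepsilon$, and I may assume $\bm{p}, \bm{q} \in Q^+$ from now on.

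In Step~1 I would show $s = t$ by computing $\bm{p} \circ a^n$ for any $n > \max(|\bm{p}|, |\bm{q}|)$. A short induction on $n$, using that each $\hat{r} \in \hat{R}$ outputs $a$ on $a$ and $b$ on $b$ while $\#_x$ outputs $b$ on $a$, yields $\bm{p} \circ a^n = b^s a^{n - s}$: each application of $a$ consumes the rightmost $\#$ and records it as a $b$ in the output, and once the state sequence has dropped into $\hat{R}^+$ all further outputs are $a$. The analogous formula for $\bm{q}$ then forces $s = t$.

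In Step~2 I would prove $\bm{p}_\mu =_R \bm{q}_\mu$ by downward induction on $\mu$ from $s$ down to $0$. The key reduction is that, since $\#_1, \#_2$ were given exactly the $\Gamma$-transitions of $\lambda_\#$, any $\bm{r} \in Q^+$ acts on $\Gamma^*$ like the element of $R^+$ obtained by replacing every $\#_x$ in $\bm{r}$ by $\lambda_\#$ and then applying $\pi$; by faithfulness of the action of $\mathscr{S}(\hat{\mathcal{R}}) \simeq R^+$ on $\Gamma^*$, any two $=_\mathcal{T}$-equivalent sequences in $Q^+$ must have \emph{literally} equal images in the free semigroup $R^+$ under this procedure. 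For $\mu \geq 1$ I would apply this observation to $\bm{p} \cdot a^\mu =_\mathcal{T} \bm{q} \cdot a^\mu$ and invoke the explicit form from \autoref{fct:removeBlocks} to obtain, inside $R^+$, the equation
\begin{align*}
  \pi(\bm{p}_s)\,\lambda_\# \cdots \lambda_\#\, \pi(\bm{p}_\mu)\,\lambda_\#\,\lambda_R^{k_\mu}
  \;=\;
  \pi(\bm{q}_s)\,\lambda_\# \cdots \lambda_\#\, \pi(\bm{q}_\mu)\,\lambda_\#\,\lambda_R^{k'_\mu}
\end{align*}
with $k_\mu = \mu - 1 + |\bm{p}_{\mu-1} \cdots \bm{p}_0|_R$ and analogous $k'_\mu$; the case $\mu = 0$ follows from the same argument without applying any $a$.

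The main technical obstacle is extracting $\pi(\bm{p}_\mu) = \pi(\bm{q}_\mu)$ from this equation, which is delicate because the words $\pi(\bm{p}_i)$ may contain both $\lambda_\#$ and $\lambda_R$ as internal letters. By the inductive hypothesis the prefix $\pi(\bm{p}_s)\lambda_\# \cdots \lambda_\# \pi(\bm{p}_{\mu+1})\lambda_\#$ agrees on both sides and cancels on the left. On the right, a short induction on $k_\mu + k'_\mu$ would show that the rightmost letters can agree only if the trailing $\lambda_R$-runs have equal length: any discrepancy would end the shorter side in the letter $\lambda_\#$ and the longer side in $\lambda_R$, contradicting $\lambda_R \neq \lambda_\#$ and the freeness of $R^*$. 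Hence $k_\mu = k'_\mu$, after which one more cancellation of $\lambda_\#$ from the right yields $\pi(\bm{p}_\mu) = \pi(\bm{q}_\mu)$, i.e.\ $\bm{p}_\mu =_R \bm{q}_\mu$. This peeling step, which hinges on the construction's choice $\lambda_R \neq \lambda_\#$, is the delicate piece that drives the proof.
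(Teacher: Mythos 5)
Your proof is correct and rests on the same ingredients as the paper's own argument: \autoref{fct:removeBlocks}, the projection into $R^+$ coming from the fact that $\#_1, \#_2$ carry exactly the $\Gamma$-transitions of $\lambda_\#$ (so that $=_{\mathcal{T}}$ forces literal equality of the $\pi$-images in the free semigroup), and the choice $\lambda_R \neq \lambda_\#$ to peel off the appended marker $\lambda_\# \lambda_R^k$. The organization differs, though: the paper runs a single induction on $s + t$, applying $a$ once to pass to sequences with one block fewer, so that $s = t$ and the equalities $\bm{p}_\mu =_R \bm{q}_\mu$ for $\mu \geq 2$ come directly from the induction hypothesis and only the blocks $\bm{p}_1$ and $\bm{p}_0$ require the explicit $\lambda_\#/\lambda_R$ disambiguation and a final cancellation in $R^*$. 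You instead obtain $s = t$ up front via the direct computation $\bm{p} \circ a^n = b^s a^{n-s}$ (which is a nice self-contained argument not present in the paper) and then treat every block index by a downward induction with explicit prefix cancellation and suffix peeling in $R^+$. Both routes are sound; yours is more hands-on with the word combinatorics and applies $a^\mu$ afresh for each $\mu$, while the paper's version reuses the induction hypothesis to avoid repeating the cancellation argument at every level.
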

    \begin{proof}
      We factorize $\bm{p}$ and $\bm{q}$ in the same way as in \autoref{def:compatible} and show the statement by induction on $s + t$.
      For $s = t = 0$, we have $\bm{p}_0 = \bm{p} =_{\mathcal{T}} \bm{q} = \bm{q}_0$. Since $\hat{\mathcal{R}}$ is a subautomaton of $\mathcal{T}$, this implies $\bm{p}_0 =_{\hat{\mathcal{R}}} \bm{q}_0$ and, equivalently, $\bm{p} = \bm{p}_0 =_R \bm{q}_0 = \bm{q}$.
      
      For the inductive step ($s + t > 0$), we may assume $s > 0$ (due to symmetry) or, in other words, that $\bm{p}$ contains at least one $\#_1$ or $\#_2$. We have $\bm{p} \circ a = b$ (compare to \autoref{sfig:T1dual}) and, thus, due to $\bm{p} =_{\mathcal{T}} \bm{q}$, also $\bm{q} \circ a = \bm{p} \circ a = b$. This is only possible (again, compare to \autoref{sfig:T1dual}) if $\bm{q}$ also contains at least one $\#_1$ or $\#_2$, i.\,e.\ if $t > 0$.
      
      From \autoref{fct:removeBlocks} (with $\mu = 1$), we obtain (for both $\bm{p}$ and $\bm{q}$):
      \begin{align*}
        \bm{p} \cdot a ={}& \bm{p}' \lambda_{\#} \lambda_R^{|\bm{p}_0|_R}\\
        &\text{for } \bm{p}' = \left( \bm{p}_s \#_{x_s} \right) \dots \left( \bm{p}_2 \#_{x_2} \right) \bm{p}_1 \text{ and}\\
        \bm{q} \cdot a ={}& \bm{q}' \lambda_{\#} \lambda_R^{|\bm{q}_0|_R}\\
        &\text{for } \bm{q}' = \left( \bm{q}_t \#_{x_t} \right) \dots \left( \bm{q}_2 \#_{x_2} \right) \bm{q}_1
      \end{align*}
      Now, $\bm{p} =_{\mathcal{T}} \bm{q}$ implies $\bm{p}' \lambda_{\#} \lambda_R^{|\bm{p}_0|_R} = \bm{p} \cdot a =_{\mathcal{T}} \bm{q} \cdot a = \bm{q}' \lambda_{\#} \lambda_R^{|\bm{q}_0|_R}$
      and we may apply the induction hypothesis, which yields that $\bm{p}' \lambda_\# \lambda_R^{|\bm{p}_0|_R}$ and $\bm{q}' \lambda_\# \lambda_R^{|\bm{q}_0|_R}$ are compatible.
      This means that we have $s = t$, $\bm{p}_{\mu} =_R \bm{q}_{\mu}$ for all $2 \leq \mu \leq s = t$ and $\bm{p}_1 \lambda_\# \lambda_R^{|\bm{p}_0|_R} =_R \bm{q}_1 \lambda_\# \lambda_R^{|\bm{q}_0|_R}$.
      Observe that the latter implies $\bm{p}_1 =_R \bm{q}_1$ (as we have chosen $\lambda_\#$ and $\lambda_R$ as different elements of $\Lambda$).
      In particular, we also obtain $\bm{p}_s \lambda_\# \bm{p}_{s - 1} \ldots \lambda_\# \bm{p}_1 =_R \bm{q}_t \lambda_\# \bm{q}_{t - 1} \ldots \lambda_\# \bm{q}_1$.
      
      Since $\mathcal{S}$ is a subautomaton of $\mathcal{T}$, $\bm{p} =_{\mathcal{T}} \bm{q}$ implies $\bm{p} =_{\mathcal{S}} \bm{q}$. As $\#_1$ and $\#_2$ act in the same way as $\lambda_\#$ in $\mathcal{S}$ by construction, this shows $\bm{p}_s \lambda_\# \ldots \bm{p}_1 \lambda_\# \bm{p}_0 =_{\mathcal{S}} \bm{q}_t \lambda_\# \ldots \bm{q}_1 \lambda_\# \bm{q}_0$ and, because of $\mathscr{S}(\mathcal{S}) \simeq R^+$, also $\bm{p}_s \lambda_\# \ldots \bm{p}_1 \lambda_\# \bm{p}_0 =_R \bm{q}_t \lambda_\# \ldots \bm{q}_1 \lambda_\# \bm{q}_0$. Now, because $R^*$ as a free monoid is cancellative (see \autoref{fct:freeIsCancellative}) and because we have $\bm{p}_s \lambda_\# \bm{p}_{s - 1} \ldots \lambda_\# \bm{p}_1 =_R \bm{q}_t \lambda_\# \bm{q}_{t - 1} \ldots \lambda_\# \bm{q}_1$ (from above), we obtain $\lambda_\# \bm{p}_0 =_R \lambda_\# \bm{q}_0$ and, finally, $\bm{p}_0 =_R \bm{q}_0$, which concludes the proof that $\bm{p}$ and $\bm{q}$ are compatible.
    \end{proof}
    
    On the other hand, not every compatible pair forms a semigroup relation. However, this is true by \autoref{lem:REquivalentImpliesTEquivalent} if, additionally, the subsequence containing only $\#_1$ and $\#_2$ is the same in both entries. To formalize this, we introduce the following definition.
    \begin{definition}[projection on $\{ \#_1, \#_2 \}$]\label{def:sharpProjection}
      Let $\pi_\# : Q^* \to \{ \#_1, \#_2 \}^*$ be the homomorphism given by $\pi_\#(\#_x) = \#_x$ for both $x \in \{ 1, 2 \}$ and $\pi_\#(\hat{r}) = \varepsilon$ for all other $\hat{r} \in Q \setminus \{ \#_1, \#_2 \} = \hat{R}$.
    \end{definition}
    \begin{lemma}\label{lem:sharpCompatibleImpliesRelation}
      Let $\bm{p}, \bm{q} \in Q^*$ such that $\bm{p}$ and $\bm{q}$ are compatible and we have $\pi_\#(\bm{p}) = \pi_\#(\bm{q})$. Then, we have $\bm{p} =_{\mathcal{T}} \bm{q}$.
    \end{lemma}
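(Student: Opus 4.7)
The plan is to induct on $|u|$ to prove simultaneously, for all $u \in \Sigma^*$, the two statements (i) $\bm{p} \circ u = \bm{q} \circ u$ and (ii) $\bm{p} \cdot u$ and $\bm{q} \cdot u$ are again compatible with $\pi_\#(\bm{p} \cdot u) = \pi_\#(\bm{q} \cdot u)$. The base case $u = \varepsilon$ is immediate; for $u = cu'$ it suffices to establish (i) and (ii) for the single letter $c$, because (ii) then feeds the inductive hypothesis applied to $u'$ with the new compatible pair $\bm{p} \cdot c$, $\bm{q} \cdot c$, after which (i) for $u$ follows from $\bm{p} \circ u = (\bm{p} \circ c)(\bm{p} \cdot c \circ u')$.

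The key auxiliary observation that I would record first is that, on $\hat{R}^*$, the congruence $=_{\mathcal{T}}$ coincides with $=_R$: one direction is \autoref{lem:REquivalentImpliesTEquivalent}, while the other holds because $\hat{\mathcal{R}}$ is a subautomaton of $\mathcal{T}$ and $\mathscr{S}(\hat{\mathcal{R}}) \simeq R^+$. In particular, whenever $\bm{v}, \bm{v}' \in \hat{R}^*$ satisfy $\bm{v} =_R \bm{v}'$ and $c \in \Sigma$, one has $\bm{v} \circ c = \bm{v}' \circ c$ and, since the transitions of $\mathcal{T}$ starting in $\hat{R}$ never lead into $\{\#_1, \#_2\}$, the sequences $\bm{v} \cdot c$, $\bm{v}' \cdot c$ lie in $\hat{R}^*$ and satisfy $\bm{v} \cdot c =_R \bm{v}' \cdot c$ as well.

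The single-letter step now splits by cases. The case $c = b$ is trivial since every state self-loops. For $c \in \Gamma$, each $\#_{x_j}$ transitions to the same $\hat{R}$-state as $\lambda_\#$ would by construction of $\mathcal{S}$, so $\bm{p} \cdot c, \bm{q} \cdot c \in \hat{R}^*$ with $\pi_\# = \varepsilon$; the auxiliary observation applied block-wise, together with $\mathscr{S}(\mathcal{S}) \simeq R^+$, yields both the output identity and $\bm{p} \cdot c =_R \bm{q} \cdot c$. For $c = a$ with $s \geq 1$, I would invoke \autoref{fct:removeBlocks} with $\mu = 1$ to re-factorize $\bm{p} \cdot a$ and $\bm{q} \cdot a$: the new bottom blocks $\bm{p}_1 \lambda_\# \lambda_R^{|\bm{p}_0|_R}$ and $\bm{q}_1 \lambda_\# \lambda_R^{|\bm{q}_0|_R}$ are $R$-equivalent thanks to $\bm{p}_1 =_R \bm{q}_1$ and $|\bm{p}_0|_R = |\bm{q}_0|_R$, while the remaining upper blocks and separating $\#$-letters coincide literally with those of $\bm{p}, \bm{q}$, so both compatibility and the $\pi_\#$-matching persist; the case $s = 0$ is immediate.

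The main obstacle is the remaining case $c \in \{\iota, \alpha, \alpha', \beta, \beta', f_\alpha, f_\beta, f\}$, in which the $\hat{R}^*$-blocks and the $\#$-letters interact through a cascade of intermediate outputs. Fortunately, every $\#_{x_j}$ transitions to $\lambda_\#$ on any of these inputs (see \autoref{sfig:T2}), so $\pi_\#$-matching is automatic after the step and only compatibility has to be tracked. I would run $\bm{p}$ and $\bm{q}$ in parallel from the innermost block outward: the auxiliary observation produces the same output letter $d_0$ from $\bm{p}_0$ and $\bm{q}_0$ on $c$ together with $\bm{p}_0 \cdot c =_R \bm{q}_0 \cdot c$; then $\#_{x_1} = \#_{y_1}$ reads the same $d_0$ in both runs, outputs the same $d_1$, and becomes $\lambda_\#$; a secondary induction on the block index propagates these matching intermediate letters through the whole factorization. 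Concatenating the block-wise $R$-equivalences obtained along the way yields $\pi(\bm{p} \cdot c) = \pi(\bm{q} \cdot c)$, which is the desired compatibility of $\bm{p} \cdot c$ and $\bm{q} \cdot c$, and the final output $\bm{p} \circ c = \bm{q} \circ c$ is the letter emerging from the top-most block.
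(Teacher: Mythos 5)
Your argument is correct, but it takes a far longer road than the paper does, and the difference is instructive. The paper's proof is three lines: compatibility gives $\bm{p}_\mu =_R \bm{q}_\mu$ for every block, hence $\bm{p}_\mu =_{\mathcal{T}} \bm{q}_\mu$ by \autoref{lem:REquivalentImpliesTEquivalent}; the hypothesis $\pi_\#(\bm{p}) = \pi_\#(\bm{q})$ (together with $s = t$ from compatibility) makes the separating states $\#_{x_\mu} = \#_{y_\mu}$ \emph{literally} equal; and since $=_{\mathcal{T}}$ is by definition a congruence on $Q^*$, concatenating the block-wise equivalences yields $\bm{p} =_{\mathcal{T}} \bm{q}$ at once. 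Everything you build on top of that — the simultaneous induction on $|u|$, the invariant (ii), and the whole case analysis over $c \in \Sigma$ culminating in the parallel block-by-block run for $c \in \{ \iota, \alpha, \alpha', \beta, \beta', f_\alpha, f_\beta, f \}$ — amounts to re-proving by hand that $=_{\mathcal{T}}$ is compatible with concatenation, which the paper records once and for all when it introduces $=_{\mathcal{T}}$ as a congruence in the preliminaries. Your auxiliary observation (that $=_{\mathcal{T}}$ and $=_R$ coincide on $\hat{R}^*$ and that $\hat{R}^*$ is closed under the dual action) is correct and is exactly the combination of \autoref{lem:REquivalentImpliesTEquivalent} with the fact that $\hat{\mathcal{R}}$ is a subautomaton, so the substance of your proof is sound; what your longer route buys, incidentally, is an explicit verification that compatibility together with $\pi_\#$-matching is preserved under the dual action, a statement the paper never needs for this lemma (it proves the preservation of compatibility alone, in the other direction, inside \autoref{lem:relationIsCompatible}).
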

    \begin{proof}
      Factorize $\bm{p}$ and $\bm{q}$ in the same way as in \autoref{def:compatible}. Since $\bm{p}$ and $\bm{q}$ are compatible, we have $\bm{p}_\mu =_R \bm{q}_\mu$ for all $0 \leq \mu \leq s = t$. This implies $\bm{p}_\mu =_{\mathcal{T}} \bm{q}_\mu$ by \autoref{lem:REquivalentImpliesTEquivalent}. Finally, $\pi_\#(\bm{p}) = \pi_\#(\bm{q})$ implies $\#_{x_\mu} = \#_{y_\mu}$ for all $1 \leq \mu \leq s = t$ and we obtain $\bm{p} =_{\mathcal{T}} \bm{q}$ because ${=_{\mathcal{T}}}$ is a congruence.
    \end{proof}
    
    Combining the last two lemmas, we obtain that $\mathscr{S}(\mathcal{T})$ is a free semigroup if all its relations have the same projection under $\pi_{\#}$. Most importantly, we will later on apply the contraposition of the \enquote{only if} direction of the following lemma to obtain a relation with different images under the projection if the semigroup is not free.
    \begin{lemma}\label{lem:freeIfRelationsSharp}
      Let $\pi': Q^* \to (R \cup \{ \#_1, \#_2 \})^*$ be the extension of the natural projection $\pi$ (from \autoref{def:naturalProjection}) with $\pi'(\#_x) = \#_x$ for $x \in \{ 1, 2 \}$.
      Then, the following statements are equivalent:
      \begin{enumerate}[label=(\arabic*)]
        \item
          \label{itm:relationImpliesSharp}
          For all $\bm{p}, \bm{q} \in Q^+$ with $\bm{p} =_{\mathcal{T}} \bm{q}$, we have $\pi_\#(\bm{p}) = \pi_\#(\bm{q})$.
        \item
          \label{itm:wellDefHom}
          The map $\pi'$ induces a well-defined homomorphism $\mathscr{S}(\mathcal{T}) \to (R \cup \{ \#_1, \#_2 \})^+$.
        \item
          \label{itm:wellDefIso}
          The map $\pi'$ induces a well-defined isomorphism $\mathscr{S}(\mathcal{T}) \to (R \cup \{ \#_1, \#_2 \})^+$.
      \end{enumerate}
      
      In particular, $\mathscr{S}(\mathcal{T})$ is isomorphic to $(R \cup \{ \#_1, \#_2 \})^+$ if we have $\pi_\#(\bm{p}) = \pi_\#(\bm{q})$ for all $\bm{p}, \bm{q} \in Q^+$ with $\bm{p} =_{\mathcal{T}} \bm{q}$.
    \end{lemma}
    \begin{proof}
      Note that we have $\pi'(\bm{p}) = \pi'(\bm{q})$ for $\bm{p}, \bm{q} \in Q^+$ if and only if $\bm{p}$ and $\bm{q}$ are compatible and $\pi_{\#}(\bm{p}) = \pi_{\#}(\bm{q})$ holds.
      
      For the implication \ref{itm:relationImpliesSharp} $\implies$ \ref{itm:wellDefHom}, suppose we have $\pi_\#(\bm{p}) = \pi_\#(\bm{q})$ for all $\bm{p}, \bm{q} \in Q^+$ with $\bm{p} =_{\mathcal{T}} \bm{q}$. We want to show that $\pi'$ induces a well-defined homomorphism $\mathscr{S}(\mathcal{T}) \to (R \cup \{ \#_1, \#_2 \})^+$. If it is well-defined, it is clearly a homomorphism. Thus, we only need to show that it is well-defined. Let $\bm{p}, \bm{q} \in Q^+$ with $\bm{p} =_{\mathcal{T}} \bm{q}$. By \autoref{lem:relationIsCompatible}, we have that $\bm{p}$ and $\bm{q}$ are compatible. By hypothesis, we also obtain $\pi_{\#}(\bm{p}) = \pi_{\#}(\bm{q})$.
      
      For the implication \ref{itm:wellDefHom} $\implies$ \ref{itm:wellDefIso}, not that, if $\pi'$ induces a well-defined homomorphism $\mathscr{S}(\mathcal{T}) \to (R \cup \{ \#_1, \#_2 \})^+$, it is clearly surjective. Therefore, we only need to show that it is injective. For this, let $\bm{p}, \bm{q} \in Q^+$ be compatible with $\pi_{\#}(\bm{p}) = \pi_{\#}(\bm{q})$. This implies $\bm{p} =_{\mathcal{T}} \bm{q}$ by \autoref{lem:sharpCompatibleImpliesRelation}.
      
      Finally, for the implication \ref{itm:wellDefIso} $\implies$ \ref{itm:relationImpliesSharp}, suppose that $\pi'$ is a well-defined isomorphism. In particular, $\bm{p} =_{\mathcal{T}} \bm{q}$ implies $\pi'(\bm{p}) = \pi'(\bm{q})$ for all $\bm{p}, \bm{q} \in Q^+$ and, thus, $\pi_\#(\bm{p}) = \pi_\#(\bm{q})$.
    \end{proof}
    
    If we have found a relation whose sides have different images under $\pi_\#$, we obtain a solution for the \DecProblem{PCP} instance.
    \begin{lemma}\label{lem:sharpRelationImpliesSolution}
      If there are $\bm{p}, \bm{q} \in Q^+$ with $\bm{p} =_{\mathcal{T}} \bm{q}$ but $\pi_\#(\bm{p}) \neq \pi_\#(\bm{q})$, then the \DecProblem{PCP} instance has a solution.
    \end{lemma}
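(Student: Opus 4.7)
By \autoref{lem:relationIsCompatible}, $\bm{p}$ and $\bm{q}$ admit compatible factorizations $\bm{p} = (\bm{p}_s \#_{x_s}) \dots (\bm{p}_1 \#_{x_1}) \bm{p}_0$ and $\bm{q} = (\bm{q}_s \#_{y_s}) \dots (\bm{q}_1 \#_{y_1}) \bm{q}_0$. Since $\pi_\#(\bm{p}) \neq \pi_\#(\bm{q})$, there is a smallest index $\mu$ with $x_\mu \neq y_\mu$; without loss of generality $x_\mu = 1$ and $y_\mu = 2$. The plan is to feed the input word $a^{\mu - 1} \iota$ into both sides, trace the computation using the dual automaton $\partial \mathcal{T}_2$ of \autoref{sfig:T2dual}, and extract a PCP solution from the resulting equality in $\hat{R}^+$.

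First, using \autoref{fct:removeBlocks} with parameter $\mu - 1$, we pass from $\bm{p}, \bm{q}$ to the still $\mathcal{T}$-equivalent reduced sequences $\bm{p}' = \bm{p} \cdot a^{\mu - 1}$ and $\bm{q}' = \bm{q} \cdot a^{\mu - 1}$ (for $\mu = 1$ nothing is removed). The rightmost block of each now lies entirely in $\hat{R}^*$, the rightmost $\#$'s are $\#_1$ in $\bm{p}'$ and $\#_2$ in $\bm{q}'$, and the blocks $\bm{p}_\mu, \bm{q}_\mu$ sitting just above these rightmost $\#$'s still satisfy $\bm{p}_\mu =_R \bm{q}_\mu$. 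Next, apply $\iota$ and read $\bm{p}'$ right-to-left in $\partial \mathcal{T}_2$: the rightmost block keeps us at dual state $\iota$, then $\#_1$ transitions to $\alpha$ (while $\#_2$ in $\bm{q}'$ transitions to $\beta$). Since the $\alpha$-side states $\{\alpha, \alpha', f_\alpha\}$ and the $\beta$-side states $\{\beta, \beta', f_\beta\}$ are disjoint apart from the common sink $f$, the forced equality $\bm{p}' \circ \iota = \bm{q}' \circ \iota$ can only hold if both runs terminate at $f$. A routine inspection of $\partial \mathcal{T}_2$ shows that from $\alpha$ the sink $f$ is reached only via an $\alpha \to \alpha' \to \dots \to \alpha' \to f$ path, which forces $\bm{p}_\mu$ to be a non-empty word over $I$ immediately followed by another $\#$; hence $\bm{p}_\mu = \bm{q}_\mu =: \bm{i} \in I^+$ and $\mu < s$.

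With this structural information in hand, a direct computation through $\partial \mathcal{T}_2$ shows that $\bm{p}' \cdot \iota$ and $\bm{q}' \cdot \iota$ lie in $\hat{R}^+$ and have the form $\bm{c}_\ell \, \varphi(\bm{i}) \, \bm{c}_r$ and $\bm{c}_\ell \, \psi(\bm{i}) \, \bm{c}_r$ respectively, for common \enquote{constant} parts $\bm{c}_\ell, \bm{c}_r \in \hat{R}^*$ built from $\lambda_R$ and $\lambda_\#$ whose shapes depend only on the (equal) values $|\bm{p}_j|_R = |\bm{q}_j|_R$ for $j \geq \mu + 1$ and on $|\bm{p}'_0|_R = |\bm{q}'_0|_R$. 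Since $=_\mathcal{T}$ is preserved by the dual action, $\bm{p}' \cdot \iota =_\mathcal{T} \bm{q}' \cdot \iota$; as both sides lie in $\hat{R}^+$ and $\hat{\mathcal{R}}$ is a subautomaton of $\mathcal{T}$ with $\mathscr{S}(\hat{\mathcal{R}}) \simeq R^+$, this strengthens to $\bm{p}' \cdot \iota =_R \bm{q}' \cdot \iota$. Projecting by $\pi$ and cancelling $\bm{c}_\ell$ and $\bm{c}_r$ in $\Lambda^*$ yields $\varphi(\bm{i}) = \psi(\bm{i})$, so $\bm{i}$ solves the PCP instance.

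The main technical obstacle is the case analysis forcing $\bm{p}_\mu \in I^+$ and $\mu < s$: one must verify by inspecting $\partial \mathcal{T}_2$ that no other structure of the block $\bm{p}_\mu$ (empty, containing a letter of $\hat{\Lambda}$, or sitting at the leftmost position $\mu = s$) allows both runs to terminate at $f$. The subsequent computation of $\bm{p}' \cdot \iota$ is then routine bookkeeping.
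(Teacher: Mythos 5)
Your proposal is correct and follows essentially the same route as the paper: normalize the first differing $\#$ to the rightmost position via \autoref{fct:removeBlocks}, apply $\iota$, use the disjointness of the $\alpha$- and $\beta$-branches of $\partial\mathcal{T}_2$ (with common sink $f$) to force the adjacent block into $I^+$, and then cancel the common $\lambda_\#/\lambda_R$ padding in $R^*$ to obtain $\varphi(\bm{i}) =_R \psi(\bm{i})$. The only (immaterial) differences are that you choose the smallest differing index and make the exclusion of the boundary case $\mu = s$ explicit, which the paper handles implicitly through its cross diagrams.
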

    \begin{proof}
      We factorize these $\bm{p}$ and $\bm{q}$ in the same way as in \autoref{def:compatible} and observe that $\bm{p}$ and $\bm{q}$ are compatible by \autoref{lem:relationIsCompatible}. We may assume that there is some $1 \leq \mu_0 \leq s = t$ with $\#_{x_{\mu_0}} = \#_1$ but $\#_{y_{\mu_0}} = \#_2$ (due to symmetry).
      
      We may assume $\mu_0 = 1$ without loss of generality. This is because we may substitute $\bm{p}$ by $\bm{p}' = \bm{p} \cdot a^{\mu_0 - 1}$ and $\bm{q}' = \bm{q} \cdot a^{\mu_0 - 1}$, for which we still have $\bm{p}' =_{\mathcal{T}} \bm{q}'$ and, by \autoref{fct:removeBlocks} (for $\mu_0 > 1$),
      \begin{align*}
        \bm{p}' &= (\bm{p}_s \#_{x_s}) \ldots (\bm{p}_{\mu_0} \#_{x_{\mu_0}}) \, \bm{p}_{\mu_0 - 1} \, \lambda_{\#} \lambda_R^{\mu - 2 + | \bm{p}_{\mu - 2} \ldots \bm{p}_0 |_R} \text{ and}\\
        \bm{q}' &= (\bm{q}_t \#_{y_t}) \ldots (\bm{q}_{\mu_0} \#_{y_{\mu_0}}) \, \bm{q}_{\mu_0 - 1} \, \lambda_{\#} \lambda_R^{\mu - 2 + | \bm{q}_{\mu - 2} \ldots \bm{q}_0 |_R} \text{.}
      \end{align*}
      
      With these assumptions, we apply $\bm{p}$ and $\bm{q}$ to $\iota$ and obtain the cross diagrams (see \autoref{fig:T2andDual})
      \begin{center}
        \begin{tikzpicture}[baseline=(m-6-1.base)]
          \matrix[matrix of math nodes, text height=1.25ex, text depth=0.25ex] (m) {
                     & \iota  &                \\
            \bm{p}_0 &        & \lambda_R^{|\bm{p}_0|_R} \\
                     & \iota  &                \\
            \#_1     &        & \lambda_\#     \\
                     & \alpha &                \\
            \bm{p}_1 &        & \bm{p}_1'      \\
                     & c_1    &                \\
            \#_{x_2} &        & p_2'           \\
                     & c_2    &                \\
            \tilde{\bm{p}} &  & \tilde{\bm{p}}'\\
                     & c      &                \\
          };
          \foreach \i in {1,3,5,7,9} {
            \foreach \j in {1} {
              \draw[->] let
                \n1 = {int(2+\i)},
                \n2 = {int(1+\j)}
              in
                (m-\i-\n2) -> (m-\n1-\n2);
              \draw[->] let
                \n1 = {int(1+\i)},
                \n2 = {int(2+\j)}
              in
                (m-\n1-\j) -> (m-\n1-\n2);
            };
          };
        \end{tikzpicture}
        and
        \begin{tikzpicture}[baseline=(m-6-1.base)]
          \matrix[matrix of math nodes, text height=1.25ex, text depth=0.25ex] (m) {
                     & \iota &                \\
            \bm{q}_0 &       & \lambda_R^{|\bm{q}_0|_R} \\
                     & \iota &                \\
            \#_2     &       & \lambda_\#     \\
                     & \beta &                \\
            \bm{q}_1 &       & \bm{q}_1'      \\
                     & d_1   &                \\
            \#_{y_2} &       & q_2'           \\
                     & d_2   &                \\
            \tilde{\bm{q}} & & \tilde{\bm{q}}'\\
                     & d     &                \\
          };
          \foreach \i in {1,3,5,7,9} {
            \foreach \j in {1} {
              \draw[->] let
                \n1 = {int(2+\i)},
                \n2 = {int(1+\j)}
              in
                (m-\i-\n2) -> (m-\n1-\n2);
              \draw[->] let
                \n1 = {int(1+\i)},
                \n2 = {int(2+\j)}
              in
                (m-\n1-\j) -> (m-\n1-\n2);
            };
          };
        \end{tikzpicture}
      \end{center}
      for $\tilde{\bm{p}} = \bm{p}_s \#_{x_s} \ldots \bm{p}_3 \#_{x_3} \bm{p}_2$, $\tilde{\bm{q}} = \bm{q}_t \#_{y_t} \ldots \bm{q}_3 \#_{y_3} \bm{q}_2$ and some $\bm{p}_1', \tilde{\bm{p}}', \bm{q}_1', \tilde{\bm{q}}' \in Q^*$, $p_2', q_2' \in Q$ and $c_1, c_2, c, d_1, d_2, d \in \Gamma$. Since we have $\bm{p} =_{\mathcal{T}} \bm{q}$, we must have $c = d$ and, by the construction of $\mathcal{T}$, this is only possible if $c = f = d$ (see \autoref{sfig:T2dual}). This, in turn, is only possible if we have $\bm{p}_1 = \bm{i} \in I^+$ and $\bm{q}_1 = \bm{j} \in I^+$. Since $\bm{p}$ and $\bm{q}$ are compatible, we must even have $\bm{i} = \bm{p}_1 =_R \bm{q}_1 = \bm{j}$, which implies $\bm{i} = \bm{j}$. Additionally, we also obtain $\bm{p}_1' =_R \varphi(\bm{i})$, $c_1 = \alpha'$, $p_2' = \lambda_\#$, $c_2 = f$, $\bm{q}_1' =_R \psi(\bm{i})$, $d_1 = \beta'$, $q_2' = \lambda_\#$, $d_2 = f$ and $\tilde{\bm{p}}' = \lambda_R^{|\bm{p}_s|_R} \lambda_{\#} \ldots \lambda_R^{|\bm{p}_3|_R} \lambda_{\#} \lambda_R^{|\bm{p}_2|_R}$ as well as $\tilde{\bm{q}}' = \lambda_R^{|\bm{q}_t|_R} \lambda_{\#} \ldots \lambda_R^{|\bm{q}_3|_R} \lambda_{\#} \lambda_R^{|\bm{q}_2|_R}$ from the construction of $\mathcal{T}$.

      This shows that we have
      \begin{align*}
        &\lambda_R^{|\bm{p}_s|_R} \lambda_{\#} \ldots \lambda_R^{|\bm{p}_3|_R} \lambda_{\#} \lambda_R^{|\bm{p}_2|_R} \,
        \lambda_\# \varphi(\bm{i}) \lambda_\# \, \lambda_R^{|\bm{p}_0|_R}\\
        =_{\mathcal{T}}{}&
        \lambda_R^{|\bm{q}_t|_R} \lambda_{\#} \ldots \lambda_R^{|\bm{q}_3|_R} \lambda_{\#} \lambda_R^{|\bm{q}_2|_R} \,
        \lambda_\# \psi(\bm{i}) \lambda_\# \, \lambda_R^{|\bm{q}_0|_R}
      \end{align*}
      and, by \autoref{lem:relationIsCompatible}, also that both sides are $R$-equivalent. Since $\bm{p}$ and $\bm{q}$ are compatible, we have $\lambda_R^{|\bm{p}_\mu|_R} =_R \lambda_R^{|\bm{q}_\mu|_R}$ for all $0 \leq \mu \leq s = t$. Combining this with the cancellativity of $R^*$, we obtain $\varphi(\bm{i}) =_R \psi(\bm{i})$ and, thus, that $\bm{i}$ is a solution for the \DecProblem{PCP} instance.
    \end{proof}
    
    We have now basically shown that the \DecProblem{PCP} instance has a solution if the semigroup generated by $\mathcal{T}$ is not free. However, we have shown even more, which we will state in \autoref{prop:equivalences}. For one part of this statement, however, we will first look at another consequence of \autoref{lem:relationIsCompatible}, namely that the semigroup $\mathscr{S}(\mathcal{T})$ has a length function.
    \begin{proposition}\label{prop:lengthFunction}
      The function
      \begin{alignat*}{2}
        Q = \hat{R} \uplus \{ \#_1, \#_2 \} &\to \mathbb{N} \\
        \hat{r} &\mapsto |\hat{r}|_R &&\text{ for } \hat{r} \in \hat{R} \text{ and} \\
        \#_x&\mapsto 1 &&\text{ for } x \in \{ 1, 2 \}
      \end{alignat*}
      induces a well-defined proper length function of $\mathscr{M}(\mathcal{T})$ (and, thus, a well-defined length function of $\mathscr{S}(\mathcal{T})$).
    \end{proposition}
    \begin{proof}
      Consider two state sequences $\bm{p}, \bm{q} \in Q^*$ with $\bm{p} =_{\mathcal{T}} \bm{q}$ and let $m$ and $n$ be the lengths they get mapped to, respectively.
      We have to show $m = n$. By \autoref{lem:relationIsCompatible}, we have that $\bm{p}$ and $\bm{q}$ are compatible and we may factorize them in the same way as in \autoref{def:compatible}. Then, we have
      \begin{align*}
        m &= s + \sum_{\mu = 1}^{s} |\bm{p}_\mu|_R
        = t + \sum_{\mu = 1}^{t} |\bm{q}_\mu|_R = n
      \end{align*}
      since $s = t$ and $\bm{p}_\mu =_R \bm{q}_\mu$ (as $\bm{p}$ and $\bm{q}$ are compatible).
    \end{proof}
    \begin{proposition}\label{prop:equivalences}
      The following statements are equivalent:
      \begin{enumerate}[label=(\arabic*)]
        \item\label{itm:solution}
          The \DecProblem{PCP} instance has a solution $\bm{i} \in I^+$.
        \item\label{itm:iRelation}
          We have $\#_1 \bm{i} \#_1 =_{\mathcal{T}} \#_1 \bm{i} \#_2$ for some $\bm{i} \in I^+$.
        \item\label{itm:sharpRelation}
          There are $\bm{p}, \bm{q} \in Q^+$ with $\bm{p} =_{\mathcal{T}} \bm{q}$ but $\pi_\#(\bm{p}) \neq \pi_\#(\bm{q})$.
      \end{enumerate}%
      \begin{multicols}{2}
        \begin{enumerate}[resume*]
          \item\label{itm:notFree}
            $\mathscr{S}(\mathcal{T})$ is not a free semigroup.
          \item\label{itm:notIsomorphic}
            $\mathscr{S}(\mathcal{T})$ is not isomorphic to\\$(R \cup \{ \#_1, \#_2 \})^+$.
          \item\label{itm:notCancellative}
            $\mathscr{S}(\mathcal{T})$ is not (left\footnote{Recall that we defined automaton semigroups by a left action here.}) cancellative.
          \item\label{itm:notEquidivisible}
            $\mathscr{S}(\mathcal{T})$ is not equidivisible.
        \end{enumerate}
        \begin{enumerate}[start=4, label=(\arabic*')]
          \item\label{itm:MnotFree}
          $\mathscr{M}(\mathcal{T})$ is not a free monoid.
          \item\label{itm:MnotIsomorphic}
          $\mathscr{M}(\mathcal{T})$ is not isomorphic to\\$(R \cup \{ \#_1, \#_2 \})^*$.
          \item\label{itm:MnotCancellative}
          $\mathscr{M}(\mathcal{T})$ is not (left) cancellative.
          \item\label{itm:MnotEquidivisible}
          $\mathscr{M}(\mathcal{T})$ is not equidivisible.
        \end{enumerate}
      \end{multicols}
    \end{proposition}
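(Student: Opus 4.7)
The plan is to assemble the preceding lemmas into a cycle of implications covering (1)--(7), then to lift the result to (4')--(7') by identifying $\mathscr{M}(\mathcal{T})$ with $\mathscr{S}(\mathcal{T})^{\idGrp}$. The equivalences (1) $\Leftrightarrow$ (2) $\Leftrightarrow$ (3) come for free: (1) $\Rightarrow$ (2) is \autoref{lem:solutionImpliesRelation}, (2) $\Rightarrow$ (3) is immediate since $\pi_\#(\#_1 \bm{i} \#_1) = \#_1 \#_1 \neq \#_1 \#_2 = \pi_\#(\#_1 \bm{i} \#_2)$, and (3) $\Rightarrow$ (1) is \autoref{lem:sharpRelationImpliesSolution}.

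Next, I would close the chain (1) $\Rightarrow$ (6) $\Rightarrow$ (4) $\Rightarrow$ (5) $\Rightarrow$ (3): the first step is \autoref{prop:solutionImpliesNotFree}; the second is \autoref{fct:freeIsCancellative} used contrapositively; the third is trivial since $(R \cup \{\#_1, \#_2\})^+$ is itself a free semigroup; and the fourth is the contrapositive of the \enquote{in particular} part of \autoref{lem:freeIfRelationsSharp}. Finally, (4) $\Leftrightarrow$ (7) follows directly from Levi's Lemma (\autoref{fct:leviLemma}) combined with the length function provided by \autoref{prop:lengthFunction}: since $\mathscr{S}(\mathcal{T})$ has a length function, it is free if and only if it is equidivisible.

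To obtain (4')--(7'), I would first verify that no $\bm{q} \in Q^+$ acts as the identity on $\Sigma^*$, so that $\mathscr{M}(\mathcal{T}) = \mathscr{S}(\mathcal{T})^{\idGrp}$. Indeed, if $\bm{q}$ contains some $\#_x$ then the transitions in \autoref{sfig:T1transitions} force $\bm{q} \circ a = b$, and if $\bm{q} \in \hat{R}^+$ then $\bm{q}$ cannot act as the identity on $\Gamma^*$ because $\mathscr{S}(\hat{\mathcal{R}}) \simeq R^+$ is free (and $\pi(\bm{q})$ is non-empty). With this identification, each of (4) $\Leftrightarrow$ (4'), (5) $\Leftrightarrow$ (5'), (6) $\Leftrightarrow$ (6'), (7) $\Leftrightarrow$ (7') reduces to the elementary observation that freeness (monoid vs.\ semigroup), isomorphism with the associated free object, (left) cancellativity and equidivisibility all transfer under adjoining an identity; for the monoid form of Levi's Lemma we use that the length function of \autoref{prop:lengthFunction} is in fact \emph{proper}.

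The main obstacle is not any single deep step but the bookkeeping: all the hard content is packaged in the earlier lemmas, so the only new verification is the identification $\mathscr{M}(\mathcal{T}) = \mathscr{S}(\mathcal{T})^{\idGrp}$, which requires only a short case analysis on the actions of non-empty state sequences on the letter $a$ and on the letters of $\Gamma$.
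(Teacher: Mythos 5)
Your argument coincides with the paper's proof in every step except one: the claim that (left) cancellativity \enquote{transfers under adjoining an identity}. For freeness, for being isomorphic to the fixed free object, for equidivisibility and for \emph{two-sided} cancellativity this observation is indeed elementary and correct, but for \emph{left} cancellativity the direction \enquote{$S$ left cancellative $\implies S^\idGrp$ left cancellative} is false in general: if $S$ contains elements $s, t$ with $st = s$, then $s \cdot t = s \cdot \idGrp$ holds in $S^\idGrp$ with $t \neq \idGrp$, and such an equation does not contradict left cancellativity of $S$ itself (it only forces $t$ to be an idempotent \emph{left} identity, not a two-sided one). A concrete counterexample is the semigroup presented by $\langle t, v \mid tw = w \text{ for all } w \rangle_{\mathscr{S}}$, whose elements have the normal forms $v^n$ ($n \geq 1$) and $v^n t$ ($n \geq 0$); it is left cancellative, but $tt = t$ destroys left cancellativity as soon as $\idGrp$ is adjoined. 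So the equivalence of (6) with (6') cannot be disposed of by this blanket transfer principle.

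The gap is easily closed, and the paper does exactly this: only the \emph{valid} direction of the transfer is needed, namely that left cancellativity of $\mathscr{M}(\mathcal{T})$ passes down to its subsemigroup $\mathscr{S}(\mathcal{T})$, which is the implication (6) $\implies$ (6'); the converse is then obtained by routing through the cycle, using that a non-(left-)cancellative monoid is not free, i.\,e.\ (6') $\implies$ (4'), together with the already established equivalence of (4') with (6). In your notation the closed cycle is (1) $\implies$ (6) $\implies$ (6') $\implies$ (4') $\implies \dots \implies$ (1). Everything else in your proposal matches the paper's proof: the equivalence of (1), (2), (3), the chain (1) $\Rightarrow$ (6) $\Rightarrow$ (4) $\Rightarrow$ (5) $\Rightarrow$ (3), the use of Levi's lemma with the (proper) length function for (7) and (7'), and the identification $\mathscr{M}(\mathcal{T}) = \mathscr{S}(\mathcal{T})^\idGrp$; your two-case check (the letter $a$ for sequences containing some $\#_x$, and freeness of $\mathscr{S}(\hat{\mathcal{R}})$ for sequences in $\hat{R}^+$) is a perfectly good substitute for the paper's one-line observation that $\bm{p} \circ \alpha \in \{ \alpha', f_\alpha \}$ for all $\bm{p} \in Q^+$.
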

    \begin{proof}
      We first show that \ref{itm:solution}, \ref{itm:iRelation} and \ref{itm:sharpRelation} are equivalent. The implication \ref{itm:solution} $\implies$ \ref{itm:iRelation} is \autoref{lem:solutionImpliesRelation}, the implication \ref{itm:iRelation} $\implies$ \ref{itm:sharpRelation} is trivial and the implication \ref{itm:sharpRelation} $\implies$ \ref{itm:solution} is \autoref{lem:sharpRelationImpliesSolution}.
      
      The implications \ref{itm:solution} $\implies$ \ref{itm:notFree} and \ref{itm:solution} $\implies$ \ref{itm:notCancellative} are given by \autoref{prop:solutionImpliesNotFree}. The implication \ref{itm:notFree} $\implies$ \ref{itm:notIsomorphic} is trivial and the implication \ref{itm:notIsomorphic} $\implies$ \ref{itm:sharpRelation} follows from \autoref{lem:freeIfRelationsSharp} (as $\pi'$ cannot be a well-defined isomorphism in this case). The implication \ref{itm:notCancellative} $\implies$ \ref{itm:notFree} is trivial again.
      
      Finally, \ref{itm:notEquidivisible} is equivalent to \ref{itm:notFree} by \autoref{fct:leviLemma} since $\mathscr{S}(\mathcal{T})$ has a length function by \autoref{prop:lengthFunction}.
      
      For the monoid statements, observe that $\bm{p} \neq_{\mathcal{T}} \varepsilon$ for all $\bm{p} \in Q^+$ (which can, for example, be seen by observing $\bm{p} \circ \alpha \in \{ \alpha', f_\alpha \}$; compare to \autoref{sfig:T2dual}). Thus, we have $\mathscr{M}(\mathcal{T}) = \mathscr{S}(\mathcal{T})^\idGrp$ for $\idGrp \not\in \mathscr{S}(\mathcal{T})$. This shows that \ref{itm:notFree} and \ref{itm:MnotFree} as well as \ref{itm:notIsomorphic} and \ref{itm:MnotIsomorphic} are equivalent, respectively (which is most easily seen using the negations of the statements). That \ref{itm:MnotFree} and \ref{itm:MnotEquidivisible} are equivalent follows again from the existence of a proper length function (\autoref{prop:lengthFunction}) and \autoref{fct:leviLemma}. Finally, if $\mathscr{M}(\mathcal{T})$ is free, it is also cancellative (by \autoref{fct:freeIsCancellative}, \ref{itm:MnotCancellative} $\implies$ \ref{itm:MnotFree}) and, in particular, left cancellative, which, in turn, is then also trivially true for $\mathscr{S}(\mathcal{T})$ (\ref{itm:notCancellative} $\implies$ \ref{itm:MnotCancellative}).
    \end{proof}

    \paragraph{Main Theorem and other Consequences.}
    \autoref{prop:equivalences} shows that we have reduced \DecProblem{PCP} to (the complement of) \DecProblem{Semigroup Freeness} and \DecProblem{Monoid Freeness} (as the construction of $\mathcal{T}$ is computable). In fact, we even have a reduction to a stronger version of the problem(s) where the alphabet size is fixed. Since \DecProblem{PCP} is undecidable \cite{post1946variant}, this shows our main result.
    \begin{theorem}\label{thm:mainTheorem}
      The problem
        \problem
          [an alphabet $\Sigma$ of size $|\Sigma| = 25$]
          {a (complete) \SAut $\mathcal{T}$ with input/output alphabet $\Sigma$}
          {if $\mathscr{S}(\mathcal{T})$ a free semigroup?}
      and the problem
        \problem
          [an alphabet $\Sigma$ of size $|\Sigma| = 25$]
          {a (complete) \SAut $\mathcal{T}$ with input/output alphabet $\Sigma$}
          {if $\mathscr{M}(\mathcal{T})$ a free monoid?}
      are undecidable.
    \end{theorem}
    \begin{corollary}
      In particular, the freeness problem for automaton semigroups
        \SemigroupFreeness
      and the freeness problem for automaton monoids
        \MonoidFreeness
      are undecidable.
    \end{corollary}
    
    We also get the undecidability of a weaker form of the free presentation problem for automaton semigroups.
    \begin{theorem}\label{thm:partialSemigroupPresentation}
      The problem
      \problem
      [
        an alphabet $\Sigma$ of size $|\Sigma| = 25$
      ]
      {
        a (complete) \SAut $\mathcal{T} = (Q, \Sigma, \delta)$ and\newline
        a subset $P \subseteq Q$
      }{
        is $\mathscr{S}(\mathcal{T}) \simeq P^+$?
      }\noindent
      is undecidable.
    \end{theorem}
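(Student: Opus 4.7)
The plan is to reuse, essentially verbatim, the reduction from \DecProblem{PCP} constructed for the freeness theorem and simply package it with an appropriate subset $P$. Given a \DecProblem{PCP} instance over $\Lambda$ with index set $I$, I would compute the same complete \SAut $\mathcal{T} = (Q, \Sigma, \delta)$ as earlier in this section, and output additionally the subset
\[
  P = \Lambda \cup I \cup \{ \#_1, \#_2 \} \subseteq Q \text{.}
\]
The inclusion $P \subseteq Q$ holds because $\Lambda = \Lambda^1 \subseteq \hat{\Lambda}$, because $I \subseteq \hat{R}$ by definition of $\hat{R}$, and because $\#_1, \#_2 \in Q \setminus \hat{R}$. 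Since $R = \Lambda \cup I$, we moreover have the literal equality $P^+ = (R \cup \{ \#_1, \#_2 \})^+$ of free semigroups.

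The main step is then to invoke \autoref{prop:equivalences}: the equivalence of items (1) and (5) there states that the \DecProblem{PCP} instance has a solution if and only if $\mathscr{S}(\mathcal{T})$ fails to be isomorphic to $(R \cup \{ \#_1, \#_2 \})^+$. Contrapositively, $\mathscr{S}(\mathcal{T}) \simeq P^+$ holds precisely when the \DecProblem{PCP} instance has no solution. Since both $\mathcal{T}$ and $P$ are computable from the instance and \DecProblem{PCP} is undecidable, this gives the desired reduction and establishes the undecidability of the stated problem.

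There is essentially no real obstacle here, as all of the heavy technical work has already been carried out in the preceding development. The only things to verify are the trivial set-theoretic inclusion $P \subseteq Q$ via the chain $\Lambda \subseteq \hat{\Lambda} \subseteq \hat{R} \subseteq Q$, and the observation that the presentation target $P^+$ required by the problem is exactly the free semigroup appearing in \autoref{prop:equivalences}. No further lemmas or new cross-diagram arguments are needed.
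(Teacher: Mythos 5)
Your proposal is correct and follows the paper's own proof essentially verbatim: the paper likewise takes $P = R \cup \{ \#_1, \#_2 \}$ (which equals your $\Lambda \cup I \cup \{ \#_1, \#_2 \}$) and appeals to the equivalence of statements (1) and (5) in \autoref{prop:equivalences}. Your additional check that $P \subseteq Q$ via $\Lambda = \Lambda^1 \subseteq \hat{\Lambda} \subseteq \hat{R} \subseteq Q$ is a harmless (and correct) elaboration of a point the paper leaves implicit.
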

    \begin{proof}
      We can use the same reduction and choose $P = R \cup \{ \#_1, \#_2 \}$. The correctness of the reduction is then stated in \autoref{prop:equivalences}.
    \end{proof}
    \begin{remark*}
      Of course, we also get a corresponding monoid result but, for monoids, we will prove something stronger in \autoref{sec:monoidPresentation} anyway.
    \end{remark*}
    
    Additionally, we also get from our construction that it is not decidable whether a given \SAut generates a (left) cancellative or an equidivisible semigroup/monoid. This, again, follows form \autoref{prop:equivalences}.
    \begin{theorem}\label{thm:cancellativityEquidivisibilityIsUndecidable}
      The problems
      \problem
        [an alphabet $\Sigma$ of size $|\Sigma| = 25$]
        {a (complete) \SAut $\mathcal{T}$ with input/output alphabet $\Sigma$}
        {is $\mathscr{S}(\mathcal{T})$ (left) cancellative/equidivisible?}\noindent
      and the problems
      \problem
        [an alphabet $\Sigma$ of size $|\Sigma| = 25$]
        {a (complete) \SAut $\mathcal{T}$ with input/output alphabet $\Sigma$}
        {is $\mathscr{M}(\mathcal{T})$ (left) cancellative/equidivisible?}\noindent
      are undecidable.
    \end{theorem}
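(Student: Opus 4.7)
The plan is to simply invoke the reduction from \DecProblem{PCP} that has been built up throughout \autoref{sct:semigroupFreeness}: on input of a \DecProblem{PCP} instance, one effectively computes the complete \SAut $\mathcal{T} = \mathcal{T}_1 \cup \mathcal{T}_2$. As already noted after its construction, this map is computable, so nothing new needs to be built.

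The key observation is that \autoref{prop:equivalences} already packages exactly the properties we care about together with the existence of a \DecProblem{PCP} solution. Concretely, items \ref{itm:notCancellative}, \ref{itm:notEquidivisible}, \ref{itm:MnotCancellative}, and \ref{itm:MnotEquidivisible}, which assert non-(left-)cancellativity and non-equidivisibility of $\mathscr{S}(\mathcal{T})$ and $\mathscr{M}(\mathcal{T})$, respectively, are each equivalent to item \ref{itm:solution} (the \DecProblem{PCP} instance admits a solution). Therefore, for each of the four decision problems in the statement of the theorem, its complement coincides with \DecProblem{PCP} under our construction.

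The conclusion is then immediate: since \DecProblem{PCP} is classically undecidable \cite{post1946variant}, and undecidability transfers to complements, all four problems are undecidable. I expect no substantive obstacle, because the hard implications have already been carried out—in particular, \autoref{prop:solutionImpliesNotFree} gives \ref{itm:solution} $\implies$ \ref{itm:notCancellative} (noting that failure of left cancellativity implies failure of two-sided cancellativity, which handles both variants), whereas \autoref{lem:sharpRelationImpliesSolution} combined with \autoref{fct:leviLemma} and \autoref{prop:lengthFunction} yields the reverse directions involving equidivisibility. The present theorem is thus just a direct corollary of \autoref{prop:equivalences}.
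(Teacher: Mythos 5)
Your proposal is correct and is essentially the paper's own argument: the paper likewise presents this theorem as an immediate consequence of \autoref{prop:equivalences} applied to the computable reduction from \DecProblem{PCP}, with no further proof given. Your added remarks on which implications of \autoref{prop:equivalences} do the work (and on left versus two-sided cancellativity) are accurate but not a departure from the paper's route.
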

    
    Finally, we obtain that it is undecidable whether a given map on the generators induces a homomorphism (or an isomorphism) between two automaton semigroups. Note that the isomorphism problem for automaton groups (and, thus, also for automaton semigroups and monoids) is known to be undecidable (as it follows from \cite{sunic2012conjugacy}).
    \begin{theorem}
      The problems
      \problem
      [
        alphabets $\Sigma_1$ and $\Sigma_2$ of size $|\Sigma_1| = 25$ and $|\Sigma_2| = 2$
      ]{
        two (complete) \SAuta $\mathcal{T}_1 = (Q_1, \Sigma_1, \delta_1)$ and $\mathcal{T}_2 = (Q_2, \Sigma_2, \delta_2)$\newline
        and
        a map $f: Q_1 \to Q_2$
      }{
        does $f$ extend into a homomorphism $\mathscr{S}(\mathcal{T}_1) \to \mathscr{S}(\mathcal{T}_2)$?
      }\noindent
      and
      \problem[
        alphabets $\Sigma_1$ and $\Sigma_2$ of size $|\Sigma_1| = 25$ and $|\Sigma_2| = 2$
      ]{
        two (complete) \SAuta $\mathcal{T}_1 = (Q_1, \Sigma_1, \delta_1)$ and $\mathcal{T}_2 = (Q_2, \Sigma_2, \delta_2)$\newline
        and
        a map $f: Q_1 \to Q_2$
      }{
        does $f$ extend into an isomorphism $\mathscr{S}(\mathcal{T}_1) \to \mathscr{S}(\mathcal{T}_2)$?
      }\noindent
      are undecidable.
    \end{theorem}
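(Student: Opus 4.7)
The plan is to recycle the reduction from \DecProblem{PCP} constructed in \autoref{sct:semigroupFreeness}. Given a \DecProblem{PCP} instance, I would take the automaton $\mathcal{T}$ produced there as the first input $\mathcal{T}_1$, compute an auxiliary complete \SAut on the \emph{same} state set $Q = \hat{R} \uplus \{ \#_1, \#_2 \}$ whose generated semigroup is $(R \cup \{ \#_1, \#_2 \})^+$ as the second input $\mathcal{T}_2$, and use the identity on $Q$ as $f$. All three components are clearly computable, so the reduction is effective.

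To build the second input, I would start from $\hat{\mathcal{R}}$ (whose state set is $\hat{R}$ and whose generated semigroup is $R^+$ by \autoref{prop:adjoinFreeGenerator}) and apply \autoref{prop:adjoinFreeGenerator} twice in succession to adjoin $\#_1$ and $\#_2$ as fresh free generators. This yields the required complete \SAut with state set $Q$ generating $R^+ \star \#_1^+ \star \#_2^+ = (R \cup \{ \#_1, \#_2 \})^+$; moreover, the resulting isomorphism $\mathscr{S}(\mathcal{T}_2) \to (R \cup \{ \#_1, \#_2 \})^+$ is the one induced by the map $\pi'$ from \autoref{lem:freeIfRelationsSharp} (sending $\hat{\lambda} \mapsto \pi(\hat{\lambda})$, $i \mapsto i$ and $\#_x \mapsto \#_x$).

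With this setup, the identity $f$ extends to a well-defined homomorphism $\mathscr{S}(\mathcal{T}_1) \to \mathscr{S}(\mathcal{T}_2)$ precisely when $\bm{p} =_{\mathcal{T}} \bm{q}$ implies $\pi'(\bm{p}) = \pi'(\bm{q})$ for all $\bm{p}, \bm{q} \in Q^+$, which is exactly condition~(2) of \autoref{lem:freeIfRelationsSharp}. By that lemma, this is equivalent to $\pi'$ inducing an isomorphism $\mathscr{S}(\mathcal{T}) \to (R \cup \{ \#_1, \#_2 \})^+$ (its condition~(3)), and, by \autoref{prop:equivalences}, to the \DecProblem{PCP} instance having no solution. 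When this holds, the induced homomorphism is automatically an isomorphism: surjectivity is immediate because $f$ hits every generator, and injectivity follows from the equivalences in \autoref{lem:freeIfRelationsSharp}, since both $=_{\mathcal{T}}$ and $=_{\mathcal{T}_2}$ then coincide with the kernel of $\pi'$ on $Q^+$. Thus the homomorphism- and isomorphism-extension problems share the same answer and both reduce to \DecProblem{PCP}, yielding undecidability. I do not expect a genuine obstacle here: the heavy lifting is done inside the earlier lemmas, and the only design point is to ensure that $\mathcal{T}_2$ shares its state set with $\mathcal{T}_1$ so that the identity becomes a valid choice for $f$.
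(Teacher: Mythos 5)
Your proposal is correct and follows essentially the same route as the paper: both take $\mathcal{T}_1 = \mathcal{T}$, build a second automaton generating $(R \cup \{ \#_1, \#_2 \})^+$, encode $\pi'$ as the candidate homomorphism, and conclude via \autoref{lem:freeIfRelationsSharp} and \autoref{prop:equivalences}. The only (cosmetic) difference is that you arrange for $\mathcal{T}_2$ to share the state set $Q$ so that $f$ can be the identity, whereas the paper takes $\mathcal{T}_2$ from \autoref{fct:freeComputable} and lets $f$ be the restriction of $\pi'$ to states, using a union of powers of $\mathcal{T}_2$ so that each image $\pi'(\hat{\lambda})$ is a single state.
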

    \begin{proof}
      We can use the same reduction from (the complement of) \DecProblem{PCP} for both problems. For $\mathcal{T}_1$, we use the automaton $\mathcal{T}$ constructed above and, for $\mathcal{T}_2$, we use an \SAut with $\mathscr{S}(\mathcal{T}_2) \simeq (R \cup \{ \#_1, \#_2 \})^+$ (which is computable by \autoref{prop:freeComputable}). For the map $f$, we can restrict $\pi'$ from \autoref{lem:freeIfRelationsSharp} into a map $Q \to Q_2$ (potentially using a union of appropriate powers of $\mathcal{T}_2$). Now, by \autoref{prop:equivalences}, the \DecProblem{PCP} instance has a solution if and only if we have $\pi_\#(\bm{p}) \neq \pi_\#(\bm{q})$ for some $\bm{p}, \bm{q} \in Q^+$ with $\bm{p} =_\mathcal{T} \bm{q}$. By \autoref{lem:freeIfRelationsSharp}, this is the case if and only if $\pi'$ does not induce a well-defined homomorphism/isomorphism $\mathscr{S}(\mathcal{T}) \to (R \cup \{ \#_1, \#_2 \})^+$.
    \end{proof}
  \end{section}
  
  \begin{section}{Free Presentations of Monoids}\label{sec:monoidPresentation}
    In this section, we show that \DecProblem{Free Monoid Presentation} is undecidable (which is stronger than the result for semigroups stated in \autoref{thm:partialSemigroupPresentation}) using a reduction similar to the one presented in \autoref{sct:semigroupFreeness}. This time we use a variant of \DecProblem{PCP} where we pad the components of the tiles to have the same length. To this end, let $\Lambda$ be an alphabet,\footnote{Again, the reader may find it convenient to refer to \autoref{tbl:monoidSymbols} for a summary of the symbols defined in this section.} choose some padding symbol $e \not\in \Lambda$ and define $\hat{\Lambda} = \Lambda \cup \{ e \}$.
    \begin{definition}[natural projection]
      Let $\pi: \hat{\Lambda}^* \to \Lambda^*$ be the natural projection given by $\pi(\lambda) = \lambda$ for all $\lambda \in \Lambda$ and $\pi(e) = \varepsilon$.
      
      We call two words $u, v \in \hat{\Lambda}^*$ \emph{$e$-equivalent} and write $u =_e v$ if they have the same image under $\pi$, i.\,e.\ we have $u =_e v \iff \pi(u) = \pi(v)$.
    \end{definition}
    
    With this definition at hand, let \DecProblem{ePCP} be the problem
    \problem[
      an alphabet $\Lambda$ and\newline
      a padding symbol $e \not\in \Lambda$
    ]{
      a number $L \in \mathbb{N}$ and\newline
      homomorphisms $\varphi, \psi: I = \{ 1, \dots, n \} \to (\Lambda \cup \{ e \})^*$ with\newline
      $|\varphi(i)| = |\psi(i)| = L$ for all $i \in I$.
    }{
      $\exists \bm{i} \in I^+: \varphi(\bm{i}) =_e \psi(\bm{i})$?
    }\noindent
    Clearly, \DecProblem{ePCP} is undecidable as we can reduce \DecProblem{PCP} to it by padding all $\varphi(i)$ and $\psi(i)$ to the same length $L$ using the padding symbol $e$.
    
    As in \autoref{sct:semigroupFreeness}, we fix an \DecProblem{ePCP} instance $\varphi, \psi, L, I$ with padding symbol $e$ and alphabet $\Lambda$ and compute from it an \SAut $\mathcal{T} = (Q, \Sigma, \delta)$ with $e \in Q$ acting as the identity in such a way that the \DecProblem{ePCP} instance has a solution if and only if $\mathscr{M}(\mathcal{T})$ is \textbf{not} (isomorphic to) $(Q \setminus \{ e \})^*$.
    
    The reduction steps are similar as before. First, we assume without loss of generality that we have $L \geq 2$, $|I| + |\Lambda| \geq 2$, $I \cap \Lambda = \emptyset$ and $e \not\in \Lambda, I$, and let $R = \Lambda \cup I$ as well as $\hat{R} = \hat{\Lambda} \cup I = \Lambda \cup I \cup \{ e \}$. Naturally, we can also extend the natural projection $\pi$ into a homomorphism $\pi: \hat{R}^* \to R^*$ by letting $\pi(i) = i$ for all $i \in I$. This yields that $\hat{\bm{r}}_1, \hat{\bm{r}}_2 \in \hat{R}^*$ are $e$-equivalent if they become equal when we remove all letters $e$ (compare to \autoref{def:naturalProjection}).
    
    \paragraph{Definition of the Automaton $\hat{\mathcal{R}}$.}
    Then, we compute an \SAut $\hat{\mathcal{R}}$ with state set $\hat{R}$ which generates the free monoid over $R$ and in which $e$ acts as the identity. This time, we do not use a power automaton as this would always create relations in the generated monoid (which we need to avoid for a free presentation).
    \begin{fact}
      On input $I$, one can compute an \SAut $\hat{\mathcal{R}} = (\hat{R}, \Gamma, \rho)$ with state set $\hat{R} = \Lambda \cup I \cup \{ e \}$ and $\mathscr{M}(\mathcal{T}) = \mathscr{S}(\mathcal{T}) \simeq R^*$ where $e$ acts as the identity (i.\,e.\ $e =_{\hat{\mathcal{R}}} \varepsilon$).
    \end{fact}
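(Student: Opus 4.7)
The plan is to build on \autoref{ex:freeSemigroups} directly rather than on the power-automaton construction underlying \autoref{prop:adjoinFreeGenerator}, precisely because taking any non-trivial power of an automaton forces $\varepsilon$ to have more than one preimage among state sequences, and such extra relations would obstruct a free presentation. The payoff is that we can adjoin a single honest identity state instead.

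First, I would invoke \autoref{fct:freeComputable} with the finite alphabet $R = \Lambda \cup I$ (which has cardinality at least $2$ by our standing assumption $|I| + |\Lambda| \geq 2$, possibly after handling the degenerate cases separately) to compute a complete \SAut $\mathcal{R} = (R, R, \rho')$ with $\mathscr{S}(\mathcal{R}) \simeq R^+$, in which each state $r \in R$ represents the corresponding free generator. Then I would adjoin a single new state $e \notin R$ with the self-loops $\trans{e}{a}{a}{e}$ for every $a \in R$, setting $\rho = \rho' \cup \{\trans{e}{a}{a}{e} \mid a \in R\}$ and keeping the alphabet $\Gamma = R$. The resulting automaton $\hat{\mathcal{R}} = (\hat{R}, R, \rho)$ with $\hat{R} = R \cup \{e\}$ is clearly a complete \SAut, and an immediate cross-diagram induction yields $e \circ u = u$ and $e \cdot u = e$ for every $u \in R^*$, so $e =_{\hat{\mathcal{R}}} \varepsilon$ as required.

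What remains is the isomorphism $\mathscr{S}(\hat{\mathcal{R}}) \simeq R^*$ with $e \mapsto \varepsilon$ and $r \mapsto r$ for $r \in R$. Since $e$ acts trivially, every $\hat{\bm{r}} \in \hat{R}^+$ is $=_{\hat{\mathcal{R}}}$-equivalent either to its $e$-free projection in $R^+$ (if that projection is non-empty) or to $e$. The key point is that these two classes are pairwise distinct: because the $e$-transitions output their input unchanged and stay in $e$, any cross diagram over $\hat{\mathcal{R}}$ restricts along the $R$-rows to a valid cross diagram over $\mathcal{R}$ (the $e$-rows are mere \emph{passes}). Hence the restriction of $=_{\hat{\mathcal{R}}}$ to $R^+$ coincides with $=_{\mathcal{R}}$, which equals equality in $R^+$ by \autoref{ex:freeSemigroups}. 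Combined with $[e]_{\hat{\mathcal{R}}}$ being the identity, this gives $\mathscr{S}(\hat{\mathcal{R}}) = R^+ \uplus \{[e]_{\hat{\mathcal{R}}}\} \simeq R^*$ via the stated map, and $\mathscr{M}(\hat{\mathcal{R}}) = \mathscr{S}(\hat{\mathcal{R}})$ since the semigroup already contains a neutral element.

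The only step that requires any care at all is verifying that inserting $e$-states does not accidentally collapse distinct $R^+$-sequences, and this is precisely the reason we avoided the power automaton: with the cleanly separated identity state $e$, cross diagrams factor through $\mathcal{R}$ on the $R$-part, so no spurious relations can arise.
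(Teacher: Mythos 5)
Your proof is correct and takes essentially the same approach as the paper: apply \autoref{fct:freeComputable} to $R = \Lambda \cup I$ and adjoin a single new state $e$ with $c/c$ self-loops. The paper dispatches this in two sentences (relying on the observation already made in \autoref{ex:freeSemigroups} that adjoining such an identity state yields $R^*$); your extra verification that the identity state introduces no spurious relations is the same argument, merely spelled out.
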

    \begin{proof}
      We can compute an \SAut $\mathcal{R} = (R, \Gamma, \rho_1)$ with $\mathscr{S}(\mathcal{R}) \simeq R^+$ by \autoref{prop:freeComputable}. To obtain $\hat{\mathcal{R}}$, we simply add the new state $e$ with $c / c$ self-loops for all $c \in \Gamma$.
    \end{proof}
    \begin{remark*}
      From \autoref{prop:freeComputable}, we obtain that $\Gamma$ is a binary alphabet. However, we will not use this fact as the alphabet of the eventual automaton $\mathcal{T}$ constructed for the reduction will depend on $L$ and $|I|$ and, thus, on the input instance anyway.
    \end{remark*}
    
    By the construction of $\hat{\mathcal{R}}$, we have $\hat{\bm{r}}_1 =_e \hat{\bm{r}}_2$ if and only if $\hat{\bm{r}}_1 =_{\hat{\mathcal{R}}} \hat{\bm{r}}_2$ for all $\hat{\bm{r}}_1, \hat{\bm{r}}_2 \in \hat{R}^*$.
    
    \paragraph{Definition of the Automaton $\mathcal{S}$.}
    The next step is to compute $\mathcal{S}$ from $\hat{\mathcal{R}}$. Here, we introduce two new states $\#_1, \#_2 \not\in \hat{R}$ that act as the identity. We let $Q = \hat{R} \cup \{ \#_1, \#_2 \}$ and $\mathcal{S} = (Q, \Gamma, \sigma)$ with the transitions
    \[
      \sigma = \rho \cup \{ \trans{\#_x}{c}{c}{e} \mid x \in \{ 1, 2 \}, c \in \Gamma \} \text{.}
    \]
    Clearly, we still have $\mathscr{M}(\mathcal{S}) = \mathscr{M}(\hat{\mathcal{R}}) \simeq R^*$.
    
    \paragraph{Definition of the Automaton $\mathcal{T}$.}
    Before we can finally define $\mathcal{T}$, we first add two new letters $a$ and $b$ to the alphabet and the transitions depicted in \autoref{sfig:monoidT1transitions} (compare to \autoref{fig:T1transitions}). This yields the complete \SAut $\mathcal{T}_1 = (Q, \Gamma \cup \{ a, b \}, \delta_1)$.
    \begin{figure}\centering
      \subcaptionbox{new transitions\label{sfig:monoidT1transitions}}{%
        \begin{tikzpicture}[auto, shorten >=1pt, >=latex, baseline=(s.base)]
          \node[state] (r) {$\hat{r}$};
          \node[state, right=of r, dashed] (lambda) {$e$};
          \node[state, below=1.5cm of r.center, anchor=center] (s) {$\#_x$};
          \node[state, below=1.5cm of lambda.center, anchor=center, dashed] (lambdas) {$e$};
          
          \path[->] (r) edge[loop left] node {$b / b$} (r)
                        edge node {$a/a$} (lambda)
                    (s) edge[loop left] node {$b / b$} (s)
                        edge node[swap] {$a/b$} (lambdas)
          ;
        \end{tikzpicture}
      }
      \subcaptionbox{dual transitions\label{sfig:monoidT1dual}}{%
        \begin{tikzpicture}[auto, shorten >=1pt, >=latex, baseline=(a.base)]
          \node[state] (a) {$a$};
          \node[state, right=of a] (b) {$b$};
          
          \path[->] (a) edge[loop left] node {$\hat{r}/e$} (a)
                        edge node {$\#_x/e$} (b)
                    (b) edge[loop right] node {$\id_Q$} (b)
          ;
        \end{tikzpicture}
      }
      \caption{The new transitions for $\mathcal{T}_1$. The transitions exist for all $\hat{r} \in \hat{R}$ and $x \in \{ 1, 2 \}$; in particular, $e$ still acts as the identity also on $a$ and $b$. The transitions for dashed states are implicitly defined and $\id_Q$ indicates that we have $q/q$ transitions for all $q \in Q$.}\label{fig:monoidT1transitions}
    \end{figure}
    
    Finally, we take $\mathcal{T} = \mathcal{T}_1 \cup \mathcal{T}_2$ where $\mathcal{T}_2$ is given by \autoref{sfig:monoidT2} (compare to \autoref{fig:T2andDual}).
    This time $\mathcal{T}_2$ is a bit more complicated than in the semigroup case because we cannot assume that $\varphi(i)$ and $\psi(i)$ are single states in our automaton.
    \AddThispageHook{\thispagestyle{empty}}
    \begin{figure}
      \begin{subfigure}{\linewidth}\centering
        \begin{tikzpicture}[auto, shorten >=1pt, >=latex, baseline=(e.base)]
          \node[state, dashed] (e) {$e$};
          \node[state, left=3cm of e] (s1) {$\#_1$};
          \node[state, right=3cm of e] (s2) {$\#_2$};
        
          \path[->] (s1) edge node[align=center] {%
            $\alpha_0/f_\alpha,\; \alpha_{i, \ell}/f_\alpha\; f_\alpha/f_\alpha$\\%
            $\beta_0/f_\beta,\; \beta_{i, \ell}/f_\beta,\; f_\beta/f_\beta$\\%
            $\alpha_L/f,\; \beta_L/f,\; f/f$}
                              node[below] {$\iota/\alpha_0$} (e)
                    (s2) edge node[above, align=center] {%
            $\alpha_0/f_\alpha,\; \alpha_{i, \ell}/f_\alpha\; f_\alpha/f_\alpha$\\%
            $\beta_0/f_\beta,\; \beta_{i, \ell}/f_\beta,\; f_\beta/f_\beta$\\%
            $\alpha_L/f,\; \beta_L/f,\; f/f$}
                              node[below] {$\iota/\beta_0$} (e)
        ;
        \end{tikzpicture}\\
        \begin{tikzpicture}[auto, shorten >=1pt, >=latex, baseline=(e.base)]
          \node[state] (i) {$i$};
          \node[state, ellipse, above left=0cm and 4cm of i, dashed] (phi1) {$\varphi_1(i)$};
          \node[state, ellipse, below left=0cm and 4cm of i, dashed] (phil+1) {$\varphi_{\ell + 1}(i)$};
          \node[state, below=2cm of i, dashed] (e) {$e$};
          \node[state, ellipse, above right=0cm and 4cm of i, dashed] (psi1) {$\psi_1(i)$};
          \node[state, ellipse, below right=0cm and 4cm of i, dashed] (psil+1) {$\psi_{\ell + 1}(i)$};
        
          \path[->] (i) edge node[base right, pos=0.8] {$\id_{\{ \iota, f_\alpha, f_\beta, f \}}$}
                             node[base left, pos=0.8, align=center] {$\alpha_{j, \ell}/f_\alpha$\\$\beta_{j, \ell}/f_\beta$} (e)
                        edge node[above, sloped] {$\alpha_0/\alpha_{i, 1},\; \alpha_L/\alpha_{i, 1}$} (phi1)
                        edge node[below, sloped] {$\alpha_{i, \ell}/\alpha_{i, \ell + 1}$} (phil+1)
                        edge node[above, sloped] {$\beta_0/\beta_{i, 1},\; \beta_L/\beta_{i, 1}$} (psi1)
                        edge node[below, sloped] {$\beta_{i, \ell}/\beta_{i, \ell + 1}$} (psil+1)
          ;
        \end{tikzpicture}\\
        \begin{tikzpicture}[auto, shorten >=1pt, >=latex, baseline=(e.base)]
          \node[state] (l) {$\lambda$};
          \node[state, right=4cm of l] (e) {$e$};
          
          \path[->] (l) edge node {$\id_{\{ \iota, f_\alpha, f_\beta, f \}}$}
                             node[below, align=center] {%
                               $\alpha_0/f_\alpha,\; \alpha_{i, \ell}/f_\alpha,\; \alpha_L/f_\alpha$\\%
                               $\beta_0/f_\beta,\; \beta_{i, \ell}/f_\beta,\; \beta_L/f_\beta$} (e)
                    (e) edge[loop right] node {$\id_\Sigma$} (e)
          ;
        \end{tikzpicture}
        \caption{Schematic depiction of $\mathcal{T}_2$ over the (new) alphabet $\{ \alpha_0, \alpha_L, \beta_0, \beta_L \} \cup \{ \alpha_{i, \ell}, \beta_{i, \ell} \mid i \in I, 1 \leq \ell < L \}$. The transitions exist for all $i \in I$, $j \in I \setminus \{ i \}$, $\lambda \in \Lambda$ and $1 \leq \ell < L$; transitions labeled by $\id_X$ for $X \subseteq \Sigma$ indicate that we have an $x/x$ transition for all $x \in X$ and we use $\varphi_\ell(i)$ and $\psi_\ell(i)$ to denote the letters of $\varphi(i)$ and $\psi(i)$, respectively, where we have $\varphi(i) = \varphi_L(i) \dots \varphi_1(i)$ and $\psi(i) = \psi_L(i) \dots \psi_1(i)$. Note that also the transitions at the dashed states are defined.}\label{sfig:monoidT2}
      \end{subfigure}\\
      \begin{subfigure}{\linewidth}\centering
        \begin{tikzpicture}[auto, shorten >=1pt, >=latex, baseline=(i.base)]
          \node[state] (i) {$\iota$};
          \node[state, above=0.5cm of i] (a0) {$\alpha_0$};
          \node[state, right=1.5cm of a0] (a1) {$\alpha_{i, 1}$};
          \node[right=1.5cm of a1] (adots) {$\dots$};
          \node[state, ellipse, right=1.5cm of adots] (a-1) {$\alpha_{i, L - 1}$};
          \node[state, right=1.5cm of a-1] (aL) {$\alpha_L$};
          \node[state, above=1.5cm of adots] (fa) {$f_\alpha$};
          \node[state, anchor=base] at ($(i.base-|aL.base)$) (f) {$f$};
          
          \node[state, below=0.5cm of i] (b0) {$\beta_0$};
          \node[state, right=1.5cm of b0] (b1) {$\beta_{i, 1}$};
          \node[right=1.5cm of b1] (bdots) {$\dots$};
          \node[state, ellipse, right=1.5cm of bdots] (b-1) {$\beta_{i, L - 1}$};
          \node[state, right=1.5cm of b-1] (bL) {$\beta_L$};
          \node[state, below=1.5cm of bdots] (fb) {$f_\beta$};

          \path[->] (i) edge[loop left] node {$r/e$} (i)
                        edge node {$\#_1/e$} (a0)
                        edge node[swap] {$\#_2/e$} (b0)
                    (a0) edge node {$i/\varphi_1(i)$} (a1)
                         edge[bend left] node {$\begin{aligned}
                           \#_x, \lambda &/ e
                           \end{aligned}$} (fa)
                    (a1) edge node {$i/\varphi_2(i)$} (adots)
                         edge[bend left] node[below right, pos=0.6] {$\begin{aligned}
                           \#_x, \lambda, j &/ e
                           \end{aligned}$} (fa)
                    (adots) edge node {$i/\varphi_{L - 1}(i)$} (a-1)
                    (a-1) edge node {$i/\varphi_{L}(i)$} (aL)
                            edge[bend right] node[left, pos=0.2] {$\begin{aligned}
                              \#_x, \lambda, j &/ e
                            \end{aligned}$} (fa)
                    (aL) edge[bend left] node[swap] {$i/\varphi_1(i)$} (a1)
                         edge[bend right] node[swap, pos=0.2] {$\lambda / e$} (fa)
                         edge node {$\#_x / e$} (f)
                    (fa) edge[loop above] node[pos=0.25, left] {$q/e$} (fa)
                    (f) edge[loop right] node {$q/e$} (f)
                    (b0) edge node[swap] {$i/\psi_1(i)$} (b1)
                         edge[bend right] node[swap] {$\begin{aligned}
                           \#_x, \lambda &/ e
                         \end{aligned}$} (fb)
                    (b1) edge node[swap] {$i/\psi_2(i)$} (bdots)
                         edge[bend right] node[above right, pos=0.6] {$\begin{aligned}
                           \#_x, \lambda, j &/ e
                         \end{aligned}$} (fb)
                    (bdots) edge node[swap] {$i/\psi_{L - 1}(i)$} (b-1)
                    (b-1) edge node[swap] {$i/\psi_{L}(i)$} (bL)
                          edge[bend left] node[left, pos=0.2] {$\begin{aligned}
                            \#_x, \lambda, j &/ e
                          \end{aligned}$} (fb)
                    (bL) edge[bend right] node[] {$i/\psi_1(i)$} (b1)
                         edge[bend left] node[pos=0.2] {$\lambda / e$} (fb)
                         edge node[swap] {$\#_x / e$} (f)
                    (fb) edge[loop below] node[pos=0.25, right] {$q/e$} (fb)
          ;
        \end{tikzpicture}
        \caption{The dual $\partial \mathcal{T}_2$. The transitions (and states) exist for all $i \in I, j \in I \setminus \{ i \}, x \in \{ 1, 2 \}, q \in Q, r \in R$ and $\lambda \in \Lambda$. Transitions with multiple inputs exist for all these inputs (with the same output). We write $\varphi(i) = \varphi_L(i) \dots \varphi_1(i)$ and $\psi(i) = \psi_L(i) \dots \psi_1(i)$ for the individual letters $\varphi_1(i), \dots, \varphi_L(i), \psi_1(i), \dots, \psi_L(i) \in \Lambda$. The $e/e$ loops at all states are omitted.}\label{sfig:monoidT2dual}
      \end{subfigure}
      \caption{The automaton $\mathcal{T}_2$ and its dual.}\label{fig:monoidT2}
    \end{figure}
    
    This means that $\mathcal{T} = (Q, \Sigma, \delta)$ is obtained from $\mathcal{T}_1$ by letting $\Sigma = \Gamma \cup \{ a, b \} \cup \{ \alpha_0, \alpha_L, \beta_0, \beta_L \} \cup \{ \alpha_{i, \ell}, \beta_{i, \ell} \mid i \in I, 1 \leq \ell < L \}$ (where all symbols are new) and adding the transitions depicted in \autoref{fig:monoidT2}. Note that the resulting automaton is a complete \SAut and can be computed. Note also that $e$ continues to act as an identity.
    
    \begin{remark}\label{rem:noFixedAlphabetForMonoidPresentation}
      In contrast to the reduction in \autoref{sct:semigroupFreeness}, the automaton $\mathcal{T}$ does not have an alphabet of fixed size. In fact (since we may assume $|\Gamma| = 2$), we have $|\Sigma| = 8 + |I| \cdot L$ (where $L$ and $I$ both are part of/depend on the input to the reduction function).
    \end{remark}
    
    \paragraph*{The Role of $a$ and $b$ in $\mathcal{T}$.}
    The role of $a$ and $b$ this time is very much the same as in the semigroup case above: we may use $a$ to remove a block from a certain factorization of a state sequence. Formally, this is stated in the following fact, which is an analogue of \autoref{fct:removeBlocks} and which may be proved using a similar (this time even simpler) induction.
    \begin{fact}\label{fct:monoidRemoveBlocks}
      Let $\bm{p} \in Q^*$ and factorize it as
      \[
        \bm{p} = (\bm{p}_s \#_{x_s}) \dots (\bm{p}_1 \#_{x_1}) \, \bm{p}_0
      \]
      for $\bm{p}_0, \dots, \bm{p}_s \in \hat{R}^*$ and $x_1, \dots, x_s \in \{ 1, 2 \}$.
      
      Then, for any $0 \leq \mu \leq s$, we have (in $\mathcal{T}$):
      \[
        \bm{p} \cdot a^{\mu} = (\bm{p}_s \#_{x_s}) \ldots (\bm{p}_{\mu + 1} \#_{x_{\mu + 1}}) \, \bm{p}_\mu \, e^{\mu + | \bm{p}_{\mu - 1} \ldots \bm{p}_0 |_R}
      \]
    \end{fact}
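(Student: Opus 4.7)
The plan is to proceed by a straightforward induction on $\mu$, closely paralleling (and simplifying) the proof of \autoref{fct:removeBlocks}.

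The base case $\mu = 0$ is immediate: we have $\bm{p} \cdot a^{0} = \bm{p}$ on the left, and on the right the empty product $\bm{p}_{-1} \ldots \bm{p}_0$ is naturally interpreted as $\varepsilon$, so the exponent of $e$ vanishes and we recover $\bm{p}$.

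For the inductive step (with $\mu + 1 \leq s$) I would use $\bm{p} \cdot a^{\mu+1} = (\bm{p} \cdot a^\mu) \cdot a$, apply the induction hypothesis, and then analyze the cross diagram obtained by feeding a single letter $a$ into the state sequence $(\bm{p}_s \#_{x_s}) \ldots (\bm{p}_{\mu+1} \#_{x_{\mu+1}}) \, \bm{p}_\mu \, e^{\mu + |\bm{p}_{\mu-1} \ldots \bm{p}_0|_R}$. Reading off the dual transitions from \autoref{sfig:monoidT1dual}: the rightmost $e$-block together with the states of $\bm{p}_\mu \in \hat{R}^*$ keeps the dual at state $a$ while outputting $a$ and turning every state into $e$; the separator $\#_{x_{\mu+1}}$ then moves the dual from $a$ to $b$, outputs $b$ and is itself replaced by $e$; finally every state further to the left reads $b$ via a $b/b$ self-loop and stays unchanged. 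The resulting state sequence is therefore of the desired shape $(\bm{p}_s \#_{x_s}) \ldots (\bm{p}_{\mu+2} \#_{x_{\mu+2}}) \, \bm{p}_{\mu+1} \, e^{k}$.

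The one delicate point — and essentially the whole content of the proof — is the value of the exponent $k$. A literal count yields $k = 1 + |\bm{p}_\mu| + \mu + |\bm{p}_{\mu-1} \ldots \bm{p}_0|_R$, whereas the statement asks for $k = (\mu+1) + |\bm{p}_\mu \bm{p}_{\mu-1} \ldots \bm{p}_0|_R$; the two quantities differ exactly by the number of $e$-letters sitting inside $\bm{p}_\mu$. Since $e$ was constructed to act as the identity throughout $\mathcal{T}$, we have $e^{j} =_{\mathcal{T}} e^{k}$ for all $j, k \geq 0$, so both exponents represent the same element of $\mathscr{M}(\mathcal{T})$ and the equality \enquote{in $\mathcal{T}$} holds as stated. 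This is precisely the point where the monoid construction is genuinely simpler than its semigroup counterpart \autoref{fct:removeBlocks}: there the analogous outputs $\lambda_R^{|\hat{r}|_R}$ depend nontrivially on the state being read and the length counts have to match literally, while here the identity behavior of $e$ absorbs any discrepancy.
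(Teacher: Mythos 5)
Your proof is correct and is exactly the induction the paper has in mind: the paper gives no written proof of this fact, saying only that it follows by a \enquote{similar (this time even simpler) induction} to that of \autoref{fct:removeBlocks}, which is precisely what you carry out (peeling one $a$ at a time via the dual transitions of \autoref{sfig:monoidT1dual}). Your observation about the exponent is also a genuine and correct catch --- the literal dual action yields $e^{\mu + |\bm{p}_{\mu-1}\dots\bm{p}_0|}$ rather than $e^{\mu + |\bm{p}_{\mu-1}\dots\bm{p}_0|_R}$, and the paper itself silently switches between the two forms when it applies the fact (compare the exponents in \autoref{lem:monoidRelationIsCompatible} and \autoref{prop:monoiNotIsomorphicImpliesSolution}), relying, as you do, on $e =_{\mathcal{T}} \varepsilon$ to make the discrepancy immaterial.
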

    
      \begin{table}[p]\centering
        \begin{tabular}{r@{}l}\toprule
          \textbf{symbol}\phantom{${}:{}$}& \textbf{usage} \\\midrule
          $\Lambda : {}$ & PCP base alphabet, $|I| + |\Lambda| \geq 2$, $I \cap \Lambda = \emptyset$\\
          $e \in {}$ & $\hat{\Lambda} \setminus \Lambda \subseteq \hat{R} \subseteq Q : {}$PCP padding symbol and identity state in $\hat{\mathcal{R}}$, $\mathcal{S}$ and $\mathcal{T}$\\
          $\hat{\Lambda} = {}$ & $\Lambda \uplus \{ e \}$\\
          $\pi : {}$ & $\hat{\Lambda}^* \to \Lambda$, $\hat{R}^* \to R^*$ natural projection with $\pi(e) = \varepsilon$ and $\pi(i) = i$ for all $i \in I$\\
          $I : {}$ & PCP index set, $|I| + |\Lambda| \geq 2$, $I \cap \Lambda = \emptyset$\\
          $\varphi, \psi : {}$ & $I \to (\Lambda \uplus \{ e \})^*$ PCP homomorphisms\\
          $L : {}$ & length of $\varphi(i)$ and $\psi(i)$ for all $i \in I$, $L \geq 2$, \textbf{also}:\\
          $L : {}$ & $I^* \to I^*$ homomorphism with $L(i) = i^L$ for all $i \in I$\\
          $R = {}$ & $\Lambda \uplus I$\\
          $\hat{R} = {}$ & $\hat{\Lambda} \uplus I = \Lambda \uplus I \uplus \{ e \} : {}$state set of $\mathcal{R}$\\
          $\hat{\mathcal{R}} = {}$ & $(\hat{R}, \Gamma, \rho) : {}$ complete \SAut generating $R^* = (\Lambda \cup I)^+$ with $e =_{\hat{\mathcal{R}}} \varepsilon$\\
          $\rho : {}$ & transition set of $\hat{\mathcal{R}}$\\
          $\Gamma : {} $ & alphabet of $\hat{\mathcal{R}}$ and $\mathcal{S}$\\
          
          $\#_1, \#_2 \not\in {}$ & $\hat{R} : {}$new states acting as the identity in $\mathcal{S}$\\
          $\mathcal{S} = {}$ & $(Q, \Gamma, \sigma) : {}$complete \SAut, extension of $\hat{\mathcal{R}}$ still generating $R^*$ with $e =_{\mathcal{S}} \varepsilon$\\
          $Q = {}$ & $\hat{R} \uplus \{ \#_1, \#_2 \} : {}$state set of $\mathcal{S}$ and $\mathcal{T}$\\
          $\sigma : {}$ & transition set of $\mathcal{S}$\\
          $a, b \not\in \Gamma : {}$ & new letters for $\mathcal{T}_1$\\
          $\mathcal{T}_1 = {}$ & $(Q, \Gamma \uplus \{ a, b \}, \delta_1) : {}$complete \SAut, extension of $\mathcal{S}$, see \autoref{fig:monoidT1transitions}\\
          $\delta_1 : {}$ & transition set of $\mathcal{T}_1$, see \autoref{fig:monoidT1transitions}\\
          $\mathcal{T} = {}$ & $(Q, \Sigma, \delta) = \mathcal{T}_1 \cup \mathcal{T}_2 : {}$complete \SAut with $e =_{\mathcal{T}} \varepsilon$, result of the reduction\\
          $\mathcal{T}_2 : {}$ & complete \SAut with new transitions for $\mathcal{T}$, see \autoref{fig:monoidT2}\\
          
          $\Sigma = {}$ & $\Gamma \uplus \{ a, b \} \uplus \{ \alpha_0, \alpha_L, \beta_0, \beta_L \} \uplus \{ \alpha_{i, \ell}, \beta_{i, \ell} \mid i \in I, 1 \leq \ell < L \} : {}$alphabet of $\mathcal{T}$\\
          $\pi_{\#} : {}$ & $Q^* \to \{ \#_1, \#_2 \}^*$ homomorphism with $\pi_{\#}(\#_x) = \#_x$ but $\pi_{\#}(\hat{r}) = \varepsilon$ for $\hat{r} \in \hat{R}$\\
          $\pi' :{}$ & $Q^* \to (R \cup \{ \#_1, \#_2 \})^*$ homomorphism extending $\pi$ with $\pi'(\#_x) = \#_x$ for $x \in \{ 1, 2 \}$\\
          \bottomrule
        \end{tabular}
        \caption{Various symbols in the order of their definition.}\label{tbl:monoidSymbols}
      \end{table}
    
    \paragraph{Correctness.}
    We have described how we may compute the complete \SAut $\mathcal{T} = (Q, \Sigma, \delta)$ where $e \in Q$ acts as the identity from the \DecProblem{ePCP} instance $\varphi, \psi, I, L$. Now we need to show that there is a solution for the \DecProblem{ePCP} instance if and only if $\mathscr{M}(\mathcal{T})$ is \textbf{not} (isomorphic to) $(Q \setminus \{ e \})^*$.
    
    Again, we start with the (easier) \enquote{only if} direction. Since $e$ acts as the identity (with respect to $\mathcal{T}$), we immediately obtain the following fact (compare to \autoref{lem:REquivalentImpliesTEquivalent}).
    \begin{fact}\label{fct:EEquivalentImpliesTEquivalent}
      Let $\hat{\bm{r}}_1, \hat{\bm{r}}_2 \in \hat{R}^*$ with $\hat{\bm{r}}_1 =_e \hat{\bm{r}}_2$. Then, we have $\hat{\bm{r}}_1 =_{\mathcal{T}} \hat{\bm{r}}_2$.
    \end{fact}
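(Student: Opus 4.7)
The plan is to exploit the fact that $e$ acts as the identity in $\mathcal{T}$, which makes this monoid-case analog of \autoref{lem:REquivalentImpliesTEquivalent} much shorter than its semigroup counterpart: the detailed case analysis over the letters of $\Sigma$ can be collapsed into a single cancellation argument. First, I would briefly verify that $e =_{\mathcal{T}} \varepsilon$. This is a matter of inspecting the construction step by step: the state $e$ was introduced in $\hat{\mathcal{R}}$ with $c/c$ self-loops for every $c \in \Gamma$; the passage from $\hat{\mathcal{R}}$ to $\mathcal{S}$ adds no new transitions at $e$; in $\mathcal{T}_1$ the new letters $a$ and $b$ induce $a/a$ and $b/b$ self-loops at $e$ (this is the dashed $e$ on the right-hand side of \autoref{sfig:monoidT1transitions}); and in $\mathcal{T}_2$ the state $e$ carries the $\id_\Sigma$ self-loop shown explicitly in \autoref{sfig:monoidT2}.

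Next I would show that every $\hat{\bm{r}} \in \hat{R}^*$ satisfies $\hat{\bm{r}} =_{\mathcal{T}} \pi(\hat{\bm{r}})$. Writing $\hat{\bm{r}} = \hat{\bm{s}}_k e \hat{\bm{s}}_{k-1} e \dots e \hat{\bm{s}}_0$ with $\hat{\bm{s}}_0, \dots, \hat{\bm{s}}_k \in (\hat{R} \setminus \{e\})^* = (R \cup I)^*$, a straightforward induction on $k$ using $e =_{\mathcal{T}} \varepsilon$ together with the fact that ${=_{\mathcal{T}}}$ is a congruence yields $\hat{\bm{r}} =_{\mathcal{T}} \hat{\bm{s}}_k \hat{\bm{s}}_{k-1} \dots \hat{\bm{s}}_0 = \pi(\hat{\bm{r}})$ (the last equality holds because $\pi$ acts as the identity on letters from $R$ and simply deletes every $e$).

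Finally, the hypothesis $\hat{\bm{r}}_1 =_e \hat{\bm{r}}_2$ unfolds to the literal word equality $\pi(\hat{\bm{r}}_1) = \pi(\hat{\bm{r}}_2) \in R^*$, so chaining the previous step gives
\[
  \hat{\bm{r}}_1 \;=_{\mathcal{T}}\; \pi(\hat{\bm{r}}_1) \;=\; \pi(\hat{\bm{r}}_2) \;=_{\mathcal{T}}\; \hat{\bm{r}}_2 \text{,}
\]
which is what we had to prove. I do not expect any real obstacle: the entire statement reduces to the identity behaviour of $e$, and this was deliberately built into each layer of the construction precisely so that we would not need to redo the intricate per-letter analysis from the semigroup case.
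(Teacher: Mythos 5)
Your proof is correct and follows exactly the route the paper intends: the paper states this as a Fact without a written proof, justifying it by the single observation that $e$ acts as the identity with respect to $\mathcal{T}$, so that $\hat{\bm{r}} =_{\mathcal{T}} \pi(\hat{\bm{r}})$ and the claim follows by chaining through $\pi(\hat{\bm{r}}_1) = \pi(\hat{\bm{r}}_2)$. Your write-up merely makes explicit the verification that every layer of the construction preserves the identity behaviour of $e$, which is a reasonable amount of added detail but not a different argument.
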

    
    Next, we show that a solution $\bm{i} \in I^+$ implies a (proper) relation in $\mathscr{M}(\mathcal{T})$ (compare to \autoref{lem:solutionImpliesRelation}), which shows that $\mathscr{M}(\mathcal{T})$ cannot be free with basis $Q \setminus \{ e \}$. Since we do not have powers in $\hat{\mathcal{R}}$ this time (and due to the construction in \autoref{fig:monoidT2}), we need to repeat each index/letter $L$ many times in the solution $\bm{i}$ to obtain the relation. In a slight abuse of notation, we define the homomorphism $L: I^* \to I^*$ by setting $L(i) = i^L$ for all $i \in I$.
    
    Before we prove the relation, we first show that, using this homomorphism, our construction behaves similar to the one we used in \autoref{sct:semigroupFreeness}.
    \begin{fact}\label{fct:monoidPhiPsiCrossDiagrams}
      For all $\alpha \in \{ \alpha_0, \alpha_L \}$, $\beta \in \{ \beta_0, \beta_L \}$ and $i \in I$, we have the cross diagrams
      \begin{center}
        \begin{tikzpicture}[baseline=(m-2-1.base)]
          \matrix[matrix of math nodes, text height=1.25ex, text depth=0.25ex] (m) {
                 & \alpha   &        \\
            L(i) &          & \varphi(i) \\
                 & \alpha_L &        \\
          };
          \foreach \i in {1} {
            \foreach \j in {1} {
              \draw[->] let
                \n1 = {int(2+\i)},
                \n2 = {int(1+\j)}
              in
                (m-\i-\n2) -> (m-\n1-\n2);
              \draw[->] let
                \n1 = {int(1+\i)},
                \n2 = {int(2+\j)}
              in
                (m-\n1-\j) -> (m-\n1-\n2);
            };
          };
        \end{tikzpicture}\quad
        and\quad
        \begin{tikzpicture}[baseline=(m-2-1.base)]
          \matrix[matrix of math nodes, text height=1.25ex, text depth=0.25ex] (m) {
                 & \beta   &         \\
            L(i) &         & \psi(i) \text{.} \\
                 & \beta_L &         \\
          };
          \foreach \i in {1} {
            \foreach \j in {1} {
              \draw[->] let
                \n1 = {int(2+\i)},
                \n2 = {int(1+\j)}
              in
                (m-\i-\n2) -> (m-\n1-\n2);
              \draw[->] let
                \n1 = {int(1+\i)},
                \n2 = {int(2+\j)}
              in
                (m-\n1-\j) -> (m-\n1-\n2);
            };
          };
        \end{tikzpicture}
      \end{center}
    \end{fact}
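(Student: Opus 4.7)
The plan is to verify both cross diagrams by a direct unfolding of the transitions of $\mathcal{T}_2$, equivalently by following a path in the dual automaton $\partial \mathcal{T}_2$ from \autoref{sfig:monoidT2dual}. I would unpack $L(i) = i^L$ as a column of $L$ copies of $i$ on the left of a cross diagram, put $\alpha$ (resp.\ $\beta$) on top as input, and trace what each copy of $i$ does with the letter it receives from above.

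The key observation is that, by construction of $\mathcal{T}_2$, state $i$ reacts to both inputs $\alpha_0$ and $\alpha_L$ identically, producing output $\alpha_{i,1}$ and moving to next state $\varphi_1(i)$; this simultaneously handles the first row of the diagram and both choices of $\alpha$. Afterwards, for $1 \leq \ell \leq L - 2$, the transitions $\alpha_{i,\ell}/\alpha_{i,\ell+1}$ at state $i$ (with next state $\varphi_{\ell+1}(i)$) take care of the intermediate rows, and the final transition $\alpha_{i,L-1}/\alpha_L$ (with next state $\varphi_L(i)$) produces the bottom letter $\alpha_L$. Reading off the new states on the right of the diagram from top to bottom yields $\varphi_1(i), \varphi_2(i), \ldots, \varphi_L(i)$; converted back to sequence notation, where by the cross diagram convention the topmost state corresponds to the rightmost letter, this is precisely $\varphi(i) = \varphi_L(i) \cdots \varphi_1(i)$. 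The $\beta$-case is verified by the fully analogous computation with $\beta_0, \beta_L, \beta_{i,\ell}$ and $\psi$ in place of $\alpha_0, \alpha_L, \alpha_{i,\ell}$ and $\varphi$.

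I do not foresee any real obstacle: the statement is essentially a bookkeeping exercise reading directly off the definition of $\mathcal{T}_2$, and the transitions of $\mathcal{T}_2$ have been designed precisely so that the chain $\alpha_0 \to \alpha_{i,1} \to \alpha_{i,2} \to \cdots \to \alpha_{i,L-1} \to \alpha_L$ and the corresponding next-state sequence $\varphi_1(i), \ldots, \varphi_L(i)$ line up in exactly $L$ steps. The only point that needs a bit of care is to respect the cross diagram convention fixed in \autoref{sec:preliminaries} (rightmost state at the top, applied first) when reversing between the column of next states and the left-to-right notation for $L(i) \cdot \alpha$.
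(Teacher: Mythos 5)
Your proposal is correct and follows exactly the paper's own argument: the paper likewise proves the fact by stacking the $L$ single-transition crosses $\trans{i}{\alpha}{\alpha_{i,1}}{\varphi_1(i)}$, $\trans{i}{\alpha_{i,\ell}}{\alpha_{i,\ell+1}}{\varphi_{\ell+1}(i)}$, \dots, $\trans{i}{\alpha_{i,L-1}}{\alpha_L}{\varphi_L(i)}$ read off from \autoref{sfig:monoidT2dual}, and identifying the resulting column of states with $\varphi(i) = \varphi_L(i) \cdots \varphi_1(i)$ under the cross-diagram ordering convention (symmetrically for $\beta$ and $\psi$). Your attention to the reversal convention and to the fact that $i$ treats $\alpha_0$ and $\alpha_L$ identically is exactly the bookkeeping the paper relies on.
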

    \begin{proof}
      The fact follows from the construction of $\mathcal{T}$ (see \autoref{sfig:monoidT2dual}) since we have
      \\\vspace{\parskip}{\hspace*{\fill}
        \begin{tikzpicture}[baseline=(m-4-1.base)]
          \matrix[matrix of math nodes, text height=1.25ex, text depth=0.25ex] (m) {
              & \alpha            &        \\
            i &                   & \varphi_1(i) \\
              & \alpha_{i, 1}     &        \\
            \vdots & \vdots       & \vdots \\
              & \alpha_{i, L - 1} &        \\
            i &                   & \varphi_L(i) \\
              & \alpha_L          &        \\
          };
          \foreach \i in {1,5} {
            \foreach \j in {1} {
              \draw[->] let
                \n1 = {int(2+\i)},
                \n2 = {int(1+\j)}
              in
                (m-\i-\n2) -> (m-\n1-\n2);
              \draw[->] let
                \n1 = {int(1+\i)},
                \n2 = {int(2+\j)}
              in
                (m-\n1-\j) -> (m-\n1-\n2);
            };
          };
          \draw[decorate, decoration=brace] (m-6-1.south west) -- node[base left, anchor=base, rotate=90, yshift=1ex] {$L$ times} (m-2-1.north west);
          \draw[decorate, decoration=brace] (m-2-3.north east) -- node[right] {${}= \varphi(i)$} (m-6-3.south east);
        \end{tikzpicture}\quad
        and\quad
        \begin{tikzpicture}[baseline=(m-4-1.base)]
          \matrix[matrix of math nodes, text height=1.25ex, text depth=0.25ex] (m) {
              & \beta            &        \\
            i &                  & \psi_1(i) \\
              & \beta_{i, 1}     &        \\
            \vdots & \vdots      & \vdots \\
              & \beta_{i, L - 1} &        \\
            i &                  & \psi_L(i) \\
              & \beta_L          &        \\
          };
          \foreach \i in {1,5} {
            \foreach \j in {1} {
              \draw[->] let
                \n1 = {int(2+\i)},
                \n2 = {int(1+\j)}
              in
                (m-\i-\n2) -> (m-\n1-\n2);
              \draw[->] let
                \n1 = {int(1+\i)},
                \n2 = {int(2+\j)}
              in
                (m-\n1-\j) -> (m-\n1-\n2);
            };
          };
          \draw[decorate, decoration=brace] (m-6-1.south west) -- node[base left, anchor=base, rotate=90, yshift=1ex] {$L$ times} (m-2-1.north west);
          \draw[decorate, decoration=brace] (m-2-3.north east) -- node[right] {${}= \psi(i)$} (m-6-3.south east);
        \end{tikzpicture}.}
    \end{proof}
    
    \begin{proposition}\label{prop:monoidSolutionImpliesRelation}
      If $\bm{i} \in I^+$ is a solution for the \DecProblem{ePCP} instance, then we have
      \[
        \#_1 L(\bm{i}) \#_1 =_{\mathcal{T}} \#_1 L(\bm{i}) \#_2 \text{.}
      \]
      In particular, we have $\mathscr{M}(\mathcal{T}) \not\simeq (Q \setminus \{ e \})^*$.
    \end{proposition}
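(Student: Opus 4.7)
My plan is to mirror the argument of \autoref{lem:solutionImpliesRelation} in the semigroup case. I show by induction on $|u|$ that
\[
  \#_1 L(\bm{i}) \#_1 \circ u = \#_1 L(\bm{i}) \#_2 \circ u
\]
for every $u \in \Sigma^*$, writing $u = cu'$ and splitting on the first letter $c \in \Sigma$. For each $c$ it suffices to check that both state sequences output the same letter on $c$ and to control the two residual state sequences $\#_1 L(\bm{i}) \#_1 \cdot c$ and $\#_1 L(\bm{i}) \#_2 \cdot c$: either the residuals coincide, or they are $=_{\mathcal{T}}$-equivalent (and hence act identically on $u'$), or they equal the original pair so that the inductive hypothesis applies directly.

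Most letters are routine. For $c \in \Gamma$, the rightmost $\#_x$ rewrites $c$ to $c$ and transitions to $e$ by construction of $\mathcal{S}$, so both residuals have the common form $e\,(L(\bm{i}) \cdot c)\,e$ with common output $L(\bm{i}) \circ c$, independent of $x$. For $c = a$, I apply \autoref{fct:monoidRemoveBlocks} with $\mu = 1$ to the factorization $\#_1 L(\bm{i}) \#_x = (\varepsilon\,\#_1)(L(\bm{i})\,\#_x)\,\varepsilon$, yielding the identical residual $\#_1 L(\bm{i})\,e$ and common output $b$. For $c = b$ the dual $\partial\mathcal{T}_1$ has the loop $\id_Q$ at state $b$, so both state sequences are left unchanged and both output $b$; the inductive hypothesis on $u'$ then closes the case. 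For $c \in \{\alpha_0, \alpha_L, \alpha_{i,\ell}, \beta_0, \beta_L, \beta_{i,\ell}, f_\alpha, f_\beta, f\}$, every state of $\#_1 L(\bm{i}) \#_x$ transitions to $e$ on input $c$ and the output letter (one of $f_\alpha$, $f_\beta$, $f$) does not depend on $x$, so both residuals are literally the same power of $e$.

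The interesting case, and the only place where the \DecProblem{ePCP}-hypothesis enters, is $c = \iota$. On the $\#_1$ side, the rightmost $\#_1$ sends $\iota$ to $\alpha_0$ (becoming $e$); then, writing $\bm{i} = i_K \dots i_1$ and applying \autoref{fct:monoidPhiPsiCrossDiagrams} together with the interaction rule $\bm{q}\bm{p} \cdot u = [\bm{q} \cdot (\bm{p} \circ u)](\bm{p} \cdot u)$, the block $L(\bm{i}) = L(i_K) \dots L(i_1)$ rewrites $\alpha_0$ first to $\alpha_L$ (using $L(i_1) \circ \alpha_0 = \alpha_L$ and then $L(i_\kappa) \circ \alpha_L = \alpha_L$ for $\kappa > 1$) and transforms itself into $\varphi(\bm{i})$; finally the leftmost $\#_1$ sends $\alpha_L$ to $f$ (becoming $e$). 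The residual is $e\,\varphi(\bm{i})\,e$ with output $f$. Symmetrically, the $\#_2$ side gives residual $e\,\psi(\bm{i})\,e$, again with output $f$. Since $\bm{i}$ is a solution, $\varphi(\bm{i}) =_e \psi(\bm{i})$, so \autoref{fct:EEquivalentImpliesTEquivalent} yields $\varphi(\bm{i}) =_{\mathcal{T}} \psi(\bm{i})$ and hence $e\,\varphi(\bm{i})\,e =_{\mathcal{T}} e\,\psi(\bm{i})\,e$; the two residuals therefore act identically on any $u'$. I expect this to be the main obstacle, as it is the only place where one must verify that the transducer machinery genuinely records the padded \DecProblem{ePCP} equation.

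For the final \enquote{in particular} clause, recall the criterion recorded after the definition of \DecProblem{Free Monoid Presentation}: an isomorphism $\mathscr{M}(\mathcal{T}) \simeq (Q \setminus \{e\})^*$ must be induced by $[q]_{\mathcal{T}} \mapsto q$ and $[e]_{\mathcal{T}} \mapsto \varepsilon$. But such a map would have to send the two $=_{\mathcal{T}}$-equal sequences $\#_1 L(\bm{i}) \#_1$ and $\#_1 L(\bm{i}) \#_2$ to the distinct words $\#_1 i_K^L \dots i_1^L \#_1$ and $\#_1 i_K^L \dots i_1^L \#_2$ of $(Q \setminus \{e\})^*$, a contradiction.
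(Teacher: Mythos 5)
Your proof is correct and follows essentially the same route as the paper: an induction on the input word with a case split on the first letter, where all cases except $\iota$ give literally identical (or inductively equal) residuals, and the $\iota$ case combines \autoref{fct:monoidPhiPsiCrossDiagrams} with $\varphi(\bm{i}) =_e \psi(\bm{i})$ and \autoref{fct:EEquivalentImpliesTEquivalent} to equate the residuals $e\,\varphi(\bm{i})\,e$ and $e\,\psi(\bm{i})\,e$. You also spell out the \enquote{in particular} clause via the canonical-map criterion from the definition of \DecProblem{Free Monoid Presentation}, which the paper leaves implicit.
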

    \begin{proof}
      Our proof is very similar to the one for \autoref{lem:solutionImpliesRelation}.\footnote{In fact, it is easier because we now have a neutral element.} We show that the relation holds by showing that both sides act in the same way on all $u \in \Sigma^*$ using an induction. Thus, let $u = c u'$ for some $c \in \Sigma = \Gamma \cup \{ a, b \} \cup \{ \alpha_0, \alpha_L, \beta_0, \beta_L \} \cup \{ \alpha_{i, \ell}, \beta_{i, \ell} \mid i \in I, 1 \leq \ell < L \}$. For $c \in \Gamma$ (the alphabet of $\hat{\mathcal{R}}$), we have the cross diagram (see definition of $\mathcal{S}$)
      \begin{center}
        \begin{tikzpicture}[baseline=(m-4-1.base)]
          \matrix[matrix of math nodes, text height=1.25ex, text depth=0.25ex] (m) {
                   & c &      \\
              \#_x &   & e \\
                   & c &      \\
            L(\bm{i}) &   & L(\bm{i}) \cdot c    \\
                   & d &      \\
              \#_1 &   & e \\
                   & d &      \\
          };
          \foreach \i in {1,3,5} {
            \foreach \j in {1} {
              \draw[->] let
                \n1 = {int(2+\i)},
                \n2 = {int(1+\j)}
              in
                (m-\i-\n2) -> (m-\n1-\n2);
              \draw[->] let
                \n1 = {int(1+\i)},
                \n2 = {int(2+\j)}
              in
                (m-\n1-\j) -> (m-\n1-\n2);
            };
          };
        \end{tikzpicture}
      \end{center}
      for both, $x = 1$ and $x = 2$ with the same $d \in \Gamma$. Since the state sequence on the right is the same in both cases, there is nothing more to show.
      
      The cases $c \in \{ a \} \cup \{ \alpha_0, \alpha_L, \beta_0, \beta_L \} \cup \{ \alpha_{i, \ell}, \beta_{i, \ell} \mid i \in I, 1 \leq \ell < L \}$ are similar; they are depicted in \autoref{fig:monoidSolutionImplesRelation}. The case $c = b$ requires induction but is still similar; it is depicted in \autoref{sfig:monoidSolutionImpliesRelationB}.
      
      Finally, the case $c = \iota$ is again the most interesting one. Writing $\bm{i} = i_K \dots i_2 i_1$ for $i_1, \dots, i_K \in I$, we obtain $L(\bm{i}) = i_K^L \dots i_2^L i_1^L$ and \autoref{fct:monoidPhiPsiCrossDiagrams} yields the cross diagrams
      \begin{center}
        \begin{tikzpicture}[baseline=(m-6-1.base)]
          \matrix[matrix of math nodes, text height=1.25ex, text depth=0.25ex] (m) {
                   & \iota   &      \\
              \#_1 &         & e    \\
                   & \alpha_0  &      \\
               i_1^L &         & \varphi(i_1) \\
                   & \alpha_L &      \\
               i_2^L &         & \varphi(i_2) \\
                   & \alpha_L &      \\
              \vdots &       & \vdots \\
                   & \alpha_L &      \\
              i_K^L &       & \varphi(i_K) \\
                   & \alpha_L &      \\
              \#_1 &         & e    \\
                   & f       &      \\
          };
          \foreach \i in {1,3,5,9,11} {
            \foreach \j in {1} {
              \draw[->] let
                \n1 = {int(2+\i)},
                \n2 = {int(1+\j)}
              in
                (m-\i-\n2) -> (m-\n1-\n2);
              \draw[->] let
                \n1 = {int(1+\i)},
                \n2 = {int(2+\j)}
              in
                (m-\n1-\j) -> (m-\n1-\n2);
            };
          };
        \end{tikzpicture}
        and
        \begin{tikzpicture}[baseline=(m-6-1.base)]
          \matrix[matrix of math nodes, text height=1.25ex, text depth=0.25ex] (m) {
                 & \iota  &      \\
            \#_2 &        & e    \\
                 & \beta_0  &      \\
             i_1^L &        & \psi(i_1) \\
                 & \beta_L &      \\
             i_2^L &        & \psi(i_2) \\
                 & \beta_L &      \\
            \vdots &      & \vdots \\
                 & \beta_L &      \\
            i_K^L &      & \psi(i_K) \\
                 & \beta_L &      \\
            \#_1 &        & e    \\
                 & f      &      \\
          };
          \foreach \i in {1,3,5,9,11} {
            \foreach \j in {1} {
              \draw[->] let
                \n1 = {int(2+\i)},
                \n2 = {int(1+\j)}
              in
                (m-\i-\n2) -> (m-\n1-\n2);
              \draw[->] let
                \n1 = {int(1+\i)},
                \n2 = {int(2+\j)}
              in
                (m-\n1-\j) -> (m-\n1-\n2);
            };
          };
        \end{tikzpicture}.
      \end{center}
      Since $\bm{i} = i_K \dots i_2 i_1$ is a solution, we have $\varphi(i_K) \dots \varphi(i_2) \varphi(i_1) =_e \psi(i_K) \dots \psi(i_2) \psi(i_1)$. Thus, \autoref{fct:EEquivalentImpliesTEquivalent} implies $e \varphi(i_K) \dots \varphi(i_2) \varphi(i_1) e =_{\mathcal{T}} e \psi(i_K) \dots \psi(i_2) \psi(i_1) e$ and we are done.
    \end{proof}
    \begin{figure}\centering
      \subcaptionbox{$c \in \{ \alpha_0 \} \cup \{ \alpha_{i, \ell} \mid i \in I, 1 \leq \ell < L \}$\label{sfig:monoidSolutionImpliesRelationAlphaBeta}}{%
        \begin{tikzpicture}[baseline=(m-4-1.base)]
          \matrix[matrix of math nodes, text height=1.25ex, text depth=0.25ex, ampersand replacement=\&] (m) {
                 \& \alpha_0/\alpha_{i, \ell} \&      \\
            \#_x \&             \& e    \\
                 \& f_{\alpha}  \& \\
           L(\bm{i})\&   \& e^{L |\bm{i}|} \\
                 \& f_\alpha\&      \\
            \#_1 \&   \& e    \\
                 \& f_\alpha \&      \\
          };
          \foreach \i in {1,3,5} {
            \foreach \j in {1} {
              \draw[->] let
                \n1 = {int(2+\i)},
                \n2 = {int(1+\j)}
              in
                (m-\i-\n2) -> (m-\n1-\n2);
              \draw[->] let
                \n1 = {int(1+\i)},
                \n2 = {int(2+\j)}
              in
                (m-\n1-\j) -> (m-\n1-\n2);
            };
          };
        \end{tikzpicture}%
      }
      \subcaptionbox{$c = \alpha_L$\label{sfig:monoidSolutionImpliesRelationAlphaBetaL}}{%
        \begin{tikzpicture}[baseline=(m-4-1.base)]
          \matrix[matrix of math nodes, text height=1.25ex, text depth=0.25ex, ampersand replacement=\&] (m) {
               \& \alpha_L \&      \\
          \#_x \&                \& e \\
               \& f \& \\
            L(\bm{i})\&   \& e^{L |\bm{i}|} \\
               \& f \&      \\
            \#_1 \&   \& e  \\
               \& f \&      \\
          };
          \foreach \i in {1,3,5} {
            \foreach \j in {1} {
              \draw[->] let
                \n1 = {int(2+\i)},
                \n2 = {int(1+\j)}
              in
                (m-\i-\n2) -> (m-\n1-\n2);
              \draw[->] let
                \n1 = {int(1+\i)},
                \n2 = {int(2+\j)}
              in
                (m-\n1-\j) -> (m-\n1-\n2);
            };
          };
        \end{tikzpicture}%
      }
      \subcaptionbox{$c \in \{ f_\alpha, f_\beta, f \}$\label{sfig:monoidSolutionImpliesRelationF}}{%
        \begin{tikzpicture}[baseline=(m-4-1.base)]
          \matrix[matrix of math nodes, text height=1.25ex, text depth=0.25ex, ampersand replacement=\&] (m) {
              \& f_\alpha/f \&      \\
          \#_x \&              \& e \\
              \& f_\alpha/f \& \\
            L(\bm{i})\&   \& e^{L |\bm{i}|} \\
              \& f_\alpha/f \&      \\
            \#_1 \&   \& e \\
              \& f_\alpha/f \&      \\
          };
          \foreach \i in {1,3,5} {
            \foreach \j in {1} {
              \draw[->] let
                \n1 = {int(2+\i)},
                \n2 = {int(1+\j)}
              in
                (m-\i-\n2) -> (m-\n1-\n2);
              \draw[->] let
                \n1 = {int(1+\i)},
                \n2 = {int(2+\j)}
              in
                (m-\n1-\j) -> (m-\n1-\n2);
            };
          };
        \end{tikzpicture}%
      }\newline
      \subcaptionbox{$c = a$\label{sfig:monoidSolutionImpliesRelationA}}{%
        \begin{tikzpicture}[baseline=(m-4-1.base)]
          \matrix[matrix of math nodes, text height=1.25ex, text depth=0.25ex, ampersand replacement=\&] (m) {
              \& a \&           \\
          \#_x \& \& e \\
              \& b \& \\
            L(\bm{i})\&   \& L(\bm{i}) \\
              \& b \&      \\
            \#_1 \&   \& \#_1    \\
              \& b \&      \\
          };
          \foreach \i in {1,3,5} {
            \foreach \j in {1} {
              \draw[->] let
                \n1 = {int(2+\i)},
                \n2 = {int(1+\j)}
              in
                (m-\i-\n2) -> (m-\n1-\n2);
              \draw[->] let
                \n1 = {int(1+\i)},
                \n2 = {int(2+\j)}
              in
                (m-\n1-\j) -> (m-\n1-\n2);
            };
          };
        \end{tikzpicture}%
      }
      \subcaptionbox{$c = b$\label{sfig:monoidSolutionImpliesRelationB}}{%
        \begin{tikzpicture}[baseline=(m-4-1.base)]
          \matrix[matrix of math nodes, text height=1.25ex, text depth=0.25ex, ampersand replacement=\&] (m) {
              \& b \&      \\
          \#_x \& \& \#_x  \\
              \& b \& \\
            L(\bm{i})\&   \& L(\bm{i}) \\
              \& b \&      \\
            \#_1 \&   \& \#_1    \\
              \& b \&      \\
          };
          \foreach \i in {1,3,5} {
            \foreach \j in {1} {
              \draw[->] let
                \n1 = {int(2+\i)},
                \n2 = {int(1+\j)}
              in
                (m-\i-\n2) -> (m-\n1-\n2);
              \draw[->] let
                \n1 = {int(1+\i)},
                \n2 = {int(2+\j)}
              in
                (m-\n1-\j) -> (m-\n1-\n2);
            };
          };
        \end{tikzpicture}%
      }
      \caption{Various cases for $c \in \Sigma$. The cross diagrams hold for $x \in \{ 1, 2 \}$. The cross diagrams for $c \in \{ \beta_0, \beta_L, f_\beta \} \cup \{ \beta_{i, \ell} \mid i \in I, 1 \leq \ell < L \}$ are symmetric to their $\alpha$ analogues.}\label{fig:monoidSolutionImplesRelation}
    \end{figure}

    Although we do not strictly require it for our current proof, we also obtain that the generated monoid is not free if a solution exists. This can be shown using the same proof as for \autoref{prop:solutionImpliesNotFree} (only using \autoref{prop:monoidSolutionImpliesRelation} and the relation given there).
    \begin{corollary}
      If the \DecProblem{ePCP} instance has a solution, $\mathscr{M}(\mathcal{T})$ is not a free monoid.
    \end{corollary}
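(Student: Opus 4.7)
The plan is to follow the proof of \autoref{prop:solutionImpliesNotFree} essentially verbatim, substituting the new relation given by \autoref{prop:monoidSolutionImpliesRelation} for the one given by \autoref{lem:solutionImpliesRelation}. Concretely, suppose for contradiction that $\mathscr{M}(\mathcal{T})$ is a free monoid and let $\bm{i} \in I^+$ be a solution for the \DecProblem{ePCP} instance. By \autoref{fct:freeIsCancellative}, $\mathscr{M}(\mathcal{T})$ is then (left) cancellative, and by \autoref{prop:monoidSolutionImpliesRelation} we have
\[
  \#_1 L(\bm{i}) \#_1 =_{\mathcal{T}} \#_1 L(\bm{i}) \#_2 \text{.}
\]
Cancelling $\#_1 L(\bm{i})$ from the left yields $\#_1 =_{\mathcal{T}} \#_2$.

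To derive the contradiction, I would consult the transitions of $\partial\mathcal{T}_2$ in \autoref{sfig:monoidT2dual} (equivalently, the schematic in \autoref{sfig:monoidT2}): starting at the letter $\iota$ we read $\#_1 / e$ into the state $\alpha_0$ and $\#_2 / e$ into the state $\beta_0$, so in $\mathcal{T}$ we have $\#_1 \circ \iota = \alpha_0$ and $\#_2 \circ \iota = \beta_0$. Since $\alpha_0 \neq \beta_0$, this gives $\#_1 \neq_{\mathcal{T}} \#_2$, contradicting the consequence of cancellativity.

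There is essentially no obstacle here beyond checking that $\#_1$ and $\#_2$ act differently on some letter, which is immediate from the construction. The only small subtlety is that we are working in the monoid rather than the semigroup, but since $\mathscr{M}(\mathcal{T})$ being a free monoid implies it is cancellative in the monoid sense, the single left-cancellation step above is valid and the rest of the argument goes through unchanged.
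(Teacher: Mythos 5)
Your proof is correct and follows exactly the route the paper indicates: it reuses the argument of \autoref{prop:solutionImpliesNotFree} with the relation $\#_1 L(\bm{i}) \#_1 =_{\mathcal{T}} \#_1 L(\bm{i}) \#_2$ from \autoref{prop:monoidSolutionImpliesRelation}, cancelling on the left and then separating $\#_1$ from $\#_2$ via $\#_1 \circ \iota = \alpha_0 \neq \beta_0 = \#_2 \circ \iota$. The paper leaves these details implicit, so your write-up is simply a correctly filled-in version of the same argument.
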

    
    \paragraph{Converse Direction.}
    For showing that the \DecProblem{ePCP} instance has a solution if $\mathscr{M}(\mathcal{T})$ is not (isomorphic to) $(Q \setminus \{ e \})^*$, we modify the definition of compatible state sequences from \autoref{def:compatible} by requiring only $e$-equivalence in the individual parts.
    \begin{definition}[compatible state sequences]\label{def:monoidCompatible}
      Let $\bm{p}, \bm{q} \in Q^*$ and factorize them (uniquely) as
      \[
        \bm{p} = \left( \bm{p}_s \#_{x_s} \right) \dots \left( \bm{p}_1 \#_{x_1} \right) \bm{p}_0 \quad \text{and} \quad \bm{q} = \left( \bm{q}_t \#_{y_t} \right) \dots \left( \bm{q}_1 \#_{y_1} \right) \bm{q}_0
      \]
      with $\bm{p}_0, \dots, \bm{p}_s, \bm{q}_0, \dots, \bm{q}_t \in \hat{R}^*$ and $x_1, \dots, x_s, y_1, \dots, y_t \in \{ 1, 2 \}$. We define:
      \[
        \bm{p} \text{ and } \bm{q} \text{ are \emph{compatible}} \iff s = t \text{ and } \forall\, 0 \leq i \leq s = t: \bm{p}_i =_e \bm{q}_i
      \]
    \end{definition}
    
    Similarly to \autoref{lem:relationIsCompatible} (and using a simplified proof), we still have that every relation with respect to $\mathcal{T}$ is compatible.
    \begin{lemma}\label{lem:monoidRelationIsCompatible}
      Let $\bm{p}, \bm{q} \in Q^*$ with $\bm{p} =_{\mathcal{T}} \bm{q}$. Then, we have that $\bm{p}$ and $\bm{q}$ are compatible.
    \end{lemma}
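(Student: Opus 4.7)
The plan is to mirror the proof of \autoref{lem:relationIsCompatible} with $e$-equivalence in place of $R$-equivalence, exploiting that $e$ now acts as a genuine identity. I would factorize $\bm{p}$ and $\bm{q}$ as in \autoref{def:monoidCompatible} and induct on $s + t$. In the base case $s = t = 0$, both sequences lie in $\hat{R}^*$; since $\hat{\mathcal{R}}$ is a subautomaton of $\mathcal{T}$, restricting to inputs in $\Gamma^*$ gives $\bm{p}_0 =_{\hat{\mathcal{R}}} \bm{q}_0$, and $\mathscr{M}(\hat{\mathcal{R}}) \simeq R^*$ (identifying $e$ with the neutral element) yields $\bm{p}_0 =_e \bm{q}_0$, so there is nothing more to show.

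For the inductive step with $s + t > 0$, the dual transitions in \autoref{sfig:monoidT1dual} show that $\bm{p} \circ a = b$ iff $\bm{p}$ contains at least one $\#_x$ and that $\bm{p} \circ a = a$ otherwise; so $\bm{p} =_{\mathcal{T}} \bm{q}$ immediately forces $s > 0 \iff t > 0$, hence $s, t \geq 1$. I would then apply \autoref{fct:monoidRemoveBlocks} with $\mu = 1$ to rewrite $\bm{p} \cdot a = \bm{p}' \, e^{1 + |\bm{p}_0|_R}$ and $\bm{q} \cdot a = \bm{q}' \, e^{1 + |\bm{q}_0|_R}$, where $\bm{p}'$ and $\bm{q}'$ each have one fewer $\#$-block. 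Since $\bm{p} =_{\mathcal{T}} \bm{q}$ implies $\bm{p} \cdot a =_{\mathcal{T}} \bm{q} \cdot a$, the induction hypothesis delivers $s = t$, $\bm{p}_\mu =_e \bm{q}_\mu$ for $2 \leq \mu \leq s$, and $\bm{p}_1 \, e^{1 + |\bm{p}_0|_R} =_e \bm{q}_1 \, e^{1 + |\bm{q}_0|_R}$; the latter collapses to $\bm{p}_1 =_e \bm{q}_1$ because the trailing $e$'s vanish under $\pi$.

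The main obstacle is recovering $\bm{p}_0 =_e \bm{q}_0$, because stripping with $a$ absorbs $\bm{p}_0$ into the trailing $e$-block and so is lost by the induction alone. The cleanest remedy is to exploit that $\mathcal{S}$ is also a subautomaton of $\mathcal{T}$ in which $\#_1$, $\#_2$ and $e$ all act as the identity and $\mathscr{M}(\mathcal{S}) \simeq R^*$: this turns $\bm{p} =_{\mathcal{T}} \bm{q}$ into the equality $\pi(\bm{p}_s \ldots \bm{p}_1 \bm{p}_0) = \pi(\bm{q}_t \ldots \bm{q}_1 \bm{q}_0)$ in $R^*$. Combined with the already-established $\pi(\bm{p}_\mu) = \pi(\bm{q}_\mu)$ for $1 \leq \mu \leq s = t$ and the cancellativity of $R^*$ (see \autoref{fct:freeIsCancellative}), this forces $\pi(\bm{p}_0) = \pi(\bm{q}_0)$, i.\,e.\ $\bm{p}_0 =_e \bm{q}_0$, closing the induction.
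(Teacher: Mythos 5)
Your proof is correct and follows essentially the same route as the paper's: induction on $s+t$, stripping the first block with $a$ via \autoref{fct:monoidRemoveBlocks}, and recovering $\bm{p}_0 =_e \bm{q}_0$ from $\mathcal{S}$ being a subautomaton with $\mathscr{M}(\mathcal{S}) \simeq R^*$ together with cancellativity. The only (immaterial) difference is that you apply the induction hypothesis to $\bm{p} \cdot a$ and $\bm{q} \cdot a$ with the trailing $e$-powers absorbed into the $0$-th block, whereas the paper first discards them using $e =_{\mathcal{T}} \varepsilon$ and applies the hypothesis to $\bm{p}'$ and $\bm{q}'$.
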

    \begin{proof}
      We factorize $\bm{p}$ and $\bm{q}$ in the same way as in \autoref{def:monoidCompatible} and show the statement by induction on $s + t$.
      For $s = t = 0$, we have $\bm{p}_0 = \bm{p} =_{\mathcal{T}} \bm{q} = \bm{q}_0$. Since $\hat{\mathcal{R}}$ is a subautomaton of $\mathcal{T}$, this implies $\bm{p}_0 =_{\hat{\mathcal{R}}} \bm{q}_0$ and, equivalently, $\bm{p} = \bm{p}_0 =_e \bm{q}_0 = \bm{q}$.
      
      For the inductive step ($s + t > 0$), we may assume $s > 0$ (due to symmetry) or, in other words, that $\bm{p}$ contains at least one $\#_1$ or $\#_2$. We have $\bm{p} \circ a = b$ (compare to \autoref{sfig:monoidT1dual}) and, thus, due to $\bm{p} =_{\mathcal{T}} \bm{q}$, also $\bm{q} \circ a = \bm{p} \circ a = b$. This is only possible (again, compare to \autoref{sfig:monoidT1dual}) if $\bm{q}$ also contains at least one $\#_1$ or $\#_2$, i.\,e.\ if $t > 0$.
      
      From \autoref{fct:monoidRemoveBlocks} (with $\mu = 1$), we obtain (for both $\bm{p}$ and $\bm{q}$):
      \begin{align*}
        \bm{p} \cdot a ={}& \bm{p}' e e^{|\bm{p}_0|_R}\\
        &\text{for } \bm{p}' = \left( \bm{p}_s \#_{x_s} \right) \dots \left( \bm{p}_2 \#_{x_2} \right) \bm{p}_1 \text{ and}\\
        \bm{q} \cdot a ={}& \bm{q}' e e^{|\bm{q}_0|_R}\\
        &\text{for } \bm{q}' = \left( \bm{q}_t \#_{x_t} \right) \dots \left( \bm{q}_2 \#_{x_2} \right) \bm{q}_1
      \end{align*}
      Thus, $\bm{p} =_{\mathcal{T}} \bm{q}$ implies $\bm{p}' =_{\mathcal{T}} \bm{p}' e e^{|\bm{p}_0|} =_{\mathcal{T}} \bm{q}' e e^{|\bm{q}_0|} =_{\mathcal{T}} \bm{q}'$ and we can apply the induction hypothesis to obtain that $\bm{p}'$ and $\bm{q}'$ are compatible, which implies $s = t$ and $\bm{p}_{\mu} =_e \bm{q}_{\mu}$ for all $1 \leq \mu \leq s = t$.
      In particular, we also obtain $\bm{p}_s \bm{p}_{s - 1} \ldots \bm{p}_1 =_e \bm{q}_t \bm{q}_{t - 1} \ldots \bm{q}_1$.
      
      Note that $\mathcal{S}$ is a subautomaton of $\mathcal{T}$ and that, therefore, $\bm{p} =_{\mathcal{T}} \bm{q}$ implies $\bm{p} =_{\mathcal{S}} \bm{q}$. Since $\#_1$ and $\#_2$ act as the identity in $\mathcal{S}$ by construction, this shows $\bm{p}_s \ldots \bm{p}_1 \bm{p}_0 =_{\mathcal{S}} \bm{q}_t \ldots \bm{q}_1 \bm{q}_0$ and, because of $\mathscr{M}(\mathcal{S}) \simeq R^*$, also $\bm{p}_s \ldots \bm{p}_1 \bm{p}_0 =_e \bm{q}_t \ldots \bm{q}_1 \bm{q}_0$. Since $R^*$ as a free monoid is cancellative, this (together with $\bm{p}_s \bm{p}_{s - 1} \ldots \bm{p}_1 =_e \bm{q}_t \bm{q}_{t - 1} \ldots \bm{q}_1$) yields $\bm{p}_0 =_e \bm{q}_0$, which concludes the proof.
    \end{proof}
    
    That two state sequences from $Q^*$ form a relation with respect to $\mathcal{T}$ if their projections are equal in $(Q \setminus \{ e \})^*$ (the analogue of \autoref{lem:sharpCompatibleImpliesRelation}) follows because $e$ (the only letter which is changed/removed by the projection) acts as the identity (compare to \autoref{fct:EEquivalentImpliesTEquivalent}). In order to make this statement formally, we define $\pi_\#$ in our setting in the same way as before in \autoref{def:sharpProjection} (i.\,e.\ $\pi_{\#}(\#_{x}) = \#_x$ for $x \in \{ 1, 2 \}$ and $\pi_{\#}(\hat{r}) = \varepsilon$ for $\hat{r} \in \hat{R}$).
    
    \begin{fact}\label{fct:monoidSharpCompatibleImpliesRelation}
      Let $\bm{p}, \bm{q} \in Q^*$ such that $\bm{p}$ and $\bm{q}$ are compatible and we have $\pi_\#(\bm{p}) = \pi_\#(\bm{q})$. Then, we have $\bm{p} =_{\mathcal{T}} \bm{q}$.
    \end{fact}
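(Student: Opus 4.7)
The plan is to mirror the proof of \autoref{lem:sharpCompatibleImpliesRelation} almost verbatim, replacing the use of $R$-equivalence and \autoref{lem:REquivalentImpliesTEquivalent} by the (simpler) $e$-equivalence and \autoref{fct:EEquivalentImpliesTEquivalent}. The key point is that compatibility already packages exactly the hypothesis we need to conclude $\bm{p}_\mu =_{\mathcal{T}} \bm{q}_\mu$ blockwise, while the additional hypothesis $\pi_\#(\bm{p}) = \pi_\#(\bm{q})$ will ensure that the \enquote{separator} states $\#_{x_\mu}$ and $\#_{y_\mu}$ coincide as well, so that a congruence argument finishes the job.

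Concretely, I would factorize $\bm{p}$ and $\bm{q}$ as in \autoref{def:monoidCompatible}, i.\,e.\ write
\[
  \bm{p} = (\bm{p}_s \#_{x_s}) \dots (\bm{p}_1 \#_{x_1}) \bm{p}_0
  \quad\text{and}\quad
  \bm{q} = (\bm{q}_t \#_{y_t}) \dots (\bm{q}_1 \#_{y_1}) \bm{q}_0
\]
with $\bm{p}_\mu, \bm{q}_\mu \in \hat{R}^*$ and $x_\mu, y_\mu \in \{ 1, 2 \}$. Compatibility yields $s = t$ and $\bm{p}_\mu =_e \bm{q}_\mu$ for every $0 \le \mu \le s$, and \autoref{fct:EEquivalentImpliesTEquivalent} then upgrades each of these block-wise equivalences to $\bm{p}_\mu =_{\mathcal{T}} \bm{q}_\mu$. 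Simultaneously, the assumption $\pi_\#(\bm{p}) = \pi_\#(\bm{q})$ unwinds as $\#_{x_s} \ldots \#_{x_1} = \#_{y_t} \ldots \#_{y_1}$, and since $s = t$ this forces $\#_{x_\mu} = \#_{y_\mu}$ for every $1 \le \mu \le s$.

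Since $=_{\mathcal{T}}$ is a congruence on $Q^*$, we may now substitute $\bm{q}_\mu$ for $\bm{p}_\mu$ factor by factor in the above factorization of $\bm{p}$ (leaving the identical $\#$-separators untouched), and each substitution preserves the $=_{\mathcal{T}}$-class. After $s + 1$ substitutions we arrive at the factorization of $\bm{q}$, giving $\bm{p} =_{\mathcal{T}} \bm{q}$. I do not expect any real obstacle here: the proof of \autoref{lem:sharpCompatibleImpliesRelation} already has exactly this shape, and the monoid setting is strictly easier because $e$ behaves as the identity, so no auxiliary $\lambda_{\#}$/$\lambda_R$ bookkeeping is needed.
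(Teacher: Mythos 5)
Your proposal is correct and matches the paper's intent: the paper states this as a \emph{fact} without a written proof, noting only that it is the analogue of \autoref{lem:sharpCompatibleImpliesRelation} and follows because $e$ acts as the identity, which is exactly the blockwise argument (compatibility plus \autoref{fct:EEquivalentImpliesTEquivalent} for the $\hat{R}^*$-blocks, $\pi_\#$-equality for the separators, then the congruence property of $=_{\mathcal{T}}$) that you spell out.
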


    Next, we show the analogue of \autoref{lem:freeIfRelationsSharp}: $\mathscr{M}(\mathcal{T})$ is isomorphic to $(Q \setminus \{ e \})^*$ unless we have a relation whose sides only differ in their images under $\pi_\#$. The isomorphism is given by the extension of the natural projection $\pi$ keeping the letters $\{ \#_1, \#_2 \}$ fixed.
    \begin{lemma}\label{lem:monoidIsomorphicIfRelationsSharp}
      We have $\pi_\#(\bm{p}) = \pi_\#(\bm{q})$ for all $\bm{p}, \bm{q} \in Q^*$ with $\bm{p} =_{\mathcal{T}} \bm{q}$ if and only if $\pi': Q^* \to (Q \setminus \{ e \})^* = (R \cup \{ \#_1, \#_2 \})^*$ with $q \mapsto q$, $e \mapsto \varepsilon$ and $\#_x \mapsto \#_x$ for $x \in \{ 1, 2 \}$ induces a well-defined isomorphism $\mathscr{M}(\mathcal{T}) \to (Q \setminus \{ e \})^*$.
      
      In particular, $\mathscr{M}(\mathcal{T})$ is isomorphic to $(Q \setminus \{ e \})^*$ if we have $\pi_\#(\bm{p}) = \pi_\#(\bm{q})$ for all $\bm{p}, \bm{q} \in Q^*$ with $\bm{p} =_{\mathcal{T}} \bm{q}$.
    \end{lemma}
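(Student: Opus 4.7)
The plan is to mirror the proof of \autoref{lem:freeIfRelationsSharp} in the monoid setting, adjusting for the fact that $e$ now acts as the identity and that compatibility in \autoref{def:monoidCompatible} only requires $e$-equivalence (rather than literal $R$-equality) of the factors between consecutive $\#_x$'s. The whole argument reduces to one elementary observation plus two short unfoldings.

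First, I would record the elementary observation that, for any $\bm{p}, \bm{q} \in Q^*$, we have $\pi'(\bm{p}) = \pi'(\bm{q})$ if and only if $\bm{p}$ and $\bm{q}$ are compatible and additionally satisfy $\pi_{\#}(\bm{p}) = \pi_{\#}(\bm{q})$. This holds because $\pi'$ deletes precisely the occurrences of $e$ while fixing every other element of $Q$; therefore agreement of the images is equivalent to agreement on the positions and indices of the $\#_x$ together with $e$-equivalence of the intervening blocks from $\hat{R}^*$. This is the only place where a small verification is needed, and I expect it to be the (mild) main obstacle in making the proof precise.

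Given this observation, the main equivalence follows in two lines. For the forward direction, suppose $\pi_{\#}(\bm{p}) = \pi_{\#}(\bm{q})$ whenever $\bm{p} =_{\mathcal{T}} \bm{q}$; then such a pair is compatible by \autoref{lem:monoidRelationIsCompatible}, so by the observation $\pi'(\bm{p}) = \pi'(\bm{q})$, which makes $\pi'$ well-defined on $\mathscr{M}(\mathcal{T})$ (and automatically a homomorphism). Conversely, if $\pi'$ descends to a well-defined homomorphism, then $\bm{p} =_{\mathcal{T}} \bm{q}$ forces $\pi'(\bm{p}) = \pi'(\bm{q})$ and thus, by the observation, $\pi_{\#}(\bm{p}) = \pi_{\#}(\bm{q})$.

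Finally, for the \enquote{in particular} assertion, it remains to upgrade the well-defined homomorphism $\mathscr{M}(\mathcal{T}) \to (Q \setminus \{e\})^*$ to an isomorphism. Surjectivity is immediate since $\pi'$ fixes every element of $Q \setminus \{e\} = R \cup \{\#_1, \#_2\}$, which generates the target monoid. For injectivity, suppose $\pi'(\bm{p}) = \pi'(\bm{q})$; by the observation, $\bm{p}$ and $\bm{q}$ are compatible and satisfy $\pi_{\#}(\bm{p}) = \pi_{\#}(\bm{q})$, so \autoref{fct:monoidSharpCompatibleImpliesRelation} yields $\bm{p} =_{\mathcal{T}} \bm{q}$, as required.
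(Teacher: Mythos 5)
Your proposal is correct and follows essentially the same route as the paper: the same key observation that $\pi'(\bm{p}) = \pi'(\bm{q})$ holds exactly when $\bm{p}$ and $\bm{q}$ are compatible with $\pi_\#(\bm{p}) = \pi_\#(\bm{q})$, combined with \autoref{lem:monoidRelationIsCompatible} for well-definedness and \autoref{fct:monoidSharpCompatibleImpliesRelation} for injectivity. The only (cosmetic) difference is that you run the converse direction assuming merely a well-defined homomorphism, which is exactly what the statement requires, whereas the paper phrases that step in terms of an isomorphism.
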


    \begin{proof}
      Clearly, we have $\pi'(\bm{p}) = \pi'(\bm{q})$ for $\bm{p}, \bm{q} \in Q^+$ if and only if $\bm{p}$ and $\bm{q}$ are compatible and $\pi_{\#}(\bm{p}) = \pi_{\#}(\bm{q})$ holds.
      
      First, we show that $\pi'$ induces a well-defined isomorphism if we have $\pi_\#(\bm{p}) = \pi_\#(\bm{q})$ for all $\bm{p}, \bm{q} \in Q^*$ with $\bm{p} =_\mathcal{T} \bm{q}$.
      To show that the isomorphism is well-defined, let $\bm{p}, \bm{q} \in Q^*$ with $\bm{p} =_{\mathcal{T}} \bm{q}$. By \autoref{lem:monoidRelationIsCompatible}, we have that $\bm{p}$ and $\bm{q}$ are compatible. By hypothesis, we also obtain $\pi_{\#}(\bm{p}) = \pi_{\#}(\bm{q})$.
      To show that the isomorphism is indeed injective, let $\bm{p}, \bm{q} \in Q^*$ be compatible with $\pi_{\#}(\bm{p}) = \pi_{\#}(\bm{q})$. Then, by \autoref{fct:monoidSharpCompatibleImpliesRelation}, this implies $\bm{p} =_{\mathcal{T}} \bm{q}$. Finally, surjectivity and the homomorphism property are trivial.
      
      For the other direction, assume that $\pi'$ induces a well-defined isomorphism $\mathscr{M}(\mathcal{T}) \to (Q \setminus \{ e \})^*$.
      Then, $\bm{p} =_\mathcal{T} \bm{q}$ implies $\pi'(\bm{p}) = \pi'(\bm{q})$ and, in particular, $\pi_\#(\bm{p}) = \pi_\#(\bm{q})$.
    \end{proof}
    
    This allows us now to show that the \DecProblem{ePCP} instance has a solution if the monoid generated by $\mathcal{T}$ is not isomorphic to $(Q \setminus \{ e \})^*$.
    The proof is again a simplified version of the one for \autoref{lem:sharpRelationImpliesSolution}.
    \begin{lemma}\label{prop:monoiNotIsomorphicImpliesSolution}
      If $\mathscr{M}(\mathcal{T})$ is not isomorphic to $(Q \setminus \{ e \})^*$, the \DecProblem{ePCP} instance has a solution.
    \end{lemma}
    \begin{proof}
      If $\mathscr{M}(\mathcal{T})$ is not isomorphic to $(Q \setminus \{ e \})^*$, then, by \autoref{lem:monoidIsomorphicIfRelationsSharp}, there must be $\bm{p}, \bm{q} \in Q^*$ with $\bm{p} =_{\mathcal{T}} \bm{q}$ but $\pi_\#(\bm{p}) \neq \pi_\#(\bm{q})$. We factorize these $\bm{p}$ and $\bm{q}$ in the same way as in \autoref{def:monoidCompatible} and observe that $\bm{p}$ and $\bm{q}$ are compatible by \autoref{lem:monoidRelationIsCompatible}. We may assume that there is some $1 \leq \mu_0 \leq s = t$ with $\#_{x_{\mu_0}} = \#_1$ but $\#_{y_{\mu_0}} = \#_2$ (due to symmetry).
      
      As before, we may assume $\mu_0 = 1$ by \autoref{fct:monoidRemoveBlocks} since we have
      \begin{align*}
        \bm{p} \cdot a^{\mu_0 - 1} ={}& \bm{p}' e e^{|\bm{p}_0|}\\
        &\text{for } \bm{p}' = \left( \bm{p}_s \#_{x_s} \right) \dots \left( \bm{p}_{\mu_0} \#_{x_{\mu_0}} \right) \bm{p}_{\mu_0 - 1} \text{ and}\\
        \bm{q} \cdot a^{\mu_0 - 1} ={}& \bm{q}' e e^{|\bm{q}_0|}\\
        &\text{for } \bm{q}' = \left( \bm{q}_t \#_{x_t} \right) \dots \left( \bm{q}_{\mu_0} \#_{x_{\mu_0}} \right) \bm{q}_{\mu_0 - 1}
      \end{align*}
      and, thus, may replace $\bm{p}$ by $\bm{p}'$ and $\bm{q}$ by $\bm{q}'$ (for which we still have $\bm{p}' =_{\mathcal{T}} \bm{q}'$).

      With this assumptions, we apply $\bm{p}$ and $\bm{q}$ to $\iota$ and obtain the cross diagrams (see \autoref{fig:monoidT2})
      \begin{center}
        \begin{tikzpicture}[baseline=(m-6-1.base)]
          \matrix[matrix of math nodes, text height=1.25ex, text depth=0.25ex] (m) {
                     & \iota  &                \\
            \bm{p}_0 &        & e^{|\bm{p}_0|} \\
                     & \iota  &                \\
            \#_1     &        & e              \\
                     & \alpha_0 &                \\
            \bm{p}_1 &        & \bm{p}_1'      \\
                     & c_1    &                \\
            \#_{x_2} &        & p_2'           \\
                     & c_2    &                \\
            \tilde{\bm{p}} &  & \tilde{\bm{p}}'\\
                     & c      &                \\
          };
          \foreach \i in {1,3,5,7,9} {
            \foreach \j in {1} {
              \draw[->] let
                \n1 = {int(2+\i)},
                \n2 = {int(1+\j)}
              in
                (m-\i-\n2) -> (m-\n1-\n2);
              \draw[->] let
                \n1 = {int(1+\i)},
                \n2 = {int(2+\j)}
              in
                (m-\n1-\j) -> (m-\n1-\n2);
            };
          };
        \end{tikzpicture}
        and
        \begin{tikzpicture}[baseline=(m-6-1.base)]
          \matrix[matrix of math nodes, text height=1.25ex, text depth=0.25ex] (m) {
                     & \iota &                \\
            \bm{q}_0 &       & e^{|\bm{q}_0|} \\
                     & \iota &                \\
            \#_2     &       & e              \\
                     & \beta_0 &                \\
            \bm{q}_1 &       & \bm{q}_1'      \\
                     & d_1   &                \\
            \#_{y_2} &       & q_2'           \\
                     & d_2   &                \\
            \tilde{\bm{q}} & & \tilde{\bm{q}}'\\
                     & d     &                \\
          };
          \foreach \i in {1,3,5,7,9} {
            \foreach \j in {1} {
              \draw[->] let
                \n1 = {int(2+\i)},
                \n2 = {int(1+\j)}
              in
                (m-\i-\n2) -> (m-\n1-\n2);
              \draw[->] let
                \n1 = {int(1+\i)},
                \n2 = {int(2+\j)}
              in
                (m-\n1-\j) -> (m-\n1-\n2);
            };
          };
        \end{tikzpicture}
      \end{center}
      for $\tilde{\bm{p}} = \bm{p}_s \#_{x_s} \ldots \bm{p}_3 \#_{x_3} \bm{p}_2$, $\tilde{\bm{q}} = \bm{q}_t \#_{y_t} \ldots \bm{q}_3 \#_{y_3} \bm{q}_2$ and some $\bm{p}_1', \tilde{\bm{p}}', \bm{q}_1', \tilde{\bm{q}}' \in Q^*$, $p_2', q_2' \in Q$ and $c_1, c_2, c, d_1, d_2, d \in \Gamma$. Since we have $\bm{p} =_{\mathcal{T}} \bm{q}$, we must have $c = d$ and, by the construction of $\mathcal{T}$, this is only possible if $c = f = d$ (see \autoref{sfig:monoidT2dual}: two paths starting in $\iota$ with one passing through $\alpha_0$ and the other one passing through $\beta_0$ can only re-join in $f$). This, in turn, is only possible if we have $\bm{p}_1 =_e L(\bm{i})$ and $\bm{q}_1 =_e L(\bm{j})$ for some $\bm{i}, \bm{j} \in I^+$. Since $\bm{p}$ and $\bm{q}$ are compatible, we must even have $L(\bm{i}) =_e \bm{p}_1 =_e \bm{q}_1 =_e L(\bm{j})$, which implies $\bm{i} = \bm{j}$. Additionally, we obtain (from \autoref{fct:monoidPhiPsiCrossDiagrams}) $\bm{p}_1' =_e \varphi(\bm{i})$, $c_1 = \alpha_L$, $p_2' = e$, $c_2 = f$, $\bm{q}_1' =_e \psi(\bm{i})$, $d_1 = \beta_L$, $q_2' = e$ and $d_2 = f$. Finally, we also get $\tilde{\bm{p}}' = e^{|\tilde{\bm{p}}'|}$ and $\tilde{\bm{q}}' = e^{|\tilde{\bm{q}}'|}$ from the construction of $\mathcal{T}$.

      Thus (and because $e$ acts as the identity), we have $\varphi(\bm{i}) =_\mathcal{T} \psi(\bm{i})$, which implies $\varphi(\bm{i}) =_e \psi(\bm{i})$ by \autoref{lem:monoidRelationIsCompatible} and, therefore, that $\bm{i}$ is a solution for the \DecProblem{ePCP} instance.
    \end{proof}
    
    \paragraph{Free Presentation of Automaton Monoids.}
    By \autoref{prop:monoidSolutionImpliesRelation} and \autoref{prop:monoiNotIsomorphicImpliesSolution}, we have now reduced \DecProblem{ePCP} to \DecProblem{Free Monoid Presentation} and obtain:
    \begin{theorem}\label{thm:monoidFreePresentationIsUndecidable}
      The free presentation problem for automaton monoids
      \FreeMonoidPresentation
      is undecidable.
    \end{theorem}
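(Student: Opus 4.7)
The plan is to package together the two main correctness results already established for the construction and invoke the undecidability of \DecProblem{ePCP}. First I would observe that the entire pipeline producing $\mathcal{T} = (Q, \Sigma, \delta)$ from an \DecProblem{ePCP} instance $\varphi, \psi, L, I$ is effective: the \SAut $\hat{\mathcal{R}}$ can be computed by \autoref{fct:freeComputable} plus the addition of a self-loop state $e$, and the further extensions into $\mathcal{S}$, $\mathcal{T}_1$, and $\mathcal{T}_2$ are each defined by finite, explicit transition sets. Moreover, $e$ indeed continues to act as the identity in $\mathcal{T}$ (by construction, every $e/e$ self-loop is added at every state when extending to $\mathcal{T}_1$ and $\mathcal{T}_2$), so the resulting instance is a legitimate input for \DecProblem{Free Monoid Presentation}.

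Next I would combine \autoref{prop:monoidSolutionImpliesRelation} and \autoref{prop:monoiNotIsomorphicImpliesSolution}: the former gives the direction \enquote{solution $\Rightarrow$ $\mathscr{M}(\mathcal{T}) \not\simeq (Q \setminus \{ e \})^*$} (it produces the explicit witness $\#_1 L(\bm{i}) \#_1 =_{\mathcal{T}} \#_1 L(\bm{i}) \#_2$, which collapses the would-be free basis), and the latter gives the converse direction \enquote{$\mathscr{M}(\mathcal{T}) \not\simeq (Q \setminus \{ e \})^*$ $\Rightarrow$ solution}. Together, these two propositions state that the map sending an \DecProblem{ePCP} instance to $(\mathcal{T}, e)$ is a many-one reduction from \DecProblem{ePCP} to the complement of \DecProblem{Free Monoid Presentation}.

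Finally, I would invoke undecidability of \DecProblem{ePCP}, which was noted right after its definition in \autoref{sec:monoidPresentation} as an immediate consequence of undecidability of \DecProblem{PCP} (one simply pads each $\varphi(i)$ and $\psi(i)$ with the symbol $e$ to a common length $L := \max\{ |\varphi(i)|, |\psi(i)| \mid i \in I \}$; a solution of the padded instance under $=_e$ is in bijection with a solution of the original instance). Since \DecProblem{ePCP} is undecidable and it reduces to the complement of \DecProblem{Free Monoid Presentation}, the latter—and hence \DecProblem{Free Monoid Presentation} itself—is undecidable.

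I do not expect any genuine obstacle here: all the technical work was done in \autoref{prop:monoidSolutionImpliesRelation} and \autoref{prop:monoiNotIsomorphicImpliesSolution}, and the reduction is just their conjunction together with computability of $\mathcal{T}$ and the undecidability of \DecProblem{ePCP}. The only minor points to be careful about are (i) verifying that $e$ really does act as $\idGrp$ in $\mathcal{T}$ (so the dedicated identity state required by \DecProblem{Free Monoid Presentation} exists and is correctly labeled) and (ii) observing that the reduction is many-one, so undecidability transfers to the original problem rather than merely to its complement.
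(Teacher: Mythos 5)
Your proposal is correct and matches the paper's own argument, which likewise just combines \autoref{prop:monoidSolutionImpliesRelation} and \autoref{prop:monoiNotIsomorphicImpliesSolution} with the computability of $\mathcal{T}$ and the undecidability of \DecProblem{ePCP} (obtained by padding \DecProblem{PCP} instances). Your explicit remarks that the reduction targets the complement and that $e$ must genuinely act as the identity in $\mathcal{T}$ are points the paper leaves implicit, but they do not constitute a different approach.
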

  \end{section}
  
  \begin{section}{Open Problems}\label{sec:openProblems}
    In \autoref{thm:mainTheorem}, we have shown that the freeness problems for automaton semigroups and for automaton monoids are undecidable even if we restrict the alphabet to a size of $25$. Since free semigroups and monoids (and, in fact, even groups) can be generated already with a binary alphabet (compare also \cite{klimann2016automaton} for freeness and binary alphabet/two-state automata), there is an immediate question of how far the alphabet size can be reduced further. In particular, it does not seem unlikely that the problem is already undecidable over binary alphabets:
    \begin{openproblem}
      Are the problems
      \problem
        [a binary alphabet $\Sigma$]
        {a (complete) \SAut $\mathcal{T} = (Q, \Sigma, \delta)$}
        {if $\mathscr{S}(\mathcal{T})$/$\mathscr{M}(\mathcal{T})$ free?}\noindent
      decidable?
    \end{openproblem}
    
    A similar question also poses itself for \autoref{thm:monoidFreePresentationIsUndecidable}:
    \begin{openproblem}
      Does the problem in \autoref{thm:monoidFreePresentationIsUndecidable} (the free presentation problem for automaton monoids) remain undecidable if we fix the underlying alphabet? What about the binary alphabet case?
    \end{openproblem}
    
    Similarly, we also have \autoref{thm:monoidFreePresentationIsUndecidable} only for the monoid case and it is natural to ask whether the analogue for automaton semigroups also holds:
    \begin{openproblem}
      Is the problem
      \problem{%
        a (complete) \SAut $\mathcal{T} = (Q, \Sigma, \delta)$%
      }{%
        is $\mathscr{S}(\mathcal{T}) \simeq Q^+$?
      }\noindent
      decidable?
      What happens if we restrict the alphabet?
    \end{openproblem}
    
    In \autoref{thm:cancellativityEquidivisibilityIsUndecidable}, we have also shown that it is not possible to test whether a given automaton semigroup (or monoid) is equidivisible. By Levi's lemma (\autoref{fct:leviLemma}) this is one of the two conditions that, together, are equivalent to a semigroup (monoid) being free while the other one is the existence of a (proper) length function. So, a natural question is whether we can test if a given automaton semigroup or monoid admits a (proper) length function:
    \begin{openproblem}
      Is the problem
      \problem{%
        a (complete) \SAut $\mathcal{T}$
      }{%
        does $\mathscr{S}(\mathcal{T})$ ($\mathscr{M}(\mathcal{T})$) admit a (proper) length function?
      }\noindent
      decidable?
    \end{openproblem}
    \noindent{}We highly suspect this problem to be undecidable and it seems likely that our construction can be adapted to show this.
    
    Finally, all of the above questions can also be investigated with respect to the activity of the given automaton (see \cite{sidki2000automorphisms} or e.\,g.\ \cite{waechter2024word} for definitions). We suspect that the freeness problem is decidable for bounded automaton monoids but that it is already undecidable for linear activity. Please note in this context that, while automata of polynomial activity cannot generate free groups \cite{sidki2004automata}, examples of bounded automaton semigroups generating free monoids do exist (see e.\,g.\ \cite{cavaleri2023class}).
    \begin{openproblem}
      At which level of the activity hierarchy (as defined in \cite{bartholdi2020hierarchy}; see also \cite{waechter2024word}) does the freeness problem for automaton monoids become undecidable?
    \end{openproblem}
    
    Of course, it also remains open whether the freeness problem for automaton groups \cite[7.2 b)]{grigorchuk2000automata} is decidable. As an intermediate step between semigroups/monoids and groups, one can also ask whether the semigroup/monoid generated by an invertible automaton is free. A complete \SAut $\mathcal{T} = (Q, \Sigma, \delta)$ is \emph{invertible} if, for every $p \in Q$ and $b \in \Sigma$, there are exactly one $a \in \Sigma$ and $q \in Q$ with $\trans{p}{a}{b}{q} \in \delta$. A complete and invertible \SAut is called a \emph{\GAut}.
    \begin{openproblem}
      Is the problem
      \problem{%
        a \GAut $\mathcal{T}$
      }{%
        is $\mathscr{S}(\mathcal{T})$ ($\mathscr{M}(\mathcal{T})$) a free semigroup (monoid)?
      }\noindent
      decidable?
    \end{openproblem}
  \end{section}
  
  \begin{section}{Acknowledgement}
    The first and the second author are members of the National Research Group GNSAGA (Gruppo Nazionale per le Strutture Algebriche, Geometriche e le loro Applicazioni) of Indam.
    
    Major parts of this word were conducted while the third author was affiliated with Politecnico di Milano where he was funded by the Deutsche Forschungsgemeinschaft (DFG, German Research Foundation) – 492814705. During later stages of this work, he moved to Universität des Saarlandes, partly funded by ERC grant 101097307. The listed affiliation is his current one where the extensions in this journal version were supported by the Engineering and Physical Sciences Research Council [grant number EP/Y008626/1].
  \end{section}
  
  \bibliographystyle{plain}
  \bibliography{references}
\end{document}